\newtheorem{theorem}{Theorem}
\newtheoremstyle{reduced}
  {6pt} 
  {6pt} 
  {} 
  {} 
  {\bfseries} 
  {.} 
  {.5em} 
  {} 
\newtheorem{lemma}[theorem]{Lemma}
\theoremstyle{reduced}
\newtheorem{assumption}{Assumption}
\newtheorem{remark}{Remark}
\newcommand*{\trans}{{\mathsf{T}}}
\newcommand*{\hermtr}{{\mathsf{H}}}
\renewenvironment{proof}[1][\proofname]{\par
  \vspace{-\topsep}
  \pushQED{\qed}%
  \normalfont
  \topsep3pt \partopsep3pt 
  \trivlist
  \item[\hskip\labelsep
        \itshape
    #1\@addpunct{.}]\ignorespaces
}{%
  \popQED\endtrivlist\@endpefalse
  \addvspace{2pt plus 0pt} 
}
\setlist[itemize]{noitemsep, topsep=1pt}
\setlist[itemize]{leftmargin=*}
\begin{document}
\title{Model-Free Learning of Two-Stage Beamformers\\for Passive IRS-Aided Network Design}

%

\author{Hassaan~Hashmi,~\IEEEmembership{Student Member,~IEEE,}
        Spyridon~Pougkakiotis, and~Dionysis~Kalogerias,~\IEEEmembership{Senior Member,~IEEE
        \vspace{-6pt}}
\thanks{H. Hashmi and D. Kalogerias are with the Department of Electrical Engineering, Yale University, New Haven, CT, 06520 USA (email: \{hassaan.hashmi, dionysis.kalogerias\}@yale.edu). 

S. Pougkakiotis is with the School of Science and Engineering, University of Dundee, Scotland, DD1 4HR UK (email: spougkakiotis001@dunde.ac.uk). 

A preliminary version of this work appeared in IEEE ICASSP 2023  \cite{zosga_conf}. This work is supported by a Microsoft gift and by the NSF under grant 2242215.}
}
\maketitle

\begin{abstract}
Electronically tunable metasurfaces, or Intelligent Reflecting Surfaces (IRSs), are a popular technology for achieving high spectral efficiency in modern wireless systems by shaping channels using a multitude of tunable passive reflecting elements. Capitalizing on key practical limitations of IRS-aided beamforming pertaining to system modeling and channel sensing/estimation, we propose a novel, fully data-driven Zeroth-order Stochastic Gradient Ascent (ZoSGA) algorithm for general two-stage (i.e., short/long-term), fully-passive IRS-aided stochastic utility maximization. 
ZoSGA learns long-term optimal IRS beamformers jointly with short-term optimal precoders (e.g., WMMSE-based) via minimal zeroth-order reinforcement and in a strictly model-free fashion, relying solely on the \textit{effective} compound channels observed at the terminals, while being independent of channel models or network/IRS configurations. Another remarkable feature of ZoSGA is being amenable to analysis, enabling us to establish a state-of-the-art (SOTA) convergence rate of the order of $\mathcal{O}(\sqrt{S}\epsilon^{-4})$ under minimal assumptions, where $S$ is the total number of IRS elements, and $\epsilon$ is a desired suboptimality target.
Our numerical results on a standard MISO downlink IRS-aided sumrate maximization setting establish SOTA empirical behavior of ZoSGA as well, consistently and substantially outperforming standard fully model-based baselines. Lastly, we demonstrate that ZoSGA can in fact operate \textit{in the field}, by directly optimizing the capacitances of a varactor-based electromagnetic IRS model (unknown to ZoSGA) on a multiple user/IRS, link-dense network setting, with essentially no computational overheads or performance degradation.
\end{abstract}
\vspace{-4pt}
\begin{IEEEkeywords}
6G, Intelligent Reflecting Surfaces (IRS/RIS), Two-stage Stochastic Programming, Zeroth-order Optimization, Model-Free Learning, Sumrate Maximization, Equivalent Circuit Model. 
\end{IEEEkeywords}

\vspace{-18pt}
\section{Introduction}\label{sec: intro}

\par The radical growth in the number of mobile and numerous other wireless devices, in particular those with communication modalities requiring high bandwidth and low latency connectivity such as virtual and augmented reality, tactile internet,  internet of things, industrial automation, etc., have pushed existing wireless communication systems to their performance limits. The forthcoming era requires seamless wireless connectivity, necessitating proactive research beyond 5G communications \cite{6g:guo2021enabling,6g:jiang2021road,6g:tataria20216g,6g:you2021towards}. 
\par Recently, 5G-enabling technologies such as mmWave communication, massive multi-input multi-output (MIMO) and dense networks have been vigorously investigated \cite{5g_enable:larsson2014massive, 5g_enable:gotsis2016ultradense, 5g_enable:andrews2014will,5g_enable:shafi20175g}. Still, deployment of both massive MIMO and dense networks incurs high installation/maintenance costs and energy consumption, while mmWaves exhibit physical limitations such as susceptibility to blockages and high propagation losses.
To compensate for such propagation losses, which are a characteristic of higher carrier frequencies, densely packed and highly directional mmWave antennas have been proposed \cite{5g:mmWave_fundamentals,5g:wang2018millimeter}. High directionality, combined with reduced scattering, attenuates mmWave signals in the non-line-of-sight (non-LOS) paths, thus blocking the signals. The reason lies in the physics of signal propagation: mmWaves exhibit a more prismatic propagation, i.e., they diffract less than microwave signals around obstacles \cite{5g:mmWave_challenges_opportunities}.

\begin{figure}[!t]
  \centering
  \centerline{\includegraphics[width=3.19
  in]{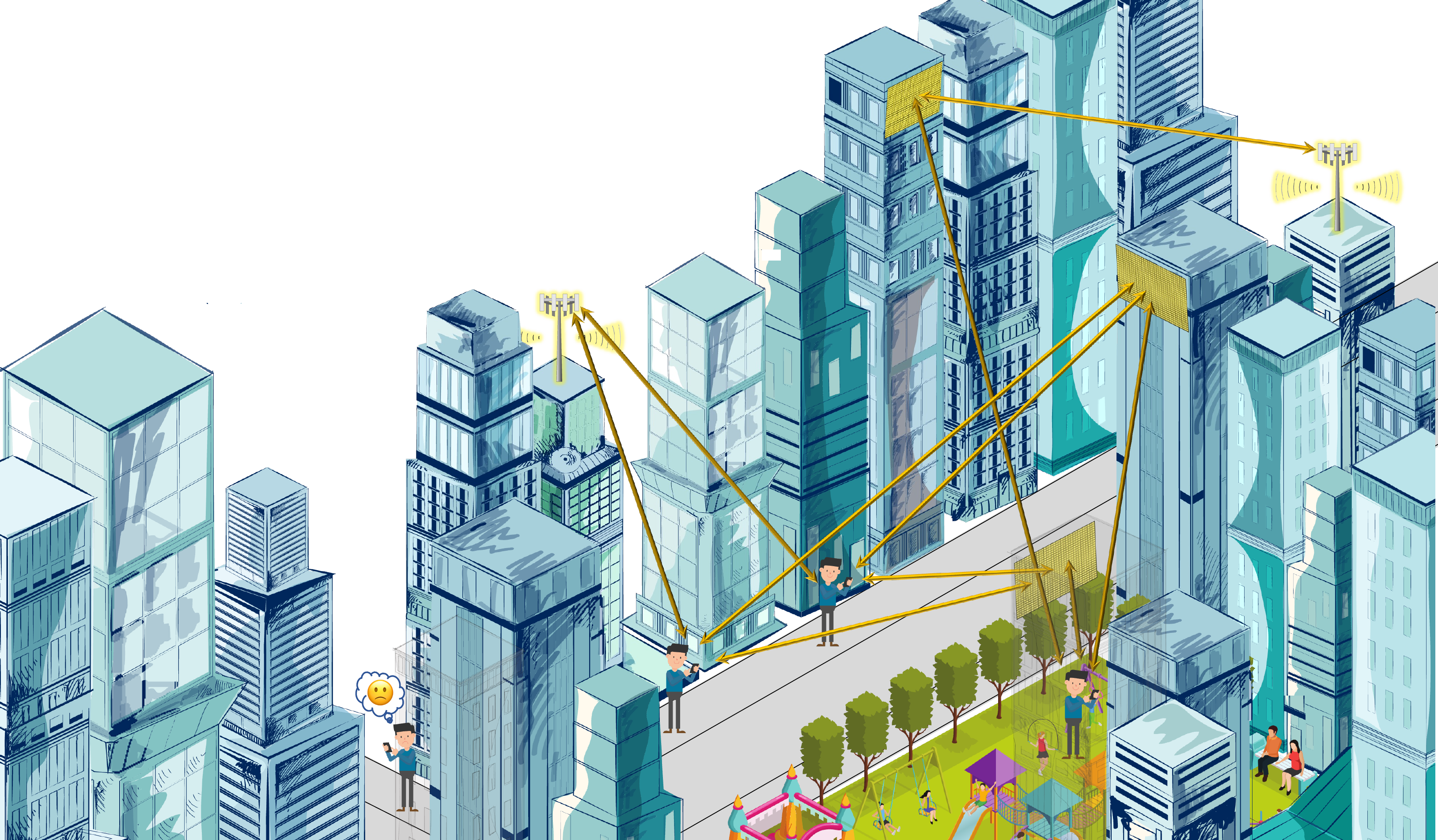}}
\vspace{4bp}
\caption{Concept of an IRS-aided Wireless Network.}
\label{fig:sketch}
\end{figure}

\par Conventional beamforming, which is an established technique for improving Quality-of-Service (QoS) in wireless communications, cannot fully compensate for such non-LOS losses. One brute force solution would be to deploy ultra dense networks, i.e., networks with a large number of small cells, with service ranges of tens to hundreds of meters, allowing higher frequency reuse rates \cite{5g:udn_kamel2016ultra, 5g:udn_valenzuela2018ultra}. With such low ranges, however, ultra-dense network deployment does not demonstrate economy of scale. Indeed, the consumption of energy increases sharply with the number of base stations. Additionally, such dense deployments would also exhibit acute signal interference patterns. 
\par While the aforementioned approaches aim to improve QoS, they may fail to simultaneously satisfy the data-rate, bandwidth, latency, spectral and energy efficiency requirements of 5G-and-beyond technologies. This has necessitated efforts to develop innovative technologies that could meet such requirements, ideally without requiring extra energy and/or deployment or computational costs. An emerging technology for scalably reducing non-LOS losses while circumventing several limitations related to underlying propagation environments is that of \textit{Intelligent Reflecting Surfaces (IRSs, or RISs)}. An IRS is a metasurface comprised of a planar array of passive reflecting elements with tunable parameters, such as phase-shifts and/or amplitude gains of incident signals \cite{irs_design:sievenpiper2003, irs_design:hum2005, kamoda2011pin_diodes, zhao2013varactor1, araghi2022varactor2, rana2023ris, chepuri2023isac}. A concept network in which users (or terminals) are linked with an Access Point (AP) and enjoy improved QoS from the utilization of IRSs is shown in Figure \ref{fig:sketch}.
\par In IRS-aided communications, the goal is to optimally tune the IRS elements along with other resources (such as AP precoders), so as to optimize a certain system utility. A standard setup is that of a weighted sumrate utility in a multi-user multiple-input single-output (MISO) downlink scenario, as depicted in Figure \ref{fig:sketch}, where the goal is to maximize the total downlink rate of a number of users/terminals actively serviced by an AP, while \textit{passively} aided by one or multiple IRSs \cite{sca:guo2020larsson,sca:zhao2020tts, sca:zhao2021qos, sca:yang2021sca}. For this standard setting, a core objective is to jointly optimize the IRS parameters and AP precoders under certain power constraints, noting that AP precoders are usually continuous-valued, while IRS phase-shifts can be either quantized \cite{kamoda2011pin_diodes}, or continuously varying \cite{zhao2013varactor1,araghi2022varactor2, rana2023ris, chepuri2023isac}.
\par \textit{\textbf{Prior Art on IRS Optimization:}} Optimizing IRS parameters is a challenging task, particularly due to three major bottlenecks. First, a sufficiently accurate channel model will most certainly be unknown, and even if known, it depends heavily on network structure and the surrounding environment. Second, by realistically assuming that IRSs are passive components of a wireless network, it might not be possible to continuously and reliably estimate the channel-state-information (CSI) of all the intermediate channels, e.g., from APs to IRSs, and from IRSs to users/terminals \cite{ch_est:zheng2022survey, issues:munochiveyi2021}. While some recent works \cite{issues:liu2020matrix, issues:guo2022cascadest} have investigated methods for cascaded channel estimation with passive IRSs, they typically rely on specific a priori known channel models. It is thus clear that IRS optimization benefits from being free from the need of knowing such detailed CSI. Third, any reasonable IRS phase-shift tuning approach --which should only require \textit{effective} CSI (i.e., the conventional compound channels observed at the end terminals)--, should target as infrequent IRS parameter updates as possible, at least in long-term operation mode. This is in contrast with earlier approaches that consider fully reactive IRS infrastructure demanding resource-wasteful, perpetual IRS control \cite{icsi:wu2019intelligent, icsi:wu2019towards}. Addressing these challenges constitutes an active area of research. In fact, one or more of these bottlenecks are persistent in most available strategies for IRS optimization in wireless networks (e.g., \cite{issues:munochiveyi2021, issues:faisal2022}).
\par Recently, model-based methods relying on some flavor of stochastic successive convex optimization (SSCO) \cite{sca:hong2015decomposition,sca:scutari2013decomposition,sca:yang2016parallel} for weighted sumrate optimization have gained substantial traction. These methods operate over \textit{two time-scale} customized protocols, where reactive precoding vectors at the AP(s) are optimized on a shorter time-scale and non-reactive (static, "long-term") IRS beamformers are optimized on a longer time-scale \cite[Section II-C]{sca:guo2020larsson},\cite{sca:zhao2020tts, sca:yang2021sca, sca:zhao2021qos}. Apart from relying on SSCO, which operates on convex surrogates of the original problem, these model-based approaches require complete knowledge of the network structure and the channel model, along with accurate intermediate CSI (statistic) estimation \cite{ch_est:jian2020modified, ch_est:chen2021low, ch_est:lin2021tensor, ch_est:swindlehurst2022channel}. Such estimates are difficult to obtain because, as an ultimately passive device, an IRS cannot (or should not) transmit and receive pilot signals \cite{ch_est:zheng2022survey}. Consequently, these methods must rely on active sensing at the IRSs, demanding expensive and wasteful IRS implementations. Lastly, any change in the network or channel model incurs a high environment remodeling cost, while the modeling complexity increases dramatically with an increase in the number of intermediate channels.
\par To avoid such limitations, researchers have explored machine learning (ML) methods to optimize IRS-aided networks. As a parallel to channel estimation by hand, offline ML methods have been explored to approximate true CSI models from labelled datasets using function approximators (FAs) \cite{offline:taha2019, offline:balevi2020, offline:yang2021, offline:zhang2021}. These offline approaches are brittle in that the learned models are unable to adapt to slight changes in the network/channel behavior, induced either by movement of the users/terminals, or due to potential environmental factors. Deep Reinforcement Learning (DRL) methods, on the other hand, are adaptive policy learning methods which have been used for joint beamforming optimization. In particular, deep Q-learning based methods have been explored, assuming quantized IRS phase-shifts \cite{dqn:mismar2019, dqn:taha2020, dqn:lee2020}. For continuous phase-shifts, off-policy policy gradient methods have also been explored \cite{ddpg:huang2020, ddpg:yang2020, ddpg:xie2021, ddpg:evmorfos2022}, primarily based on the deep deterministic policy gradient (DDPG) algorithm \cite{ddpg:lillicrap2015}. DRL methods are designed to be \textit{end-to-end}. As such, all intermediate steps such as CSI estimation, channel modeling and beamforming optimization can be \textit{offloaded} to the FAs, the latter approximating state value functions to jointly model the entire optimization task. 
\par Nonetheless, without expert domain knowledge, defining those FAs --e.g., deep neural networks (DNNs)-- often increases the problem complexity; most frequently, FAs are considered as black-box data-driven models, resulting in non-interpretability and lack of robustness. On the other hand, explicit utilization of domain knowledge within the context of FAs often results in overfitting, thus limiting the versatility and transferability of DRL models to distinct environments. Further, such learning-based methods incur increased power/resource consumption as they primarily consider reactive IRS operation, where each model output (phase-shift element values) directly depends on the observed CSI, hence resulting in perpetual IRS control. Combinations of the mixed-timescale iterative approach with ML via deep unfolding models \cite{unfold:liu2021} also suffer from similar limitations.
\par \textit{\textbf{Contributions:}} We develop a \textit{Zeroth-order Stochastic Gradient Ascent (ZoSGA)} algorithm for tackling \textit{fully passive} IRS-assisted utility maximization in a wireless communication setting. We consider two-stage stochastic programming formulations of the problem, in which the first-stage problem consists of an on-average (long-term) optimization of IRS-parameters, while the second-stage problem seeks for optimal instantaneous (short-term) beamformers (e.g., of an AP) associated with a given network instance (occurring every time new --random-- CSI is revealed). ZoSGA tacitly exploits WMMSE \cite{wmmseShi2011} as a standard method for solving the (deterministic) second-stage problems, while remaining agnostic to the channel dynamics or network topology, thus operating in a completely model-free manner. The algorithm relies on minimal system probing and terminal-end effective CSI, both conventionally available (even approximately) regardless of the number or spatial configuration of the IRSs, in sharp contrast to model-based approaches \cite{sca:guo2020larsson, sca:zhao2020tts, sca:yang2021sca, sca:zhao2021qos}. 
\par ZoSGA does not rely on function approximations (unlike DRL methods  \cite{dqn:mismar2019, dqn:taha2020, dqn:lee2020, ddpg:evmorfos2022, ddpg:huang2020, ddpg:yang2020, ddpg:xie2021}), and can be run in real-time since, at each time step, it only requires to probe the network twice (to obtain a sample zeroth-order gradient), greatly improving upon SSCO-based methods, such as TTS-SSCO \cite{sca:zhao2020tts}, which utilize internal (model-based) sampling for approximating stochastic gradients, and --prone to error-- strongly convex surrogate utilities (despite the inherent nonconvexity of the associated optimization problem). Assuming continuous-valued IRS phase-shifts (which are feasible in practice \cite{zhao2013varactor1, araghi2022varactor2, rana2023ris, chepuri2023isac}), ZoSGA treats IRSs as fully passive tunable network elements (without sensing capabilities, extra hardware, or special scheduling requirements), and can be readily applied to a wide range of different wireless network settings. Being model-free, ZoSGA avoids the inherently nonconvex unit modulus constraints associated with the IRS phase-shift elements, by utilizing the (unknown) polar representation of the effective channels. This has far reaching benefits, as we shall see in Section \ref{subsec:Varactor IRSs}, where we simulate \textit{physical} IRS models.
\par After discussing the problem of interest, providing our assumptions and some preliminary technical results (Section \ref{sec: Preliminaries}), we develop and analyze ZoSGA (Section \ref{sec: conv anal}) under a set of general assumptions (and regularity conditions), and we establish a state-of-the-art (SOTA) convergence rate of the order of $\mathcal{O}\left(\sqrt{S}\epsilon^{-4}\right)$, where $S$ is the total number of IRS tunable parameters and $\epsilon$ is a desirable sub-optimality target. We note that our theoretical analysis is novel and involves a minimal set of assumptions, covering a wide range of realistic settings and shedding light into the practical behavior of ZoSGA, fully characterizing its convergence. Unlike most alternative approaches, we do not consider convex approximations of the associated two-stage stochastic problem, and deal with the inherent nonconvexity of the problem by utilizing key results from optimization theory. At the same time, we bypass (without utilizing any approximation) the nonconvexity associated with unit modulus constraints, obtaining a model-free method operating in a standard Euclidean setting. 
\par Specializing to the case of sumrate maximization on a MISO downlink scenario (Section \ref{sec: case study}), we show that most of the technical assumptions imposed by the theory are automatically satisfied, except for some mild regularity conditions on the channel which cannot be avoided. We then numerically demonstrate (Section \ref{sec: Simulations}) that ZoSGA exhibits SOTA performance in a wide-range of scenarios, substantially outperforming the two-time scale method recently proposed in \cite{sca:zhao2020tts}, which is a standard \emph{model-based} SSCO-type SOTA baseline for the problem under consideration. Despite the latter method assuming full knowledge of channel models and spatial network configurations, the model-agnostic ZoSGA reliably learns near-optimal solutions yielding significantly better QoS. Importantly, we also demonstrate the applicability of ZoSGA in a realistic \textit{physical} IRS setting, in which the algorithm has only \textit{indirect access} to the amplitudes and phases of the IRS elements, by tuning the capacitances of certain varactor diodes controlling each IRS element; such IRS tuning \textit{in the field} is a particularly unique feature of ZoSGA, not enjoyed by any other IRS tuning approach in the literature.  Fully reproducible source code of simulation results can be found  \href{https://github.com/hassaanhashmi/zosga}{here}.
\paragraph*{Notation} Let $\|\cdot\|$ denote the induced norm of an associated inner-product space, defined as $\|\bm{x}\| \coloneqq \sqrt{\langle \bm{x}, \bm{x} \rangle}$ for any $\bm{x} \in \mathbb{F}^n$, where $\mathbb{F}$ is a field (assuming that $\mathbb{F} = \mathbb{R}$ or $\mathbb{F} = \mathbb{C}$). In case of a complex vector we use the Hermitian inner product. In case of a matrix we assume that the induced norm is utilized. We assume a complete base probability space $(\Xi,\mathscr{F},P)$, and use ``a.e." to denote ``almost every(where)". For $p \in [1,\infty)$, we use $\mathcal{Z}_p \equiv \mathcal{L}_p(\Xi,\mathscr{F},P;\mathbb{R})$ to denote the space of all $\mathscr{F}$-measurable functions $\phi: \Xi \rightarrow \mathbb{R}$, such that $\int_{\Xi} \lvert \phi \rvert^p dP < \infty$. Given $f \colon \mathbb{R}^n \rightarrow \mathbb{R}$ and $\rho > 0$, we say that $f$ is $\rho$-weakly convex (resp. $\rho$-weakly concave) if $f(\cdot) + \frac{\rho}{2}\|\cdot\|^2$ (resp. $-f(\cdot) + \frac{\rho}{2}\|\cdot\|^2$) is convex.

\vspace{-4pt}
\section{Problem Formulation and Preliminary Results} \label{sec: Preliminaries}
\subsection{Problem Formulation}
\label{subsec: two-stage program}
In an IRS-aided network, the effective channels observed at the terminals of the communication task are treated as functions of both the intermediate channels as well the phase-shift vectors of the IRSs. Thus, we may think of the IRSs as \textit{network-defining} parameters. Indeed, each instance of their phase-shift elements produces a different wireless network altogether. Taking this into account, the objective here is to find an optimal instance of such a wireless network which maximizes a given terminal QoS utility, in accordance with all the underlying physical dynamics.
\par Traditionally, we only optimize the precoding vectors (e.g., at an AP) to maximize a given QoS metric function, and these precoding vectors are optimized for a particular wireless network instance. However, in an IRS-aided wireless network, tuning the IRSs essentially changes the network structure and we have to re-optimize the precoders, responding to this new network.
\par We assume dynamic (i.e. reactive) precoders $\bm{W}$, while (realistically) viewing the IRSs as static (i.e. non-reactive during operation) elements with tunable parameters $\bm{\theta}$ (such as in \cite{sca:guo2020larsson, sca:zhao2020tts, sca:zhao2021qos, sca:yang2021sca}), which encode any propagation feature of the IRSs that is learnable. For example, $\bm{\theta}$ can represent amplitudes and/or phases, or any other tunable element of an IRS (e.g., \textit{physical} varactor tunable capacitances, see \cite{zhao2013varactor1,araghi2022varactor2}).
\par Under this setting, beamforming optimization can be formulated as a \textit{stochastic two-stage problem}. The first-stage problem seeks for an optimal (say in expectation) wireless network by tuning the (static) IRSs' parameters $\bm{\theta}$ assuming optimal instantaneous precoders on random effective channels. The second-stage problem seeks those optimal precoders $\bm{W}$ given a (possibly optimal) network instance set by already fixing the IRSs. 
\par We hereafter assume that the IRSs and the precoders maximize the same QoS utility function. This is an intuitive and standard choice in practical applications, which enables the development of very efficient solution methods. Formally, we are interested in two-stage problems of the form
\begin{equation} \label{eqn: two-stage problem} \tag{2SP}
\begin{split}
    \boxed{
    \max_{\boldsymbol{\theta} \in \Theta}  \mathbb{E}\left\{\max_{\boldsymbol{W} \in \mathcal{W}}{F}\left(\bm{W},\boldsymbol{H}(\boldsymbol{\theta},\omega)\right) \right\},}
    \end{split}
\end{equation}
where $\mathcal{W}$ is a (known) compact set of feasible dynamic beamformers $\boldsymbol{W}$, and $\Theta\subset \mathbb{R}^S$, is a (known) convex and compact set of feasible IRS parameter values, where $S$ denotes the number of real-valued parameters of the complex-valued phase-shift elements (e.g., amplitudes and phases). The utility function $F \colon \mathbb{C}^{M_U}\times \mathbb{C}^{M_U} \rightarrow \mathbb{R}$ is a (known) function of the precoding vectors $\boldsymbol{W}$ as well as the (unknown) observed effective channels $\boldsymbol{H} \colon \Theta \times \Omega \rightarrow \mathbb{C}^{M_U}$, which in turn are functions of both the IRS parameters and any intermediate random channels (a ``state of nature"), denoted as $\omega \colon \Xi \rightarrow \Omega$ (the statistics of which are unknown). The random vector $\omega$ represents anything that is unknown about the underlying communication system, such as propagation or (compound) interference patterns, internal channel states, or in general the underlying intermediate communication channels. The (observed) effective channels $\bm{H}(\cdot,\omega)$ are assumed to have unknown dynamics, and we are only allowed to evaluate them at specific IRS parameter instances $\bm{\theta}\in \Theta$.

As we discuss in Section \ref{subsec: technical results}, the resulting stochastic bilevel program assumes a common function for the inner- and outer- level programs, allowing for first-order optimization, without the need of computing any second-order information (e.g. as in \cite{Dempe2009}). 
\subsection{Assumptions}
\par In what follows, we provide certain regularity assumptions on \eqref{eqn: two-stage problem} and subsequently prove certain core technical results, allowing us to derive the proposed optimization scheme.
\par \textit{Second-stage problem:} Given some realization $\omega \in \Omega$, and some $\bm{\theta} \in \Theta$, the  second-stage problem reads
\begin{equation} \label{eqn: second-stage problem} \tag{SSP}
\max_{\bm{W} \in \mathcal{W}} \,\, \left\{G(\bm{W},\bm{\theta},\omega)   \triangleq F\left(\bm{W},\bm{H}(\bm{\theta},\omega)\right)\right\}. 
\end{equation}
\noindent Notice that \eqref{eqn: second-stage problem} is deterministic, since we are required to solve this after the state of nature $\omega$ has been revealed.
\par \textit{First-stage problem:} The first-stage problem, which is equivalently given in \eqref{eqn: two-stage problem}, can be provisionally written as
\begin{equation} \label{eqn: first-stage problem} \tag{FSP}
\max_{\bm{\theta} \in \Theta} \,\, \{ f(\bm{\theta}) \triangleq \mathbb{E}\left\{F\left(\bm{W}^*(\bm{\theta},\omega),\bm{H}(\bm{\theta},\omega)\right) \right\}\} 
\end{equation}
\noindent for some $\bm{W}^*(\bm{\theta},\omega) \in \arg\max_{\bm{W} \in \mathcal{W}} F\left(\bm{W},\bm{H}(\bm{\theta},\omega)\right)$. In what follows, we enforce certain regularity conditions on \eqref{eqn: two-stage problem}. 
\begin{assumption} \label{assumption: two-stage problem}
The following conditions are in effect:
\begin{enumerate}[align=parleft,labelsep=0.75cm]
    \item[\textnormal{\textbf{(A1)}}] The function $F \colon \mathbb{C}^{M_U} \times \mathbb{C}^{M_U} \rightarrow \mathbb{R}$ is twice continuously (real) differentiable;
    \item[\textnormal{\textbf{(A2)}}] The sets $\Theta$ and $\mathcal{W}$ are compact, and $\Theta$ is also convex;
    \item[\textnormal{\textbf{(A3)}}] The function $\bm{H}(\cdot,\omega)$ is $B_H$-uniformly bounded on $\Theta$ and twice continuously differentiable on an open set $\mathcal{U} \supset \Theta$, for a.e. $\omega \in \Omega$. Moreover, there exist numbers $L_{H,0}$, $L_{H,1}$, such that $\bm{H}(\cdot,\omega)$ is $L_{H,0}$-Lipschitz continuous with $L_{H,1}$-Lipschitz gradients on $\Theta$ for a.e. $\omega \in \Omega$;
    \item[\textnormal{\textbf{(A4)}}] There exists a positive function $\widetilde{\rho}(\cdot) \in \mathcal{Z}_1$, such that $\max_{\bm{W} \in \mathcal{W}} F(\bm{W},\bm{H}(\cdot,\omega))$ is $\widetilde{\rho}(\omega)$-weakly concave on $\Theta$;
    \item[\textnormal{\textbf{(A5)}}] $F(\bm{W}^*(\bm{\theta},\cdot),\bm{H}(\bm{\theta},\cdot) )\in \mathcal{Z}_2$ is bounded below for all $\bm{\theta} \in \Theta$ and any  $\bm{W}^*(\bm{\theta},\omega) \in \arg\max_{\bm{W} \in \mathcal{W}} F(\bm{W},\bm{H}(\bm{\theta},\omega))$, and we can draw independent and identically distributed (i.i.d.) samples from the law of $\omega$.
\end{enumerate}
\end{assumption}
\begin{remark} \label{remark: assumption on second-stage problem}
Let us observe that Assumption \textnormal{\ref{assumption: two-stage problem}} is very mild, and is informed by our application, i.e. two-stage beamforming for passive IRS-aided network design. Conditions \textnormal{\textbf{(A1)}}--\textnormal{\textbf{(A2)}} are standard and are most often met in practical settings involving IRS-aided wireless communication systems. In particular, twice-continuous (real) differentiability of $F(\cdot,\cdot)$ is standard and subsumes several utility functions of interest (three popular examples are the weighted sumrate utility, the proportional fairness utility, or the harmonic-rate utility; see \cite{Prop_fairness}). Note that \emph{real differentiability} refers to differentiability of the real and imaginary parts of $F$ (see \textnormal{\cite[Section 3.2]{arXiv:Kreutz-Delgado}}). Furthermore, as already noted, by utilizing the \emph{polar form} of the function $\bm{H}(\cdot,\omega)$ (which we do in this work), $\Theta$ (typically) represents a set of phases and amplitudes which can be chosen to be real, compact and convex, without loss of generality. Finally, compactness of $\mathcal{W}$ is also standard, since it typically reflects constraints relating to the available power of the wireless communication system.
\par Condition \textnormal{\textbf{(A3)}} ensures that the compositional function of interest is well-defined and retains its properties on an open set containing $\Theta$, while the bound and Lipschitz random functions associated with $\bm{H}(\cdot,\omega)$ are uniformly bounded in $\omega$. The latter condition could potentially be relaxed (e.g. assuming bounded variance of random functions $B_{H,0}(\omega)$, $L_{H,0}(\omega),\ L_{H,1}(\omega)$), but is imposed for brevity in exposition.
\par \textnormal{\textbf{(A4)}} is a very general condition since weak concavity subsumes a large class of functions (e.g. all Lipschitz smooth functions or all twice continuously differentiable functions on a compact set are weakly concave; see also \textnormal{\cite[Section 2]{SIAMOpt:Davis}} for some additional examples). In particular, it implies that for a.e. $\omega \in \Omega$, there exists $\widetilde{\rho}(\omega) > 0$ such that $\max_{\bm{W} \in \mathcal{W}} F(\bm{W},\bm{H}(\cdot,\omega)) - \frac{\widetilde{\rho}(\omega)}{2}\|\cdot\|^2$ is concave on $\Theta$. Specialized, albeit technical, conditions ensuring that this holds for the objective in \textnormal{\eqref{eqn: first-stage problem}} will be discussed later in Section \textnormal{\ref{sec: case study}}.
\par Finally, condition \textnormal{\textbf{(A5)}} ensures that $f$ is well-defined and bounded on $\Theta$. Notice that the same would hold under the weaker condition $F(\bm{W}^*(\bm{\theta},\cdot),\bm{H}(\bm{\theta},\cdot) ) \in \mathcal{Z}_1$, however \textnormal{\textbf{(A5)}} is utilized later in Lemma \textnormal{\ref{lemma: bound on sample gradient variance}}.
\end{remark}

\vspace{-13pt}
\subsection{Technical Results} \label{subsec: technical results}
\par Since ${F}(\bm{W},\cdot)$ takes a complex input, we need to devise an appropriate gradient generalization for it. To that end, we utilize the so-called \emph{Wirtinger calculus} (see \cite{arXiv:Kreutz-Delgado}). In the following lemma we derive the full \textit{compositional} gradient of ${F}(\bm{W},\bm{H}(\bm{\theta},\omega))$ by following the developments in \cite[Section 4]{arXiv:Kreutz-Delgado}.
\begin{lemma} \label{lemma: gradient of compositional}
For every $\bm{\theta} \in \Theta$, $\bm{W} \in \mathcal{W}$ and a.e. $\omega\in \Omega$, the gradient of ${F}\left(\bm{W},\bm{H}(\bm{\theta},\omega)\right)$ with respect to $\bm{\theta}$ reads
\begin{equation} 
    \begin{split}
& \nabla_{\bm{\theta}} F\left(\bm{W},\bm{H}(\bm{\theta},\omega)\right)  \\ &\quad =\ 2 \nabla_{\bm{\theta}} \Re\left(\bm{H}(\bm{\theta},\omega)\right)\Re\left(\frac{\partial^{\circ}}{\partial \bm{z}} F(\bm{W},\bm{z}) \bigg\vert_{\bm{z} = \bm{H}(\bm{\theta},\omega)} \right)^\top \\  & \qquad + 2 \nabla_{\bm{\theta}} \Im\left(\bm{H}(\bm{\theta},\omega)\right)\Re\left(j\frac{\partial^{\circ}}{\partial \bm{z}} F(\bm{W},\bm{z}) \bigg\vert_{\bm{z} = \bm{H}(\bm{\theta},\omega)} \right)^\top, 
\end{split} \label{eqn: Wirtinger gradient of the compositional objective}
\end{equation}
\noindent where $\frac{\partial^{\circ}}{\partial \bm{z}}(\cdot)$ is the Wirtinger cogradient operator.
\end{lemma}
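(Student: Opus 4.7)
The plan is to reduce the claim to a standard real multivariate chain rule applied to the real and imaginary parts of $\bm{H}$, and then re-encode the resulting partial derivatives in Wirtinger notation. Concretely, for fixed $\bm{W}\in\mathcal{W}$ and a.e.\ $\omega\in\Omega$, I would write $\bm{z}=\bm{H}(\bm{\theta},\omega)=\bm{x}+j\bm{y}$ with $\bm{x}=\Re(\bm{H}(\bm{\theta},\omega))$ and $\bm{y}=\Im(\bm{H}(\bm{\theta},\omega))$, and regard $\bm{\theta}\mapsto F(\bm{W},\bm{H}(\bm{\theta},\omega))$ as the composition of the real-differentiable map $(\bm{x},\bm{y})\mapsto F(\bm{W},\bm{x}+j\bm{y})$ (guaranteed by (A1)) with the real-differentiable map $\bm{\theta}\mapsto(\bm{x}(\bm{\theta}),\bm{y}(\bm{\theta}))$ (guaranteed by (A3)). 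Applying the ordinary Euclidean chain rule then yields
\begin{equation*}
\nabla_{\bm{\theta}}F(\bm{W},\bm{H}(\bm{\theta},\omega))
= \nabla_{\bm{\theta}}\bm{x}\,\bigl(\partial_{\bm{x}}F\bigr)^{\top}
+ \nabla_{\bm{\theta}}\bm{y}\,\bigl(\partial_{\bm{y}}F\bigr)^{\top},
\end{equation*}
where $\partial_{\bm{x}}F,\partial_{\bm{y}}F$ are evaluated at $\bm{z}=\bm{H}(\bm{\theta},\omega)$.

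Next, I would invoke the defining identity of the Wirtinger cogradient \cite[Section 3]{arXiv:Kreutz-Delgado}: for any real-differentiable $F$ regarded as a function of $\bm{z}$ and its conjugate, one has $\frac{\partial^{\circ}}{\partial\bm{z}}F=\tfrac{1}{2}\bigl(\partial_{\bm{x}}F-j\,\partial_{\bm{y}}F\bigr)$. Since $F$ is real-valued, $\partial_{\bm{x}}F$ and $\partial_{\bm{y}}F$ are real row vectors, so taking real parts gives
\begin{equation*}
\Re\!\left(\tfrac{\partial^{\circ}}{\partial\bm{z}}F\right)=\tfrac{1}{2}\partial_{\bm{x}}F,\qquad
\Re\!\left(j\,\tfrac{\partial^{\circ}}{\partial\bm{z}}F\right)=-\Im\!\left(\tfrac{\partial^{\circ}}{\partial\bm{z}}F\right)=\tfrac{1}{2}\partial_{\bm{y}}F.
\end{equation*}
Substituting these two identities (multiplied by $2$) into the chain-rule expression above recovers the claimed formula \eqref{eqn: Wirtinger gradient of the compositional objective} exactly.

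The only genuine subtlety is bookkeeping: one has to pin down the row/column conventions used for the Jacobians $\nabla_{\bm{\theta}}\Re(\bm{H})$ and $\nabla_{\bm{\theta}}\Im(\bm{H})$ (treated as $S\times M_U$ real matrices, consistent with the transpose in the statement), and pick the sign/conjugate convention for $\frac{\partial^{\circ}}{\partial \bm{z}}$ that matches \cite{arXiv:Kreutz-Delgado}. Once these are fixed, no real analytic obstacle remains: (A1) supplies $C^{1}$ smoothness of $F$ in $(\bm{x},\bm{y})$, (A3) supplies $C^{1}$ smoothness of $\bm{H}(\cdot,\omega)$ on $\Theta$ for a.e.\ $\omega$, and the statement follows from the classical chain rule together with the Wirtinger-cogradient identities. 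Thus the proof is essentially a short calculation in two steps: real chain rule, then Wirtinger re-encoding.
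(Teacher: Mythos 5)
Your proof is correct, but it takes a different (and more elementary) route than the paper's. The paper works entirely inside Kreutz-Delgado's Wirtinger formalism: it complexifies the \emph{domain} variable by writing $\bm{z}=\bm{x}+j\bm{y}$ with $\bm{H}(\bm{z},\omega)=\bm{H}(\bm{x},\omega)$ constant in $\Im(\bm{z})$, applies the Wirtinger chain rule to the composition (producing a cogradient term and a conjugate-cogradient term), computes $\frac{\partial^{\circ}}{\partial\bm{z}}\bm{H}$ and $\frac{\partial^{\circ}}{\partial\bm{z}}\overline{\bm{H}}$ explicitly, and only then invokes real-valuedness of $F$ to collapse the two conjugate terms into $2\Re(\cdot)$. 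You instead apply the ordinary Euclidean chain rule to the composition $\bm{\theta}\mapsto(\Re\bm{H},\Im\bm{H})\mapsto F$, and use real-valuedness at an earlier point, namely to read off $\Re\bigl(\frac{\partial^{\circ}}{\partial\bm{z}}F\bigr)=\tfrac{1}{2}\partial_{\bm{x}}F$ and $\Re\bigl(j\frac{\partial^{\circ}}{\partial\bm{z}}F\bigr)=\tfrac{1}{2}\partial_{\bm{y}}F$ directly from the definition of the cogradient; substituting these recovers the stated identity. Both arguments are sound and use the same ingredients (\textbf{(A1)}, \textbf{(A3)}, real-valuedness of $F$, the definition of $\frac{\partial^{\circ}}{\partial\bm{z}}$). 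Your version avoids the slightly artificial step of embedding the real parameter $\bm{\theta}$ into $\mathbb{C}^{S}$ and buys a shorter, purely real-variable computation; the paper's version buys consistency with the cited Wirtinger chain-rule machinery, which makes the provenance of each factor in the final formula (in particular the $2\Re(\cdot)$ structure) more transparent. Your caveat about fixing Jacobian and sign conventions is exactly the right thing to flag, and once those are pinned down as in the statement there is no remaining gap.
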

\begin{proof}
    For a complete proof, see Appendix \ref{apdx: Wirtinger gradient}.
\end{proof}
We proceed by proving that, for a.e. $\omega \in \Omega$, the Wirtinger cogradient of $F$, evaluated at some $\bm{z} = \bm{H}(\bm{\theta},\omega)$ for any $\bm{\theta} \in \Theta$, is bounded by a positive constant, independent of $\omega$.
\begin{lemma} \label{lemma: boundedness of Wirtinger gradient}
Given Assumption  \textnormal{\ref{assumption: two-stage problem}}, there exists a constant $B_F>0$ such that, for all $(\bm{\theta},\bm{W}) \in \Theta \times \mathcal{W}$, and a.e. $\omega \in \Omega$, 
    \[\left\|\frac{\partial^{\circ}}{\partial \bm{z}} F(\bm{W},\bm{z})\big\vert_{\bm{z} = \bm{H}(\bm{\theta},\omega)}\right\| \leq B_F.\]
\end{lemma}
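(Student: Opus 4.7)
The plan is to exploit continuity of the Wirtinger cogradient together with compactness of the set in which the pair $(\bm{W},\bm{H}(\bm{\theta},\omega))$ lives, uniformly in $\omega$. Specifically, I would observe that by assumption \textbf{(A1)}, $F$ is twice continuously (real) differentiable in both arguments, so in particular the Wirtinger cogradient $\frac{\partial^{\circ}}{\partial \bm{z}} F(\bm{W},\bm{z})$, being a linear combination of first-order real partial derivatives of $F$ with respect to the real and imaginary parts of $\bm{z}$, is a continuous function of $(\bm{W},\bm{z}) \in \mathbb{C}^{M_U}\times\mathbb{C}^{M_U}$.

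Next I would build an $\omega$-independent compact set on which to maximize. By assumption \textbf{(A3)}, $\bm{H}(\cdot,\omega)$ is $B_H$-uniformly bounded on $\Theta$ for a.e.\ $\omega \in \Omega$, that is, $\|\bm{H}(\bm{\theta},\omega)\| \leq B_H$ for every $\bm{\theta}\in\Theta$ and a.e.\ $\omega$. Hence, whenever $(\bm{\theta},\bm{W})$ ranges over $\Theta\times\mathcal{W}$ and $\omega$ ranges outside a $P$-null set, the pair $(\bm{W},\bm{H}(\bm{\theta},\omega))$ lies in the set
\[
K \;\coloneqq\; \mathcal{W} \,\times\, \{\bm{z}\in\mathbb{C}^{M_U} : \|\bm{z}\|\leq B_H\}.
\]
By assumption \textbf{(A2)}, $\mathcal{W}$ is compact, and the closed ball in $\mathbb{C}^{M_U}$ is compact as well; so $K$ is compact as a product of two compact sets.

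Combining these two observations, the continuous map $(\bm{W},\bm{z})\mapsto \bigl\|\frac{\partial^{\circ}}{\partial \bm{z}} F(\bm{W},\bm{z})\bigr\|$ attains its maximum on the compact set $K$ by the extreme value theorem. Setting
\[
B_F \;\coloneqq\; \max_{(\bm{W},\bm{z})\in K} \left\|\frac{\partial^{\circ}}{\partial \bm{z}} F(\bm{W},\bm{z})\right\| \;<\; \infty,
\]
I immediately get that for every $(\bm{\theta},\bm{W})\in\Theta\times\mathcal{W}$ and a.e.\ $\omega\in\Omega$, $\|\frac{\partial^{\circ}}{\partial \bm{z}} F(\bm{W},\bm{z})|_{\bm{z}=\bm{H}(\bm{\theta},\omega)}\|\leq B_F$, and crucially this $B_F$ does not depend on $\omega$ because the bound $B_H$ on $\bm{H}$ did not.

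Overall, this is a short compactness-plus-continuity argument and I do not anticipate a serious obstacle; the only subtle point worth being explicit about is that the $\omega$-independence of $B_F$ follows from the \emph{uniform} (in $\omega$) bound on $\bm{H}(\cdot,\omega)$ provided by \textbf{(A3)}, which is exactly why assumption \textbf{(A3)} is stated with an $\omega$-free constant $B_H$ rather than, e.g., an integrable random bound.
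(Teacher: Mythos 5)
Your proof is correct and follows essentially the same route as the paper: both use the uniform (in $\omega$) bound on $\bm{H}(\cdot,\omega)$ from \textbf{(A3)} to confine the pair $(\bm{W},\bm{z})$ to an $\omega$-independent compact set, and then invoke continuity of the first-order real partials of $F$ together with the extreme value theorem. The only cosmetic difference is that the paper bounds $\frac{\partial}{\partial \Re(\bm{z})}F$ and $\frac{\partial}{\partial \Im(\bm{z})}F$ separately and then bounds the cogradient via its definition, whereas you apply the extreme value theorem directly to the (continuous) norm of the cogradient; these are equivalent.
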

\begin{proof}
\par From \textbf{(A3)} we obtain that the sets $\bm{H}(\Theta,\omega)$, for $\omega \in \Omega$, are uniformly (in $\omega$) compact. Thus, we can find a compact set $\mathcal{V}$, such that $\bm{H}(\Theta,\omega) \subset \mathcal{V}$ for a.e. $\omega \in \Omega$. From \textbf{(A2)} we also have that $\mathcal{W}$ compact. Since $F$ is continuously (real) differentiable on the compact set $\mathcal{V} \times \mathcal{W}$, it follows from the extreme value theorem applied to $ \frac{\partial}{\partial \Re(\bm{z})} F(\bm{W},\bm{z})$ and to $\frac{\partial}{\partial \Im(\bm{z})} F(\bm{W},\bm{z})$, respectively, with $\bm{z} \in \mathcal{V}$, that there exists some constant $B_F$ such that
\[\left\|\frac{\partial}{\partial \Re(\bm{z})} F(\bm{W},\bm{z})\big\vert_{\bm{z} = \bm{H}(\bm{\theta},\omega)} \right\| \leq B_F,\]
\[ \left\|\frac{\partial}{\partial \Im(\bm{z})} F(\bm{W},\bm{z})\big\vert_{\bm{z} = \bm{H}(\bm{\theta},\omega)}\right\| \leq B_F,\]
\noindent for a.e. $\omega \in \Omega$ and any $(\bm{\theta},\bm{W}) \in \Theta \times \mathcal{W}$. To complete the proof, we simply bound the Wirtinger cogradient directly by its definition (adjusting the constant upper bound).
\end{proof}

\par Next, we note that under Assumption \ref{assumption: two-stage problem}, the second-stage problem (as a function of $\theta$, for a.e. $\omega \in \Omega$) is a \emph{lower-$C^2$} function (see \cite{RockafellarLowerC2} for a precise definition). We formalize the consequences of this fact in the following lemma.
\begin{lemma} \label{lemma: weak convexity of sample two-stage function}
   Let Assumption \textnormal{\ref{assumption: two-stage problem}} hold. For a.e. $\omega \in \Omega$, there exists a constant $\widehat{\rho}(\omega) > 0$ such that the mapping $\theta \mapsto \max_{\bm{W} \in \mathcal{W}} F(\bm{W},\bm{H}(\bm{\theta},\omega))$ is $\widehat{\rho}(\omega)$-weakly convex on $\Theta$.
\end{lemma}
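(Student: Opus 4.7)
My plan is to recognize the function in question as a pointwise maximum over a compact set of functions that are jointly $C^2$ in $\bm{\theta}$, which places it in the class of \emph{lower-$C^2$} functions in the sense of Rockafellar, and then to leverage the fact that such functions are weakly convex on any compact set. Concretely, for a.e. $\omega\in\Omega$ fix the auxiliary mapping
\[
g(\bm{\theta},\bm{W},\omega)\; :=\; F\bigl(\bm{W},\bm{H}(\bm{\theta},\omega)\bigr),\qquad (\bm{\theta},\bm{W})\in\mathcal{U}\times\mathcal{W},
\]
where $\mathcal{U}\supset \Theta$ is the open neighborhood from \textbf{(A3)}. The plan is to show that (i) $g(\cdot,\bm{W},\omega)$ is $C^2$ on $\mathcal{U}$ for each $\bm{W}\in\mathcal{W}$, (ii) the Hessian $\nabla^2_{\bm{\theta}} g$ is jointly continuous in $(\bm{\theta},\bm{W})$, and (iii) a compactness argument produces a single $\omega$-dependent upper bound on $\|\nabla^2_{\bm{\theta}} g\|$; then weak convexity of the inner‑maximized function drops out.

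Step (i) follows directly from the chain rule applied to $g=F\circ(\mathrm{id},\bm{H}(\cdot,\omega))$, using the twice continuous differentiability of $F$ from \textbf{(A1)} and of $\bm{H}(\cdot,\omega)$ on $\mathcal{U}$ from \textbf{(A3)}. The resulting second derivatives with respect to $\bm{\theta}$ are polynomial expressions in $\nabla \bm{H}$, $\nabla^2 \bm{H}$ and the first and second partials of $F$ evaluated at $\bm{H}(\bm{\theta},\omega)$; all factors are jointly continuous in $(\bm{\theta},\bm{W})$ on $\mathcal{U}\times\mathcal{W}$, so $\nabla^2_{\bm{\theta}} g$ is as well, giving (ii). For (iii), restrict to the compact set $\Theta\times\mathcal{W}$ (by \textbf{(A2)}) and apply the extreme value theorem to the continuous map $(\bm{\theta},\bm{W})\mapsto \|\nabla^2_{\bm{\theta}} g(\bm{\theta},\bm{W},\omega)\|$; define $\widehat{\rho}(\omega)>0$ to be this maximum (shifted upward if needed to make it strictly positive). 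Then for every $\bm{W}\in\mathcal{W}$, the function $\bm{\theta}\mapsto g(\bm{\theta},\bm{W},\omega)+\tfrac{\widehat{\rho}(\omega)}{2}\|\bm{\theta}\|^2$ has positive semidefinite Hessian on the convex set $\Theta$, hence is convex there.

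To conclude, taking the supremum over $\bm{W}\in\mathcal{W}$ preserves convexity, so
\[
\bm{\theta}\;\mapsto\; \max_{\bm{W}\in\mathcal{W}} F\bigl(\bm{W},\bm{H}(\bm{\theta},\omega)\bigr)+\frac{\widehat{\rho}(\omega)}{2}\|\bm{\theta}\|^2
\]
is convex on $\Theta$, which is precisely the definition of $\widehat{\rho}(\omega)$-weak convexity of the inner‑maximized map. The main technical obstacle I anticipate is step (ii)--(iii): writing the Hessian of the composition neatly in Wirtinger form and verifying joint continuity carefully enough that the extreme value theorem applies uniformly in $\bm{W}$; once the uniform Hessian bound $\widehat{\rho}(\omega)$ is in hand, the rest is a one‑line application of the preservation of convexity under pointwise maxima. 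Note that this weak‑convexity conclusion is fully compatible with \textbf{(A4)}, since a $C^2$ function with a Hessian uniformly bounded in operator norm on a compact set is simultaneously weakly convex and weakly concave.
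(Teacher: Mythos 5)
Your proof is correct, and it is in substance the same argument as the paper's, just written out in full: the paper observes that $\bm{\theta} \mapsto \max_{\bm{W} \in \mathcal{W}} F(\bm{W},\bm{H}(\bm{\theta},\omega))$ is a pointwise maximum of $C^2$ functions over a compact index set, i.e.\ \emph{lower-}$C^2$, and then cites Rockafellar's theorem for local weak convexity together with compactness of $\Theta$ to get a single constant. What you have done is unpack exactly what is behind that citation: joint continuity of $\nabla^2_{\bm{\theta}}\, F(\bm{W},\bm{H}(\bm{\theta},\omega))$ on the compact set $\Theta \times \mathcal{W}$, the extreme value theorem to extract a uniform Hessian bound $\widehat{\rho}(\omega)$, convexity of each $F(\bm{W},\bm{H}(\cdot,\omega)) + \tfrac{\widehat{\rho}(\omega)}{2}\|\cdot\|^2$ on the convex set $\Theta$, and preservation of convexity under pointwise suprema (the quadratic term being independent of $\bm{W}$, so it factors out of the max). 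The payoff of your version is that it is self-contained and produces an explicit candidate for $\widehat{\rho}(\omega)$, namely the supremum of the Hessian norm over $\Theta\times\mathcal{W}$; the payoff of the paper's version is brevity and that the local-to-global step is handled by the cited machinery. The one obstacle you flag --- writing the Hessian ``in Wirtinger form'' --- is not actually an obstacle: \textbf{(A1)}'s real differentiability means $F$ is jointly $C^2$ in the real and imaginary parts of its complex argument, and \textbf{(A3)} gives $C^2$ dependence of $\Re\bm{H}$ and $\Im\bm{H}$ on $\bm{\theta}$, so the ordinary real chain rule for the composition suffices and joint continuity of the second derivatives is immediate.
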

\begin{proof}
   From Conditions \textbf{(A1)}--\textbf{(A3)}, it follows easily that the mapping $\theta \mapsto \max_{\bm{W} \in \mathcal{W}} F(\bm{W},\bm{H}(\bm{\theta},\omega))$ is lower-$C^2$ (see \cite{RockafellarLowerC2}). Then, \cite[Theorem 6]{RockafellarLowerC2} yields local weak convexity of this mapping on $\Theta$. Compactness of $\Theta$ then yields the result.
\end{proof}

\par We are now ready to show that $f$ in \eqref{eqn: first-stage problem} is actually continuously differentiable and weakly concave on $\Theta$.
\begin{theorem} \label{thm: gradient of f}
     Let Assumption \textnormal{\ref{assumption: two-stage problem}} hold. Then, for any $\bm{\theta} \in \Theta$, the function $f$ is well-defined and differentiable, with
     \begin{equation} \label{eqn: gradient of f}
     \begin{split}
        \nabla_{\bm{\theta}} f(\bm{\theta}) =&\ \mathbb{E}\left\{\nabla_{\bm{\theta}} F\left(\bm{W},\bm{H}(\bm{\theta},\omega)\right)\big\vert_{\bm{W} = \bm{W}^*(\bm{\theta},\omega)}\right\},
        \end{split}
     \end{equation}
     \noindent for any $\bm{W}^*(\bm{\theta},\omega) \in \arg\max_{\bm{W} \in \mathcal{W}} F\left(\bm{W},\bm{H}(\bm{\theta},\omega)\right)$. Moreover, if $\widehat{\rho}(\cdot) \in \mathcal{Z}_1$, where $\widehat{\rho}$ is given in Lemma \textnormal{\ref{lemma: weak convexity of sample two-stage function}}, then there exists a positive constant ${\rho} \triangleq \max\left\{\mathbb{E}\{\widetilde{\rho}(\omega)\},\mathbb{E}\{\widehat{\rho}(\omega)\}\right\}$, such that $f$ is $\rho$-weakly concave on $\Theta$.
\end{theorem}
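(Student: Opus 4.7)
The plan is to split the argument into three parts: (i) well-definedness of $f$, (ii) classical differentiability of $f$ together with the chain-rule formula \eqref{eqn: gradient of f}, and (iii) weak concavity of $f$ with modulus $\rho$.

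First, I would verify that $f$ is well-defined and real-valued on $\Theta$. For each $\bm{\theta}\in\Theta$ and a.e.\ $\omega\in\Omega$, the set $\arg\max_{\bm{W}\in\mathcal{W}}F(\bm{W},\bm{H}(\bm{\theta},\omega))$ is nonempty by compactness of $\mathcal{W}$ (\textbf{(A2)}) and continuity of $F(\cdot,\bm{z})$ (\textbf{(A1)}); a measurable selection $\bm{W}^*(\bm{\theta},\omega)$ is then guaranteed by a standard measurable selection theorem. Condition \textbf{(A5)} places $F(\bm{W}^*(\bm{\theta},\cdot),\bm{H}(\bm{\theta},\cdot))$ in $\mathcal{Z}_2\subset\mathcal{Z}_1$, so the expectation defining $f(\bm{\theta})$ is a finite real number for every $\bm{\theta}\in\Theta$.

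Second, for differentiability I would set $\phi(\bm{\theta},\omega)\coloneqq\max_{\bm{W}\in\mathcal{W}}F(\bm{W},\bm{H}(\bm{\theta},\omega))$, so that $f(\bm{\theta})=\mathbb{E}\{\phi(\bm{\theta},\omega)\}$. The key structural observation is that $\phi(\cdot,\omega)$ is \emph{simultaneously} $\widetilde{\rho}(\omega)$-weakly concave (by \textbf{(A4)}) and $\widehat{\rho}(\omega)$-weakly convex (by Lemma \ref{lemma: weak convexity of sample two-stage function}) on a convex neighborhood of $\Theta$, for a.e.\ $\omega\in\Omega$. A standard fact from variational analysis (a function that is both lower- and upper-$C^2$ on a convex open set is $C^{1,1}$; cf.\ the lower-$C^2$ characterization in \cite{RockafellarLowerC2}) then promotes $\phi(\cdot,\omega)$ to a classically differentiable function on $\Theta$ for a.e.\ $\omega$. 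Consequently, Danskin's envelope theorem collapses to the identity $\nabla_{\bm{\theta}}\phi(\bm{\theta},\omega)=\nabla_{\bm{\theta}}F(\bm{W},\bm{H}(\bm{\theta},\omega))\vert_{\bm{W}=\bm{W}^*(\bm{\theta},\omega)}$ for \emph{every} maximizer $\bm{W}^*(\bm{\theta},\omega)$, with the right-hand side computed explicitly by Lemma \ref{lemma: gradient of compositional}. Finally, Lemma \ref{lemma: gradient of compositional}, Lemma \ref{lemma: boundedness of Wirtinger gradient}, and the uniform Lipschitz bound $L_{H,0}$ from \textbf{(A3)} yield an $\omega$-independent upper bound on $\|\nabla_{\bm{\theta}}\phi(\bm{\theta},\omega)\|$, which reduces the interchange of $\mathbb{E}$ and $\nabla_{\bm{\theta}}$ to a routine application of dominated convergence, delivering \eqref{eqn: gradient of f}.

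Third, for weak concavity I would integrate the sample-path estimate in \textbf{(A4)}. Since $-\phi(\cdot,\omega)+(\widetilde{\rho}(\omega)/2)\|\cdot\|^2$ is convex on $\Theta$ for a.e.\ $\omega$, and since convexity is preserved under expectation (integrability being ensured by the $\mathcal{Z}_1$-hypotheses implicit in the statement together with \textbf{(A4)}), the map $\bm{\theta}\mapsto -f(\bm{\theta})+(\mathbb{E}\{\widetilde{\rho}(\omega)\}/2)\|\bm{\theta}\|^2$ is convex on $\Theta$. Hence $f$ is $\mathbb{E}\{\widetilde{\rho}(\omega)\}$-weakly concave, and since $\rho=\max\{\mathbb{E}\{\widetilde{\rho}(\omega)\},\mathbb{E}\{\widehat{\rho}(\omega)\}\}\geq\mathbb{E}\{\widetilde{\rho}(\omega)\}$, $f$ is in particular $\rho$-weakly concave. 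The main obstacle will be part (ii): Danskin's theorem in its classical form only guarantees directional differentiability of $\phi(\cdot,\omega)$ with $\phi'(\bm{\theta};\bm{d})=\max_{\bm{W}^*\in\arg\max}\langle\nabla_{\bm{\theta}}F(\bm{W}^*,\bm{H}(\bm{\theta},\omega)),\bm{d}\rangle$, and in the absence of a unique inner maximizer this need not be linear in $\bm{d}$. The pairing of Lemma \ref{lemma: weak convexity of sample two-stage function} with \textbf{(A4)} is precisely what I would use to rule this out: a simultaneously weakly convex and weakly concave function is $C^{1,1}$, so its Clarke subdifferential is a singleton, forcing Danskin's directional formula to collapse to a genuine gradient independent of the particular selection $\bm{W}^*(\bm{\theta},\omega)$. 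Once this is in place, measurable selection and dominated convergence are entirely standard.
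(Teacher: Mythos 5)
Your proposal is correct and follows essentially the same route as the paper: well-definedness from \textbf{(A5)}, differentiability of the sample max-function from its simultaneous weak convexity (Lemma \ref{lemma: weak convexity of sample two-stage function}) and weak concavity (\textbf{(A4)}) — the paper invokes \cite[Corollary 4.9]{Vial_WeakConvexity} for exactly the "lower- plus upper-$C^2$ implies $C^{1,1}$" step you describe — followed by the Danskin/singleton-subdifferential collapse and an interchange of expectation and gradient (the paper cites \cite[Theorem 7.44]{SIAM:Shapiro_etal} where you use a dominated-convergence argument with the uniform bound from Lemmas \ref{lemma: gradient of compositional}--\ref{lemma: boundedness of Wirtinger gradient}, which is the same content). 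The weak-concavity conclusion via integrating the sample-path estimate and enlarging the modulus to $\rho$ also matches the paper.
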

\begin{proof}
\par Condition \textbf{(A5)} ensures that $f$ is well-defined  and finite for any $\bm{\theta} \in \Theta$. On the other hand, by combining Condition \textbf{(A4)} with Lemma \ref{lemma: weak convexity of sample two-stage function}, we obtain that the map $\bm{\theta} \mapsto \max_{\bm{W} \in \mathcal{W}} F(\bm{W},\bm{H}(\bm{\theta},\omega))$ is continuously Fr\'echet differentiable (the proof of this fact can be found in \cite[Corollary 4.9]{Vial_WeakConvexity}). Since this mapping is \emph{subdifferentially regular}, and its (generalized) subdifferential is a singleton, it follows that
\[ \nabla_{\bm{\theta}} \max_{\bm{W} \in \mathcal{W}} F\left(\bm{W},\bm{H}(\bm{\theta},\omega)\right) =  \nabla_{\bm{\theta}}F\left(\bm{W},\bm{H}(\bm{\theta},\omega)\right)\big\vert_{\bm{W} = \bm{W}^*(\bm{\theta},\omega)},\]
\noindent for any $\bm{W}^*(\bm{\theta},\omega) \in \arg\max_{\bm{W} \in \mathcal{W}} F\left(\bm{W},\bm{H}(\bm{\theta},\omega)\right)$ (e.g. see \cite{DanskinMinMax}). Similarly, by utilizing the integrability of $\widetilde{\rho}(\cdot)$ and $\widehat{\rho}(\cdot)$, we obtain that $f$ is also continuously Fr\'echet differentiable, (again invoking \cite[Corollary 4.9]{Vial_WeakConvexity}, by noting that $f$ is both $\rho$-weakly convex and $\rho$-weakly concave, with ${\rho} \triangleq \max\left\{\mathbb{E}\{\widetilde{\rho}(\omega)\},\mathbb{E}\{\widehat{\rho}(\omega)\}\right\}$). Using \cite[Theorem 7.44]{SIAM:Shapiro_etal}, the gradient of $f$ reads
\[ \nabla_{\bm{\theta}} f(\bm{\theta})  =  \mathbb{E}\left\{\nabla_{\bm{\theta}} F\left(\bm{W},\bm{H}(\bm{\theta},\omega)\right)\big\vert_{\bm{W} =\bm{W}^*(\bm{\theta},\omega)}\right\} ,\]
\noindent for any $\bm{W}^*(\bm{\theta},\cdot)\in \arg\max_{\bm{W} \in \mathcal{W}} F(\bm{W},\bm{H}(\bm{\theta},\cdot))$. 
\end{proof}
\begin{remark}
    \par We observe that the integrability of the weak convexity constants $\widehat{\rho}(\omega)$ (given in Lemma \textnormal{\ref{lemma: weak convexity of sample two-stage function}}) is a very mild condition and is almost always met in practice (where one usually has a finite collection of scenarios).  As such, for the rest of this article we make the implicit assumption that this holds, i.e. $\widehat{\rho}(\cdot) \in \mathcal{Z}_1$.
\end{remark}

\subsection{Zeroth-order Gradient Approximation of the Channel}

\par Assuming the lack of availability of any first-order information of $\bm{H}(\cdot,\omega)$ for any $\omega \in \Omega$, we will employ a zeroth-order scheme in order to obtain a gradient estimate of $\nabla_{\bm{\theta}} \bm{H}(\cdot,\omega)$, using which we can solve \eqref{eqn: first-stage problem} via a stochastic projected gradient ascent scheme. The proposed method will be based on gradient estimates arising from a two-point stochastic evaluation of $\bm{H}(\cdot,\omega)$ (similar to, among others, \cite{SIAMOPT:KalogeriasPowellZerothOrder,CompMath:Nesterov_etal,arXiv:Pougk-Kal}). From \textbf{(A3)}, we have
\[\nabla_{\bm{\theta}} \bm{H}(\bm{\theta},\omega) = \nabla_{\bm{\theta}} \Re\left(\bm{H}(\bm{\theta},\omega)\right) + j \nabla_{\bm{\theta}} \Im\left(\bm{H}(\bm{\theta},\omega) \right).\]
\noindent We would like to approximate each of the above parts of the gradient using only function evaluations of $\bm{H}(\cdot,\omega)$. We let $\bm{U} \sim \mathcal{N}\left(\bm{0}, \bm{I}\right)$ be a normal random vector, where $\bm{I}$ is the identity matrix of size $S$. Given a smoothing parameter $\mu > 0$, we consider the following gradient estimate
\begin{align}
\nabla_{\bm{\theta}} \bm{H}_{\mu}(\bm{\theta},\omega)\triangleq &  \frac{1}{2\mu}\mathbb{E}\left\{\left(\bm{H}\left(\bm{\theta} \hspace{-1pt}+\hspace{-1pt} \mu \bm{U},\omega\right) \hspace{-1pt}-\hspace{-1pt} \bm{H}\left(\bm{\theta}-\mu \bm{U},\omega\right)\right)\bm{U}^\top \right\}^\top \nonumber \\  \equiv &\nabla_{\bm{\theta}} \bm{H}^R_{\mu}(\bm{\theta},\omega) + j \nabla_{\bm{\theta}} \bm{H}^I_{\mu}(\bm{\theta},\omega), \label{eqn: ZO gradient approximation}
\end{align}
\noindent where $\nabla_{\bm{\theta}} \bm{H}^R_{\mu}(\bm{\theta},\omega) \triangleq  \Re\left(\nabla_{\bm{\theta}} \bm{H}_{\mu}(\bm{\theta},\omega)\right)$ and $\nabla_{\bm{\theta}} \bm{H}^I_{\mu}(\bm{\theta},\omega) \triangleq  \Im\left(\nabla_{\bm{\theta}} \bm{H}_{\mu}(\bm{\theta},\omega)\right).$ Let us notice that given condition \textbf{(A3)}, there exists an open set $\mathcal{U} \supset \Theta$ such that \eqref{eqn: ZO gradient approximation} is still well-defined. Thus, the gradient approximation is valid, even if $\bm{\theta}$ is a point in the boundary of $\Theta$, assuming that an appropriately small $\mu$ is chosen. Observe that the smaller the value of $\mu$ is, the better the aforementioned zeroth-order approximation is. There is a trade-off between approximation accuracy and numerical stability, but in practice we observe that $\mu$ can be chosen to be quite small. 
\par The assumption that the dynamics of $\bm{H}(\cdot,\omega)$ are unknown has multiple benefits. It allows us to bypass any modelling assumptions about the underlying communication channels, which typically incur modelling errors. At the same time, it enables the evaluation of $\bm{H}(\cdot,\omega)$ using polar coordinates. That is, we can (and we do) assume that the IRS parameters $\bm{\theta}$ represent any real-valued parameters determining the complex-valued phase-shift elements of the IRSs (e.g. amplitutes and phases). This allows us to bypass the typical nonconvex unit-modulus constraints that arise when optimizing over complex IRS phase-shifts.
\par The proposed zeroth-order gradient approximation of \eqref{eqn: Wirtinger gradient of the compositional objective}, based on the zeroth-order approximation given in \eqref{eqn: ZO gradient approximation}, reads
\begin{equation} \label{eqn: ZO gradient sample of the compositional function}
\begin{split}
  & \nabla_{\bm{\theta}}^{\mu} F\left(\bm{W},\bm{H}(\bm{\theta},\omega)\right) \\
  &\quad \triangleq \ 2 \nabla_{\bm{\theta}} \bm{H}_{\mu}^R(\bm{\theta},\omega)\left(\Re\left(\frac{\partial^{\circ}}{\partial \bm{z}} F(\bm{W},\bm{z}) \big\vert_{\bm{z} = \bm{H}(\bm{\theta},\omega)} \right)\right)^\top \\ & \qquad + 2\nabla_{\bm{\theta}}  \bm{H}_{\mu}^I(\bm{\theta},\omega)\left(\Re\left(j\frac{\partial^{\circ}}{\partial \bm{z}} F(\bm{W},\bm{z}) \big\vert_{\bm{z} = \bm{H}(\bm{\theta},\omega)} \right)\right)^\top,
   \end{split}
\end{equation}
\noindent where $\bm{H}_{\mu}^R(\bm{\theta},\omega)$ and $\bm{H}_{\mu}^I(\bm{\theta},\omega)$ are defined in \eqref{eqn: ZO gradient approximation}. It should be noted here that the zeroth-order gradient approximation relies on small perturbations in the IRS parameters $\bm{\theta}$. This implies that ZoSGA is not applicable when optimizing IRSs with discrete phases/states. We identify this limitation of ZoSGA as a direction for future investigation.

\section{Algorithm and Convergence Analysis} \label{sec: conv anal}

\par We now derive a zeroth-order projected stochastic gradient method for the solution of \eqref{eqn: first-stage problem}. To that end, we assume the availability of an oracle solving the deterministic problem \eqref{eqn: second-stage problem}. The method treats the unknown function $\bm{H}(\cdot,\omega)$ as a black-box, utilizing samples of the gradient approximation given in \eqref{eqn: ZO gradient approximation}.
\begin{assumption} \label{assumption: oracle assumption}
Given any $\bm{\theta} \in \Theta$ and for a.e. $\omega \in \Omega$, we have access to an oracle yielding a (measurable) optimal solution to \textnormal{\eqref{eqn: second-stage problem}},  and Assumption \textnormal{\ref{assumption: two-stage problem}} holds.
\end{assumption}
\vspace{-10pt}\subsection{A Zeroth-order Projected Stochastic Gradient Ascent}
\par Let us briefly present the proposed (channel-agnostic) zeroth-order projected stochastic gradient ascent. From condition \textbf{(A5)}, we have available i.i.d. samples $\omega \in \Omega$. Thus, from \eqref{eqn: ZO gradient sample of the compositional function}, at every $(\bm{\theta},\bm{W}) \in \Theta\times\mathcal{W}$, and for a.e. $\omega \in \Omega$, we can utilize the following sample gradient approximation
\begin{equation} \label{eqn: ZO gradient of the compositional function}
\begin{split}
 & \bm{D}_{\mu}(\bm{\theta},\omega,\bm{U};\bm{W}) \equiv \bm{D}_{\mu}(\bm{\theta},\omega,\bm{U})\\
 &\qquad \triangleq \bm{\Delta}_{\mu}^R \left(\Re\left(\frac{\partial^{\circ}}{\partial \bm{z}} F(\bm{W},\bm{z}) \bigg\vert_{\bm{z} = \bm{H}(\bm{\theta},\omega)} \right)\right)^\top \\ & \qquad\quad   + \bm{\Delta}_{\mu}^I \left(\Re\left(j\frac{\partial^{\circ}}{\partial \bm{z}} F(\bm{W},\bm{z}) \bigg\vert_{\bm{z} = \bm{H}(\bm{\theta},\omega)} \right)\right)^\top,
\end{split}
\end{equation}
with
\begin{align}\nonumber
    &\hspace{-8pt}\left(\bm{\Delta}_{\mu}^R, \bm{\Delta}_{\mu}^I \right)  
    \\ 
    &\triangleq\frac{1}{\mu}\Big[\left(\Re\left(\bm{\Delta}_{\mu}(\bm{\theta},\omega,\bm{U})\right) \bm{U}^\top \right)^\top \, \left(\Im\left(\bm{\Delta}_{\mu}(\bm{\theta},\omega,\bm{U})\right) \bm{U}^\top \right)^\top\Big], \nonumber
\end{align}
\noindent where $\mu > 0$ is a smoothing parameter, $\bm{U} \sim \mathcal{N}(\bm{0},\bm{I})$ and $\bm{\Delta}_{\mu}(\bm{\theta},\omega,\bm{U}) \triangleq \bm{H}\left(\bm{\theta} + \mu \bm{U},\omega\right) - \bm{H}\left(\bm{\theta}-\mu \bm{U},\omega\right)$. Note that this is simply a sample from \eqref{eqn: ZO gradient sample of the compositional function} and is obtained by probing the wireless network twice with the perturbed IRS parameters $\bm{\theta} + \mu \bm{U}$ and $\bm{\theta} - \mu \bm{U}$, with the induced overhead due to the associated channel estimation effort, implicitly assumed to be within the coherence time of the channel; note that this a standard assumption in the related literature on learning resource policies in wireless systems; see, e.g., the seminal work \cite{2019eisen_pfo}.

\par At this point, it may be also worth noting that ZoSGA requires exactly three effective channels to be estimated, per operational iteration: one to communicate (on which the optimal short-term precoders $\boldsymbol{W}^*$ are calculated), and two more pertaining to the required function evaluations (system probes) for constructing the sample gradient approximations outlined above. Therefore, any conventional scheme can be  employed for channel estimation, \textit{as if no IRS is present in the system}. The proposed method is summarized in Algorithm \ref{Algorithm: ZoSGA}.
\renewcommand{\thealgorithm}{ZoSGA}

\begin{algorithm}[!t]
\caption{Zeroth-order Stochastic Gradient Ascent}
    \label{Algorithm: ZoSGA}
\begin{algorithmic}
\State \textbf{Input:}  $\bm{\theta}_0 \in \Theta$, $\{\eta_t\}_{t \geq 0} \subset \mathbb{R}_+$, $\mu > 0$, and $T > 0$.
\For {($t = 0,1,2,\ldots, T$)}
\State Sample (i.i.d.) $\omega_t \in \Omega$, $\bm{U}_t \sim \mathcal{N}\left(\bm{0},\bm{I}\right)$.
\State Find $\bm{W}^* \in \arg\max_{\bm{W} \in \mathcal{W}} F\left(\bm{W},\bm{H}(\bm{\theta}_t,\omega_t)\right).$
\State Set $\bm{D}_{\mu}(\bm{\theta}_t,\omega_t,\bm{U}_t) \equiv \bm{D}_{\mu}(\bm{\theta}_t,\omega_t,\bm{U}_t;\bm{W}^*)$ as in \eqref{eqn: ZO gradient of the compositional function}.
\State  $\bm{\theta}_{t+1} = \textbf{proj}_{\Theta}\left(\bm{\theta}_t + \eta_t \bm{D}_{\mu}\left(\bm{\theta}_t,\omega_t,\bm{U}_t\right)\right). $
\EndFor
\State Sample $t^* \in \{0,\ldots,T\}$ according to $\mathbb{P}(t^* = t) = \frac{\eta_t}{\sum_{i = 0}^T\eta_i}$.
\State \Return $\bm{\theta}_{t^*}$.
\end{algorithmic}
\end{algorithm}

\subsection{Convergence Analysis}
\par We proceed by proving the convergence of Algorithm \ref{Algorithm: ZoSGA}. Let us start by proving certain technical results.
\begin{lemma} \label{lemma: bound on sample gradient variance}
Let Assumption  \textnormal{\ref{assumption: oracle assumption}} hold, and fix any $\bm{\theta} \in \Theta$. For a.e. $\omega \in \Omega$ let $\bm{W}$ be the output of the oracle at $(\bm{\theta},\omega)$. Then, for any $\mu \geq 0$, and any $\bm{U} \sim \mathcal{N}(\bm{0},\bm{I})$, the following holds
\begin{equation} \label{eqn: bound on the expected gradient}
\mathbb{E}\left\{\left\| \bm{D}_{\mu}(\bm{\theta},\omega,\bm{U};\bm{W})\right\|^2\right\} \leq  4B_F^2 L_{H,0}^2(S^2 + 2S),
\end{equation}
\noindent where $\bm{U}$ and $\omega$ are assumed to be statistically independent.
\end{lemma}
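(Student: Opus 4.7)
The plan is to obtain a pointwise deterministic bound on $\|\bm{D}_\mu(\bm{\theta},\omega,\bm{U};\bm{W})\|^2$ that is uniform in $(\omega,\bm{W})$ and depends only on $\|\bm{U}\|$, and then take the expectation over the Gaussian $\bm{U}$, reducing the claim to the computation of a fourth moment of a $\chi^2_S$ variable.

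First I would unpack the algebraic structure of $\bm{D}_\mu$ given in \eqref{eqn: ZO gradient of the compositional function}. A key observation is that $\bm{\Delta}_\mu^R$ and $\bm{\Delta}_\mu^I$ are rank-one outer products involving $\bm{U}$, so when contracted against the $M_U$-dimensional vectors $\Re(\partial_{\bm{z}}^\circ F)^\top$ and $\Re(j\partial_{\bm{z}}^\circ F)^\top$, the result factors as
\[\bm{D}_\mu(\bm{\theta},\omega,\bm{U};\bm{W}) \;=\; \tfrac{1}{\mu}\,c(\bm{\theta},\omega,\bm{U};\bm{W})\,\bm{U},\]
where the scalar $c$ collapses (using $\Re(j\zeta)=-\Im(\zeta)$ and the identity $\Re(a)\Re(b)-\Im(a)\Im(b)=\Re(ab)$ componentwise) to
\[c(\bm{\theta},\omega,\bm{U};\bm{W}) \;=\; \Re\!\left(\bm{\Delta}_\mu(\bm{\theta},\omega,\bm{U})^\top \,\tfrac{\partial^\circ}{\partial \bm{z}}F(\bm{W},\bm{z})\big|_{\bm{z}=\bm{H}(\bm{\theta},\omega)}\right).\]
This scalar-times-$\bm{U}$ form is exactly what yields the sharp constant later on, and I expect this step to be the main (though quite short) technical obstacle: one must resist the looser route of bounding $\bm{\Delta}_\mu^R$ and $\bm{\Delta}_\mu^I$ separately, which generally overshoots the target constant.

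Next, I would bound $|c|$ by the modulus of the complex bilinear form and apply Cauchy--Schwarz in $\mathbb{C}^{M_U}$:
\[|c| \;\leq\; \left\|\bm{\Delta}_\mu(\bm{\theta},\omega,\bm{U})\right\|\,\left\|\tfrac{\partial^\circ}{\partial \bm{z}}F(\bm{W},\bm{z})\big|_{\bm{z}=\bm{H}(\bm{\theta},\omega)}\right\|.\]
By Lemma \ref{lemma: boundedness of Wirtinger gradient}, the cogradient factor is bounded a.e.\ by $B_F$, uniformly in $(\bm{\theta},\bm{W})$. For the first factor, I would invoke condition \textbf{(A3)}: since $\bm{H}(\cdot,\omega)$ is $L_{H,0}$-Lipschitz on an open neighborhood of $\Theta$, for all $\mu$ small enough (so that $\bm{\theta}\pm\mu\bm{U}$ remains in this neighborhood) we get
\[\|\bm{\Delta}_\mu(\bm{\theta},\omega,\bm{U})\| \;\leq\; L_{H,0}\,\|(\bm{\theta}+\mu\bm{U})-(\bm{\theta}-\mu\bm{U})\| \;=\; 2\mu L_{H,0}\|\bm{U}\|.\]
Combining, $|c|\leq 2\mu B_F L_{H,0}\|\bm{U}\|$, so
\[\|\bm{D}_\mu(\bm{\theta},\omega,\bm{U};\bm{W})\|^2 \;=\; \tfrac{1}{\mu^2}\,c^2\,\|\bm{U}\|^2 \;\leq\; 4\,B_F^2\,L_{H,0}^2\,\|\bm{U}\|^4,\]
which is a deterministic bound in $\omega$ (and $\bm{W}$, which may itself depend on $\omega$ through the oracle).

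Finally, I would take expectation and use independence of $\bm{U}$ from $\omega$. Since $\bm{U}\sim\mathcal{N}(\bm{0},\bm{I}_S)$, the quantity $\|\bm{U}\|^2$ is $\chi^2_S$, whose second moment is $\mathbb{E}[\|\bm{U}\|^4]=S^2+2S$. Condition \textbf{(A5)} (in particular, measurability of the oracle output $\bm{W}^*(\bm{\theta},\cdot)$ guaranteed by Assumption \ref{assumption: oracle assumption}) ensures that all expectations are well-defined. Pulling everything together yields
\[\mathbb{E}\!\left[\|\bm{D}_\mu(\bm{\theta},\omega,\bm{U};\bm{W})\|^2\right] \;\leq\; 4\,B_F^2\,L_{H,0}^2\,(S^2+2S),\]
which is the claimed bound.
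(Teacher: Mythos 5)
Your proof is correct and follows essentially the same route as the paper's: both exploit the rank-one structure of $\bm{\Delta}_\mu^R,\bm{\Delta}_\mu^I$ to reduce $\|\bm{D}_\mu\|$ to $\tfrac{1}{\mu}\|\bm{U}\|\cdot\|\bm{\Delta}_\mu\|\cdot B_F$ via Cauchy--Schwarz and Lemma \ref{lemma: boundedness of Wirtinger gradient}, then apply the $L_{H,0}$-Lipschitz bound from \textbf{(A3)} and the identity $\mathbb{E}\{\|\bm{U}\|^4\}=S^2+2S$. The only cosmetic difference is that you make the scalar factorization $\bm{D}_\mu=\tfrac{1}{\mu}c\,\bm{U}$ explicit, whereas the paper packages the same step as a bound by the operator norm of $(\bm{\Delta}_\mu\bm{U}^\top)^\top$.
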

\begin{proof}
\par By Lemma \ref{lemma: boundedness of Wirtinger gradient}, we have that for any $(\bm{\theta},\bm{W}) \in \Theta\times \mathcal{W}$ and a.e. $\omega \in \Omega$, $\left\|\frac{\partial^{\circ}}{\partial \bm{z}} F(\bm{W},\bm{z})\big\vert_{\bm{z} = \bm{H}(\bm{\theta},\omega)}\right\| \leq B_F$. Using the Cauchy-Schwartz inequality, we obtain
\begin{equation*}
    \begin{split}
       & \mathbb{E}\left\{\left\|\bm{D}_{\mu}(\bm{\theta},\omega,\bm{U}) \right\|^2 \right\} \leq  \frac{1}{\mu^2} B_F^2 \mathbb{E}\left\{\left\| \left(\bm{\Delta}_{\mu}(\bm{\theta},\omega,\bm{U}) U^\top \right)^\top\right\|^2\right\} \\
      &\qquad  \leq  \frac{1}{\mu^2} B_F^2 \mathbb{E}\left\{ \mathbb{E}\left\{ \|\bm{\Delta}_{\mu}(\bm{\theta},\omega,\bm{U})\|^2 \|\bm{U}\|^2 \big\vert \bm{U}\right\}\right\} \\
      &\qquad  \leq  4 B_F^2 L_{H,0}^2 \mathbb{E}\left\{\|\bm{U}\|^4 \right\} = 4B_F^2 L_{H,0}^2(S^2 + 2S) ,
    \end{split}
\end{equation*}
\noindent where in the last inequality we used Lipschitz continuity of $\bm{H}(\cdot,\omega)$ (from \textbf{(A3)}) while in the last equality we evaluated the 4-th moment of the $\chi$-distribution. Since $F\left(\bm{W},\bm{H}(\bm{\theta},\cdot)\right) \in \mathcal{Z}_2$ from \textbf{(A5)}, we observe that all of the above expectations are well-defined and finite and hence the proof is complete.
\end{proof}

\begin{lemma} \label{lemma: bound on difference between expected gradient and expected estimator}
Let Assumption \textnormal{\ref{assumption: oracle assumption}} hold, fix any $\bm{\theta} \in \Theta$, and for a.e. $\omega \in \Omega$, let $\bm{W}$ be the output of the oracle. Then, the sample gradient approximation given in \textnormal{\eqref{eqn: ZO gradient of the compositional function}} satisfies
\begin{equation*}
\begin{split}
\mathbb{E}\left\{\bm{D}_{\mu}(\bm{\theta},\omega,\bm{U};\bm{W})\right] =& \ \mathbb{E}\left\{\nabla_{\bm{\theta}}^{\mu} F\left(\bm{W},\bm{H}(\bm{\theta},\omega)\right) \right\} \triangleq \widehat{\nabla} f(\bm{\theta}), \\
 \big\| \widehat{\nabla}f(\bm{\theta}) - \nabla f(\bm{\theta}) \big\| \leq &\ 2 \mu B_F L_{H,1} \sqrt{M S}.
\end{split}
\end{equation*}
\end{lemma}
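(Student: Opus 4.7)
The proof decomposes naturally into the two claims in the lemma, and the plan is to tackle them in order.

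For the unbiasedness claim, the plan is to exploit independence of $\bm{U}$ and $\omega$ together with Fubini's theorem to take the expectation over $\bm{U}$ first (conditional on $\omega$). Comparing the definitions of $\bm{\Delta}_{\mu}^R, \bm{\Delta}_{\mu}^I$ (normalized by $1/\mu$) with those of $\nabla_{\bm{\theta}} \bm{H}_{\mu}^R, \nabla_{\bm{\theta}} \bm{H}_{\mu}^I$ (normalized by $1/(2\mu)$), we get the identities $\mathbb{E}_{\bm{U}}[\bm{\Delta}_{\mu}^R] = 2\nabla_{\bm{\theta}} \bm{H}_{\mu}^R(\bm{\theta},\omega)$ and $\mathbb{E}_{\bm{U}}[\bm{\Delta}_{\mu}^I] = 2\nabla_{\bm{\theta}} \bm{H}_{\mu}^I(\bm{\theta},\omega)$, because $\bm{U}$ is centrally symmetric and the Wirtinger-cogradient factors do not depend on $\bm{U}$. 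The factor of 2 cancels against the factor of 2 appearing in the definition of $\nabla_{\bm{\theta}}^{\mu} F$, yielding $\mathbb{E}_{\bm{U}}[\bm{D}_{\mu}] = \nabla_{\bm{\theta}}^{\mu} F(\bm{W},\bm{H}(\bm{\theta},\omega))$; taking outer expectation over $\omega$ (with $\bm{W} = \bm{W}^*(\bm{\theta},\omega)$) gives the first equality, which is exactly the definition of $\widehat{\nabla} f(\bm{\theta})$.

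For the bias bound, the plan is to invoke Theorem \ref{thm: gradient of f} to rewrite $\nabla f(\bm{\theta}) = \mathbb{E}\{\nabla_{\bm{\theta}} F(\bm{W}^*,\bm{H}(\bm{\theta},\omega))\}$ and pass the difference inside the expectation: $\widehat{\nabla}f(\bm{\theta}) - \nabla f(\bm{\theta}) = \mathbb{E}\{\nabla_{\bm{\theta}}^{\mu} F - \nabla_{\bm{\theta}} F\}$. Using the explicit formulas of Lemma \ref{lemma: gradient of compositional} and the definition of $\nabla_{\bm{\theta}}^{\mu} F$, the integrand splits as
\[
2\bigl(\nabla_{\bm{\theta}}\bm{H}_{\mu}^R - \nabla_{\bm{\theta}}\Re(\bm{H})\bigr)\bigl(\Re(\partial_z^{\circ}F)\bigr)^\top + 2\bigl(\nabla_{\bm{\theta}}\bm{H}_{\mu}^I - \nabla_{\bm{\theta}}\Im(\bm{H})\bigr)\bigl(\Re(j\partial_z^{\circ}F)\bigr)^\top.
\]
The cogradient factors are bounded by $B_F$ via Lemma \ref{lemma: boundedness of Wirtinger gradient}. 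The key remaining step is to bound the Jacobian-approximation errors, and for this the plan is to establish (and use) the Stein-type identity $\nabla_{\bm{\theta}}\bm{H}_{\mu}(\bm{\theta},\omega) = \mathbb{E}_{\bm{U}}[\nabla_{\bm{\theta}}\bm{H}(\bm{\theta}+\mu\bm{U},\omega)]$, whose validity is guaranteed by (A3) since $\bm{H}(\cdot,\omega)$ is $C^2$ on an open neighborhood $\mathcal{U} \supset \Theta$, allowing differentiation under the expectation. Taking real/imaginary parts, applying the triangle (Jensen) inequality, invoking the $L_{H,1}$-Lipschitz gradient from (A3), and using $\mathbb{E}\|\bm{U}\| \leq \sqrt{S}$ then gives $\|\nabla_{\bm{\theta}}\bm{H}_{\mu}^R - \nabla_{\bm{\theta}}\Re(\bm{H})\| \leq L_{H,1}\mu\sqrt{S}$ (and similarly for the imaginary part), where careful bookkeeping of matrix/vector norms for the $S\times M$ Jacobian accounts for the factor $\sqrt{M}$ in the final bound. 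Combining the estimates yields $\|\widehat{\nabla}f - \nabla f\| \leq 2\mu B_F L_{H,1}\sqrt{MS}$.

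The only conceptually delicate point is justifying the differentiation-under-integral identity for $\nabla_{\bm{\theta}}\bm{H}_{\mu}$ when $\bm{\theta}$ can lie on the boundary of $\Theta$; the open set $\mathcal{U}$ from (A3) and the boundedness of derivatives there provide the requisite dominated-convergence bounds, so no extra hypotheses are needed. The remaining work is routine: the constant tracking for the $\sqrt{MS}$ factor, though a bit tedious, is straightforward once the Jacobian norms are interpreted consistently across the real/imaginary decomposition.
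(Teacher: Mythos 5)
Your proof is correct and follows the same overall architecture as the paper's: Fubini (tower property over the product measure of $\omega$ and $\bm{U}$, with the cogradient factor independent of $\bm{U}$) for the unbiasedness identity, followed by the decomposition of the bias into the $B_F$-bounded Wirtinger cogradient (Lemma \ref{lemma: boundedness of Wirtinger gradient}) multiplying the Jacobian-approximation errors $\bm{\varepsilon}^R,\bm{\varepsilon}^I$. The one place you genuinely diverge is the key sub-estimate on $\|\bm{\varepsilon}^R\|$: the paper obtains the per-column bound $\tfrac{\mu}{2}L_{H,1}\sqrt{S}$ by citing \cite[Theorem 1]{arXiv:Kumar_etal} as a black box, whereas you derive the bound from scratch via the Gaussian (Stein) identity $\nabla_{\bm{\theta}}\bm{H}_{\mu}(\bm{\theta},\omega)=\mathbb{E}_{\bm{U}}\left\{\nabla_{\bm{\theta}}\bm{H}(\bm{\theta}+\mu\bm{U},\omega)\right\}$, the $L_{H,1}$-Lipschitz-gradient condition in \textbf{(A3)}, and $\mathbb{E}\|\bm{U}\|\leq\sqrt{S}$. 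This is self-contained and valid, and it makes explicit the differentiation-under-the-integral step that the citation hides; the only cost is that it naturally yields $\mu L_{H,1}\sqrt{S}$ per column rather than $\tfrac{\mu}{2}L_{H,1}\sqrt{S}$, so your final constant comes out as $4\mu B_F L_{H,1}\sqrt{MS}$ unless you recover the extra factor of $\tfrac{1}{2}$ --- immaterial for Theorem \ref{thm: convergence analysis} and the resulting rate, which only need the bias to be $\mathcal{O}(\mu\sqrt{MS})$. Two small remarks: the central symmetry of $\bm{U}$ is not actually what cancels the factor of $2$ in the unbiasedness step (that is already built into the $1/\mu$ versus $1/(2\mu)$ normalizations of $\bm{\Delta}_{\mu}^{R},\bm{\Delta}_{\mu}^{I}$ versus $\nabla_{\bm{\theta}}\bm{H}_{\mu}^{R},\nabla_{\bm{\theta}}\bm{H}_{\mu}^{I}$), and, as you correctly flag, both your argument and the paper's implicitly require $\bm{H}(\cdot,\omega)$ and its Lipschitz-gradient property to extend beyond $\Theta$ to wherever $\bm{\theta}+\mu\bm{U}$ lands (the Gaussian has unbounded support), a point the paper itself only addresses informally via the open set $\mathcal{U}$ and the smallness of $\mu$.
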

\begin{proof}
Firstly, by utilizing Fubini's theorem we obtain that 
\[\mathbb{E}\left\{\bm{D}_{\mu}(\bm{\theta},\omega,\bm{U};\bm{W})\right] = \mathbb{E}\left\{\nabla_x^{\mu} F\left(\bm{W},\bm{H}(\bm{\theta},\omega)\right) \right\}, \]
\noindent where the first expectation is taken with respect to the product measure of the two random variables $\omega$ and $U$. Furthermore,
\begin{equation*}
    \begin{split}
        & \left\|\nabla_{\bm{\theta}}^{\mu} F\left(\bm{W},\bm{H}(\bm{\theta},\omega)\right) - \nabla_{\bm{\theta}}F\left(\bm{W},\bm{H}(\bm{\theta},\omega)\right)  \right\| \\
         &\quad = \left\| 2 \Re\bigg(\frac{\partial^{\circ}}{\partial \bm{z}} F(\bm{W},\bm{z}) \bigg\vert_{\bm{z} = \bm{H}(\bm{\theta},\omega)} \left(\bm{\varepsilon}^R + j \bm{\varepsilon}^I\right) \bigg)\right\|\\
         &\quad \leq\ 2B_F \left\|\bm{\varepsilon}^R + j \bm{\varepsilon}^I\right\|
    \end{split}
\end{equation*}
\noindent where $\bm{\varepsilon}^R \triangleq \nabla_{\bm{\theta}} \Re\left(\bm{H}(\bm{\theta},\omega)\right) - \nabla_{\bm{\theta}} \bm{H}_{\mu}^R(\bm{\theta},\omega)$ and $\bm{\varepsilon}^I \triangleq \nabla_{\bm{\theta}} \Im\left(\bm{H}(\bm{\theta},\omega)\right) - \nabla_{\bm{\theta}} \bm{H}_{\mu}^I(\bm{\theta},\omega)$, and we used Lemma \ref{lemma: boundedness of Wirtinger gradient}. Next, we observe (from equivalence of norms) that 
\begin{equation*}
    \begin{split}
\|\bm{\varepsilon}^R\| &\leq \sqrt{M} \max_i \left\| \left(\nabla_{\bm{\theta}} \Re\left(\bm{H}(\bm{\theta},\omega)\right) - \nabla_{\bm{\theta}} \bm{H}_{\mu}^R(\bm{\theta},\omega) \right)_i\right\|\\ & \leq   \frac{\mu}{2} L_{H,1}\sqrt{MS}, 
\end{split}
\end{equation*}
\noindent where, given a matrix $\bm{A}$, $(\bm{A})_i$ denotes the $i$-th column vector, and the last bound follows as in the proof of \cite[Theorem 1]{arXiv:Kumar_etal}. The same procedure can be repeated for bounding $\|j\bm{\varepsilon}^I\|$ and hence a simple application of the triangle inequality and subsequently Jensen's inequality yield the desired result.
\end{proof}
\textit{The Moreau envelope:} Let us write the objective function of \eqref{eqn: first-stage problem} as $\phi(\bm{\theta}) \triangleq -f(\bm{\theta}) + \delta_{\Theta}(\bm{\theta})$, where $\delta_{\Theta}(\bm{\theta})$ is the indicator function for the convex set $\Theta$. Given some penalty parameter $\lambda > 0$, we define the proximity operator as 
\[\textbf{prox}_{\lambda \phi}(\bm{u}) \triangleq \underset{\bm{\theta} \in \mathbb{R}^S}{\arg\min} \left\{\phi(\bm{\theta}) + \frac{1}{2\lambda}\|\bm{u}-\bm{\theta}\|^2 \right\}, \]
\noindent and the corresponding Moreau envelope as
\[ \phi^{\lambda}(\bm{u}) \triangleq \min_{\bm{\theta} \in \mathbb{R}^S} \left\{\phi(\bm{\theta}) + \frac{1}{2\lambda}\|\bm{u}-\bm{\theta}\|^2 \right\}.\]
\noindent It is well-known that the Moreau envelope with parameter $\lambda > \rho$ (the weak convexity constant) is smooth even if $\phi(\cdot)$ is not, and the magnitude of its gradient can be used as a near-stationarity measure of the non-smooth problem of interest. Indeed, if a point $\bm{\theta}$ is $\epsilon$-stationary for the Moreau envelope, then it is close to an near-stationary point of \eqref{eqn: two-stage problem}. This (standard) approach is adopted in this work, following \cite{SIAMOpt:Davis,arXiv:Pougk-Kal}, among others.

\begin{theorem} \label{thm: convergence analysis}
Let Assumption \textnormal{\ref{assumption: oracle assumption}} be in effect and assume that $\{\bm{\theta}_t\}_{t = 0}^T$, $T > 0$, is generated by \textnormal{\ref{Algorithm: ZoSGA}}, where $\bm{\theta}_{t^*}$ is the point that the algorithm returns. For any $\bar{\rho} > \rho$, it holds that
\begin{align}
      &  \mathbb{E}\left\{\left\| \nabla \phi^{1/\bar{\rho}}(\bm{\theta}_{t^*})\right\|^2\right\} \label{eqn: returning point Moreau gradient}\\
      &\ \  \leq \frac{\bar{\rho}}{\bar{\rho}-\rho}\Bigg( \frac{\phi^{1/\bar{\rho}}(x_0) - \min \phi(x) +  C_2 \bar{\rho} \sum_{t = 0}^T \eta_t^2 }{\sum_{t=0}^T \eta_t} + C_1 \bar{\rho}\mu\Bigg) \nonumber
\end{align}
\noindent where, letting $\Delta_{\Theta}$ be the diameter of $\Theta$,
\[C_{1} \triangleq 2\Delta_{\Theta} B_F  L_{H,1}\sqrt{M S},\quad C_{2} \triangleq 2B_F^2 L_{H,0}^2(S^2 + 2S).\] 
\noindent Moreover, if we set $\bar{\rho} = 2\rho$, and
\[ \eta_t = \sqrt{\frac{\Delta_f}{2C_2 \rho  (T+1)}},\qquad \textnormal{for all }t\geq 0,\]
\noindent for some $\Delta_f \geq \phi^{1/(2\rho)}(\bm{\theta}_0) - \min  \phi(\bm{\theta})$, then it holds that
\begin{equation} \label{eqn: bound on Moreau envelope}
    \begin{split}
     \mathbb{E}\left\{\left\| \nabla \phi^{1/(2\rho)}(\bm{\theta}_{t^*})\right\|^2\right\} \leq &\ 8\left(\sqrt{\frac{\Delta_f \rho C_2}{2(T+1)}} + C_1\rho \mu\right).
    \end{split}
\end{equation}
\end{theorem}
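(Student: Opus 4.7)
The plan is to follow the Moreau envelope framework for nonsmooth, weakly convex stochastic optimization (as in \cite{SIAMOpt:Davis,arXiv:Pougk-Kal}), adapted to the bias introduced by the zeroth-order gradient estimator. Since $\phi = -f + \delta_\Theta$ is $\rho$-weakly convex on $\Theta$ by Theorem \ref{thm: gradient of f} and $\bar{\rho} > \rho$, the proximal map $\widehat{\bm{\theta}}_t \triangleq \textbf{prox}_{\phi/\bar{\rho}}(\bm{\theta}_t)$ is well-defined, the envelope $\phi^{1/\bar{\rho}}$ is continuously differentiable, and $\nabla \phi^{1/\bar{\rho}}(\bm{\theta}_t) = \bar{\rho}(\bm{\theta}_t - \widehat{\bm{\theta}}_t)$. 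The goal is to establish a stochastic descent recursion for $\phi^{1/\bar{\rho}}$ along the ZoSGA iterates, which, once summed and telescoped, yields the stated bound under the randomized-index return rule of Algorithm \ref{Algorithm: ZoSGA}.

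The key step is a one-step descent inequality. I would first invoke the Moreau envelope definition at $\bm{\theta}_{t+1}$ to write $\phi^{1/\bar{\rho}}(\bm{\theta}_{t+1}) \le \phi(\widehat{\bm{\theta}}_t) + \tfrac{\bar{\rho}}{2}\|\bm{\theta}_{t+1} - \widehat{\bm{\theta}}_t\|^2$, then use non-expansiveness of the projection onto $\Theta$ (noting $\widehat{\bm{\theta}}_t \in \Theta$) to unroll $\|\bm{\theta}_{t+1} - \widehat{\bm{\theta}}_t\|^2 \le \|\bm{\theta}_t - \widehat{\bm{\theta}}_t\|^2 + 2\eta_t \langle \bm{D}_{\mu}, \bm{\theta}_t - \widehat{\bm{\theta}}_t\rangle + \eta_t^2\|\bm{D}_{\mu}\|^2$. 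Taking the conditional expectation given the iterate history, Lemma \ref{lemma: bound on sample gradient variance} controls $\mathbb{E}\|\bm{D}_{\mu}\|^2$ (contributing the $C_2$ constant), while Lemma \ref{lemma: bound on difference between expected gradient and expected estimator} together with Cauchy--Schwarz and $\|\bm{\theta}_t - \widehat{\bm{\theta}}_t\| \le \Delta_\Theta$ converts the bias $\widehat{\nabla} f(\bm{\theta}_t) - \nabla f(\bm{\theta}_t)$ into an additive error of order $\mu C_1$. The cross term with $\nabla f(\bm{\theta}_t)$ is handled by $\rho$-weak concavity of $f$ on $\Theta$, which (equivalently, $\rho$-weak convexity of $\phi$ restricted to $\Theta$) yields $\langle \nabla f(\bm{\theta}_t), \bm{\theta}_t - \widehat{\bm{\theta}}_t\rangle \le -(\phi(\bm{\theta}_t) - \phi(\widehat{\bm{\theta}}_t)) + \tfrac{\rho}{2}\|\bm{\theta}_t - \widehat{\bm{\theta}}_t\|^2$. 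Finally, the $(\bar{\rho}-\rho)$-strong convexity of $\phi(\cdot) + \tfrac{\bar{\rho}}{2}\|\cdot - \bm{\theta}_t\|^2$, for which $\widehat{\bm{\theta}}_t$ is the minimizer, supplies the crucial ``anti-smoothing'' estimate $\phi(\bm{\theta}_t) - \phi(\widehat{\bm{\theta}}_t) \ge \tfrac{2\bar{\rho} - \rho}{2}\|\bm{\theta}_t - \widehat{\bm{\theta}}_t\|^2$. Substituting everything and rewriting $\|\bm{\theta}_t - \widehat{\bm{\theta}}_t\|^2 = \bar{\rho}^{-2}\|\nabla \phi^{1/\bar{\rho}}(\bm{\theta}_t)\|^2$ produces the descent-type bound
\[
\mathbb{E}\{\phi^{1/\bar{\rho}}(\bm{\theta}_{t+1})\,|\,\mathcal{F}_t\} \le \phi^{1/\bar{\rho}}(\bm{\theta}_t) - \tfrac{\eta_t(\bar{\rho}-\rho)}{\bar{\rho}}\|\nabla \phi^{1/\bar{\rho}}(\bm{\theta}_t)\|^2 + \bar{\rho}\eta_t \mu C_1 + \bar{\rho} \eta_t^2 C_2.
\]

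The remainder is mostly bookkeeping. Taking total expectations, telescoping from $t=0$ to $T$, and lower bounding $\mathbb{E}\{\phi^{1/\bar{\rho}}(\bm{\theta}_{T+1})\} \ge \min \phi$ gives $\frac{\bar{\rho}-\rho}{\bar{\rho}}\sum_t \eta_t \mathbb{E}\|\nabla\phi^{1/\bar{\rho}}(\bm{\theta}_t)\|^2 \le \phi^{1/\bar{\rho}}(\bm{\theta}_0) - \min\phi + \bar{\rho}C_2 \sum_t \eta_t^2 + \bar{\rho}\mu C_1 \sum_t \eta_t$. Dividing by $\frac{\bar{\rho}-\rho}{\bar{\rho}}\sum_t\eta_t$ and identifying the left-hand side with $\mathbb{E}\{\|\nabla\phi^{1/\bar{\rho}}(\bm{\theta}_{t^*})\|^2\}$ via the sampling distribution $\mathbb{P}(t^* = t) = \eta_t/\sum_i \eta_i$ yields \eqref{eqn: returning point Moreau gradient}. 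For the specialized bound \eqref{eqn: bound on Moreau envelope}, I would set $\bar{\rho} = 2\rho$ so that $\bar{\rho}/(\bar{\rho}-\rho) = 2$, plug the constant step size into the identities $\sum_t\eta_t = (T+1)\eta$ and $\sum_t\eta_t^2/\sum_t\eta_t = \eta$, choose $\eta = \sqrt{\Delta_f/(2 C_2 \rho(T+1))}$ to balance the $\Delta_f/((T+1)\eta)$ and $2\rho C_2 \eta$ terms, and simplify.

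The main obstacle is the cohabitation of three subtleties in a single inequality: the \emph{bias} of the zeroth-order estimator (handled via Lemma \ref{lemma: bound on difference between expected gradient and expected estimator} together with compactness of $\Theta$), the \emph{weak convexity} of $\phi$ (which precludes direct linearization and forces use of the Moreau envelope surrogate together with the strongly convex proximal subproblem), and the \emph{projection step} (requiring non-expansiveness). The subtle interplay between the $\rho$ from weak convexity and the $\bar{\rho}$ of the envelope must leave a strictly positive coefficient $(\bar{\rho}-\rho)/\bar{\rho}$ on the $\|\nabla \phi^{1/\bar{\rho}}\|^2$ term; that is precisely where the $\bar{\rho}/(\bar{\rho}-\rho)$ prefactor in \eqref{eqn: returning point Moreau gradient} comes from, and it explains the $\bar{\rho} = 2\rho$ choice as a natural minimizer of this blow-up in the specialized rate.
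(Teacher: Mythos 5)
Your proposal is correct and follows essentially the same route as the paper's proof: the Moreau-envelope descent recursion of Davis--Drusvyatskiy, non-expansiveness of the projection, Lemma \ref{lemma: bound on sample gradient variance} for the $C_2$ term, Lemma \ref{lemma: bound on difference between expected gradient and expected estimator} with $\Delta_\Theta$ for the $C_1\mu$ bias term, the $(\bar{\rho}-\rho)$-strong convexity of the proximal subproblem, telescoping, and the randomized-index identification. Your step size $\eta_t=\sqrt{\Delta_f/(2C_2\rho(T+1))}$ matches the theorem statement and correctly yields \eqref{eqn: bound on Moreau envelope} (the $\rho$-free expression appearing in the paper's proof body is a typo).
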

\begin{proof}
\par For any $t \geq 0$, we have $\widehat{\nabla}f(\bm{\theta}_t) \equiv \mathbb{E}_{[t]}\left\{\bm{D}_{\mu}(\bm{\theta},\omega,\bm{U})\right\}$, where $\mathbb{E}_{[t]}\{\cdot\} \triangleq \mathbb{E}\{\cdot \vert \bm{U}_{t-1},\omega_{t-1},\ldots,\bm{U}_0,\omega_0\}$ (see Lemma \ref{lemma: bound on difference between expected gradient and expected estimator}). We define the point $\hat{\bm{\theta}}_t \triangleq \textbf{prox}_{\phi/\bar{\rho}}(\bm{\theta}_t)$. Then, we obtain
\begin{equation*}
    \begin{split}
       & \mathbb{E}_{[t]}\left\{\phi^{1/\bar{\rho}}(\bm{\theta}_{t+1})\right\} \leq  \mathbb{E}_{[t]}\left\{f(\bar{\bm{\theta}}_{t}) +  \frac{\bar{\rho}}{2}\big\|\hat{\bm{\theta}}_t - \bm{\theta}_{t+1}\big\|^2\right\}\\
      &\quad  = \phi\big(\hat{\bm{\theta}}_t\big) + \frac{\bar{\rho}}{2}\mathbb{E}_{[t]}\bigg\{\big\| \textbf{proj}_{\Theta}(\hat{\bm{\theta}}_t)\\ &\qquad \qquad      -\textbf{proj}_{\Theta}\left(\bm{\theta}_t + \eta_t \bm{D}_{\mu}(\bm{\theta}_t,\omega_t,\bm{U}_t) \right) \big\|^2 \bigg\} \\
       &\quad \leq  \phi\big(\hat{\bm{\theta}}_t\big) + \frac{\bar{\rho}}{2}\mathbb{E}_{[t]}\left\{\left\|\bm{\theta}_t + \eta_t \bm{D}_{\mu}(\bm{\theta}_t,\omega_t,\bm{U}_t)  - \hat{\bm{\theta}}_t \right\|^2 \right\} \\
     &\quad  \leq  \phi\big(\hat{\bm{\theta}}_t\big) + \frac{\bar{\rho}}{2}\big\|\bm{\theta}_t - \hat{\bm{\theta}}_t\big\|^2 \\
       &\qquad  + \bar{\rho}\eta_t\mathbb{E}_{[t]}\left\{\left\langle \hat{\bm{\theta}}_t - \bm{\theta}_t, -\bm{D}_{\mu}(\bm{\theta}_t,\omega_t,\bm{U}_t)\right\rangle \right\} + C_2\bar{\rho}\eta_t^2  \\       
      &\quad =  \phi\big(\hat{\bm{\theta}}_t\big) + \frac{\bar{\rho}}{2}\big\|\bm{\theta}_t - \hat{\bm{\theta}}_t\big\|^2 + \bar{\rho}\eta_t \left\langle \hat{\bm{\theta}}_t - \bm{\theta}_t, -{\nabla} f(\bm{\theta})\right\rangle  \\ &\qquad + \bar{\rho} \eta_t\left\langle \hat{\bm{\theta}}_t - \bm{\theta}_t,{\nabla}f(\bm{\theta}_t)-\widehat{\nabla}f(\bm{\theta}_t) \right\rangle  +  C_2\bar{\rho}\eta_t^2 \\
     &\quad \leq  \phi^{1/\bar{\rho}}(\bm{\theta}_t) + \bar{\rho}\eta_t\left\langle \hat{\bm{\theta}}_t - \bm{\theta}_t, -\nabla f(\bm{\theta}) \right\rangle \\ &\qquad + \bar{\rho} \eta_t \left\|\hat{\bm{\theta}}_t - \bm{\theta}_t \right\| \left\|\widehat{\nabla} f(\bm{\theta}) - \nabla f(\bm{\theta}) \right\| + C_2 \bar{\rho}\eta_t^2\\
       &\quad \leq  \phi^{1/\bar{\rho}}(\bm{\theta}_t) + \bar{\rho}\eta_t\left(f(\bm{\theta}_t)-f\big(\hat{\bm{\theta}}_t\big) + \frac{\rho}{2}\left\|\bm{\theta}_t - \hat{\bm{\theta}}_t\right\|^2\right) \\
       &\qquad +C_1\bar{\rho} \mu \eta_t +  C_2\bar{\rho}\eta_t^2,
    \end{split}
\end{equation*}
\noindent where in the second inequality we used the non-expansiveness of the projection, in the fourth inequality we used Cauchy-Schwartz as well as the definition of the Moreau envelope, and in the fifth inequality we used the weak convexity of $-f(\cdot)$, Lemma \ref{lemma: bound on difference between expected gradient and expected estimator} and the fact that $\Theta$ is assumed to be compact (and hence there exists a constant $\Delta_{\Theta} > 0$ such that $\big\|\hat{\bm{\theta}}_t- \bm{\theta}_t\| \leq \Delta_{\Theta}$). Next, by following exactly the developments in \cite[Section 3.1]{SIAMOpt:Davis}, we notice that the mapping $\bm{\theta} \mapsto -f(\bm{\theta}) + \frac{\bar{\rho}}{2}\|\bm{\theta}-\bm{\theta}_t\|^2$ is strongly convex with parameter $\bar{\rho}-\rho$, and is minimized at $\hat{\bm{\theta}}_t$, thus we obtain
\begin{equation*}
\begin{split}
& f(\hat{\bm{\theta}}_t) - f(\bm{\theta}_t)  - \frac{\rho}{2}\|\bm{\theta}_t - \hat{\bm{\theta}}_t\|^2 =  \left(- f(\bm{\theta}_t)  + \frac{\bar{\rho}}{2}\left\|\hat{\bm{\theta}}_t - \hat{\bm{\theta}}_t\right\|^2\right)   \\ &\qquad -\left(-f(\hat{\bm{\theta}}_t) + \frac{\bar{\rho}}{2}\left\|{\bm{\theta}}_t - \hat{\bm{\theta}}_t\right\|^2 \right)  + \frac{\bar{\rho}-\rho}{2}\left\|\bm{\theta}_t - \hat{\bm{\theta}}_t\right\|^2  \\
& \quad \geq (\bar{\rho}-\rho)\left\|\bm{\theta}_t - \hat{\bm{\theta}}_t\right\|^2 \equiv \frac{\bar{\rho}-\rho}{\bar{\rho}^2}\left\|\nabla \phi^{1/\bar{\rho}}(\bm{\theta}_t) \right\|^2, 
\end{split}
\end{equation*}
\noindent where the last equivalence follows from \cite[Lemma 2.2]{SIAMOpt:Davis}. Hence,
\begin{equation*}
    \begin{split}
       \mathbb{E}_{[t]}\left\{\phi^{1/\bar{\rho}}(\bm{\theta}_{t+1})\right\} \leq &\   \phi^{1/\bar{\rho}}(\bm{\theta}_t) - \frac{\eta_t (\bar{\rho}-\rho)}{\bar{\rho}}\left\| \nabla \phi^{1/\bar{\rho}}(\bm{\theta}_t)\right\|^2 \\&\quad + C_1 \bar{\rho}\mu\eta_t + C_2\bar{\rho}\eta_t^2.
    \end{split}
\end{equation*}
\noindent Taking expectations with respect to the history $\omega_0$, $\bm{U}_0$, $\ldots,\ \omega_{t-1}$, $\bm{U}_{t-1}$ and using the total expectation, yields 
\begin{equation*} \label{eqn: convergence analysis Moreau envelope expected descent}
\begin{split}
    & \mathbb{E}\left\{ \phi^{1/\bar{\rho}}(\bm{\theta}_{t+1}) \right\} \leq   \mathbb{E}\left\{\phi^{1/\bar{\rho}}(\bm{\theta}_t)\right\} + \bar{\rho}\eta_t(\mu C_1 + C_2\eta_t)  \\   
   &\qquad\qquad\qquad\qquad\qquad  -  \frac{\eta_t(\bar{\rho}-\rho)}{\bar{\rho}} \mathbb{E}\left\{\left\| \nabla \phi^{1/\bar{\rho}} (\bm{\theta}_t)\right\|^2\right\},
     \end{split}
\end{equation*}
\noindent Subsequently, we can unfold the latter inequality to obtain
\begin{equation*}
    \begin{split}
        \mathbb{E}\left\{\phi^{1/\bar{\rho}}(\bm{\theta}_{T+1})\right\} \leq\ & \phi^{1/\bar{\rho}}(\bm{\theta}_0) + C_1\bar{\rho} \mu  \sum_{t=0}^T \eta_t +C_2 \bar{\rho} \sum_{t = 0}^T \eta_t^2\\&\quad\quad  - \frac{\bar{\rho}-\rho}{\bar{\rho}} \sum_{t=0}^T \eta_t\mathbb{E}\left\{\left\| \nabla \phi^{1/\bar{\rho}}(\bm{\theta}_t)\right\|^2\right\}.
    \end{split}
\end{equation*}
\noindent Then, we can lower bound the left-hand side by $\phi(\bm{\theta}^*) \triangleq \min_{\bm{\theta} \in \Theta} f(\bm{\theta})$, and rearrange, to obtain
\begin{equation*}
    \begin{split}
   & \frac{1}{\sum_{t=0}^T \eta_t} \sum_{t=0}^T \eta_t \mathbb{E}\left\{\left\| \nabla \phi^{1/\bar{\rho}}(\bm{\theta}_t)\right\|^2\right\} \\
   &\quad \leq  \frac{\bar{\rho}}{\bar{\rho}-\rho}\Bigg( \frac{\phi^{1/\bar{\rho}}(\bm{\theta}_0) - \phi(\bm{\theta}^*) + C_2\bar{\rho} \sum_{t = 0}^T \eta_t^2 }{\sum_{t=0}^T \eta_t} + C_1\bar{\rho} \mu\Bigg).
    \end{split}
\end{equation*}
\noindent Since the left-hand side is exactly $\mathbb{E}\{\| \nabla \phi^{1/\bar{\rho}}(\bm{\theta}_{t^*})\|^2\}$, we deduce that \eqref{eqn: returning point Moreau gradient} holds. Finally, setting $\bar{\rho} = 2\rho$, letting ${\Delta_f} \geq \phi^{1/\bar{\rho}}(\bm{\theta}_0) - \min \phi(\bm{\theta})$, and choosing a constant step size as
\[ \eta_t = \sqrt{\frac{\Delta_f}{2C_2(T+1)}},\qquad \textnormal{for all }t\geq 0,\]
\noindent we obtain \eqref{eqn: bound on Moreau envelope} which completes the proof.
\end{proof}
\begin{remark} \label{remark: complexity}
Note that choosing 
$\mu = \mathcal{O}\big(1/
\sqrt{(M T)}\big)$
yields that 
$\mathbb{E}\big\{\big\Vert \nabla \varphi^{1/(2\rho)}\big(\bm{\theta}_{t^*}\big)\big\Vert
     \big\} \leq \epsilon$, 
after $\mathcal{O}\left(\sqrt{S}\epsilon^{-4}\right)$ iterations.
\end{remark}

\section{Case Study: Sumrate Maximization}\label{sec: case study}
Capitalizing on a standard IRS-aided MISO downlink scenario (see Figure \ref{fig:env_setup} in Section \ref{sec: Simulations}), our goal here is to maximize the total downlink rate of $K$ users actively serviced by an AP with $M$ antennas, while passively aided by one or multiple IRSs, arbitrarily spatially placed. As usual, we assume dynamic (reactive) AP precoders, while the IRS beamformers are static (non-reactive) tunable elements.
We make no sensing assumptions on the IRSs, i.e., the IRSs are completely passive network elements.

Each of the users $k=1,\ldots,K$ experiences a random effective channel denoted by $\bm{h}_k\left(\bm{\theta},\omega \right)$, indexed by the IRS parameter vector $\bm{\theta}$ as well as the usual state of nature $\omega\in \Omega$ describing \textit{unobservable} random propagation patterns for each value of $\bm{\theta}$. In other words, $\bm{h}_k\left(\bm{\theta},\omega \right)$ is a random channel field with spatial variable $\bm{\theta}$. We make the standard assumption that the effective channels $\bm{h}_k\left(\bm{\theta},\omega \right),k=1,\ldots,K$, are known to the AP at the time of transmission \cite{sca:guo2020larsson,sca:zhao2020tts}. Note that the implementation complexity of estimating effective channels in our setting is exactly the same as that in conventional multi-user downlink beamforming (i.e., involving no IRSs), \textit{regardless} of the number and/or spatial configuration of the IRSs; no extra hardware or customized scheduling schemes are required on either the AP or the IRSs assisting the network.
\par The QoS of user $k$ is measured by the corresponding SINR,
\begin{align}\nonumber
\begin{aligned} \label{eq:2}
\text{SINR}_k(\bm{W},\bm{h}_k(\bm{\theta}, \omega )) \triangleq \frac{\left|\bm{h}_k^\hermtr(\bm{\theta}, \omega)\bm{w}_k\right|^2}{{\sum}_{{j \in \mathbb{N}_{K}^+ \setminus k}}\left|\bm{h}^\hermtr_k(\bm{\theta}, \omega)\bm{w}_j\right|^2 + \sigma^2_k},
\end{aligned}
\end{align}
where $\bm{W} {=} \mathrm{vec}(\begingroup \setlength\arraycolsep{2pt} \begin{bmatrix} \bm{w}_1 & \bm{w}_2 & \cdots & \bm{w}_K \end{bmatrix} \endgroup) \in \mathbb{C}^{M_U\triangleq (M \times K)}$, $\bm{w}_k$ is a transmit precoding vector and $\sigma^2_k$ is the noise variance for user $k$, respectively. Then, the \textit{weighted sumrate utility} of the network is defined as
\begin{equation}\nonumber
{F}(\bm{W},\bm{H}(\bm{\theta},\omega)) \triangleq \sum_{k=1}^K \alpha_k \log_2\left(1 + \textnormal{SINR}_k\left(\bm{W},\bm{h}_k\left(\bm{\theta},\omega  \right)\right)\right),
\end{equation}
\noindent with $\bm{H} = \mathrm{vec}(\begin{bmatrix} \bm{h}_1 \ldots \bm{h}_K \end{bmatrix})\in \mathbb{C}^{M_U}$, and $\alpha_k\ge0$ the weight associated with user $k$.
We are interested in maximizing the sumrate of the network jointly by selecting instantaneous-optimal dynamic AP precoders $\bm{W}$, and on-average-optimal static IRS beamformers $\bm{\theta}$ \cite{sca:guo2020larsson,sca:zhao2020tts}, i.e., we are interested in the problem
\begin{equation}\label{eqn: sumrate two-stage problem} \tag{2SSRM}
\max_{\bm{\theta} \in \Theta} \mathbb{E} 
\bigg\{ 
\max_{\bm{W}: \|\bm{W}\|^2 \leq P} 
{F}\left(\bm{W},\bm{H}(\bm{\theta},\omega)\right)
\bigg\},
\end{equation}
where $P>0$ is a total power budget at the AP, and $\Theta$ is a real convex and compact feasible set of amplitudes and phases. Problem \eqref{eqn: sumrate two-stage problem} is an instance of \eqref{eqn: two-stage problem}. 
\par \textit{Assumption Compatibility:} Let us now briefly discuss the compatibility of Assumption \ref{assumption: two-stage problem} with problem \eqref{eqn: sumrate two-stage problem}. We firstly note that conditions \textbf{(A1)}, \textbf{(A2)} are both satisfied. Condition \textbf{(A5)} is also satisfied in light of uniform boundedness. Condition \textbf{(A3)} of Assumption \ref{assumption: two-stage problem} imposes regularity that is required for the grounded development of our optimization scheme and for its convergence analysis. Also, observe that the boundedness assumption in condition \textbf{(A3)} is natural, since (IRS-aided) wireless channels are always bounded in practice. While we usually have no information on the analytical properties of the effective channel, this condition is easily satisfied in widely used channel models of IRS-aided systems, see, e.g., \cite{sca:zhao2020tts} or Section \ref{sec: Simulations}.
\par We next showcase that the regularity condition \textbf{(A4)} also holds under several reasonable circumstances. In particular, we identify three typical situations under which condition \textbf{(A4)} (i.e. weak concavity of the sample objective function of the first-stage problem \eqref{eqn: first-stage problem}) is readily satisfied. These will be stated here for completeness, and the reader is referred to Appendix \ref{apdx: Weak concavity} for a technical discussion showcasing how weak concavity can be shown in each of the following cases. 
\par Firstly, condition \textbf{(A4)} is readily satisfied in cases where the second-stage problem \eqref{eqn: second-stage problem} admits a unique solution, assuming, of course, the \emph{strong second-order sufficient optimality conditions} for the second-stage problem (we refer the reader to Appendix \ref{apdx: Weak concavity} for a precise description of these conditions). Indeed, in this case one can invoke the Implicit Function Theorem \cite[Theorem 1B.1]{Springer:DonRock} to showcase twice-continuous differentiability of $f(\cdot)$ (as well as of $\max_{\bm{W} \in \mathcal{W}} F(\bm{W},\bm{H}(\bm{\theta},\omega))$, for a.e. $\omega \in \Omega$), and since $f$ is considered on a compact set, weak concavity then follows immediately. On the other hand, if the solution set of the second-stage problem is connected (instead of a singleton), and instead of the strong second-order sufficient conditions, the problem satisfies the regularity condition given in \cite[Assumption 4]{MathOR:Shapiro}, we also obtain twice-continuous differentiability and thus weak concavity. Finally, if $\bm{H}(\cdot,\omega)$ is real-analytic for a.e. $\omega \in \Omega$ (which often holds for channel models appearing in the literature; e.g. see \cite{sca:zhao2020tts} and Section \ref{sec: Simulations}), then one can show that the function $\max_{\bm{W} \in \mathcal{W}} F(\bm{W},\bm{H}(\bm{\theta},\omega))$ is \emph{sub-analytic}, and satisfies the \emph{\L{}ojasiewicz inequality} (see \cite{LojasiewiczInequality}) at every $\bm{\theta} \in \Theta$, with uniform exponent. If the \L{}ojasiewicz constant is uniformly bounded and the second-stage problem satisfies the strong second-order sufficient optimality conditions, weak concavity also follows (see Appendix \ref{apdx: Weak concavity} for the technical details).

\subsection{Practical Per-Iteration Complexity for Sumrate Maximization} \label{subsec: complexity_analysis}
Before proceeding with the associated simulations of our case study, it would be useful to calculate the practical \emph{per-iteration complexity} of ZoSGA. To that end, we assume that the (deterministic here-and-now) second-stage problem is (approximately) solved using $T_2$ iterations of the well-known WMMSE algorithm \cite{wmmseShi2011}, which has a documented computational complexity given by $\mathcal{O}(T_2 K^2 M^3)$. For the first-stage optimization step, the complexity of the zeroth-order stochastic gradient approximations can be shown to be of the order of $\mathcal{O}(K^2 M + K M^2 + S)$, where $S$ is the number of IRS phase-shift parameters. Thus, the effective per-iteration complexity of ZoSGA is of order $\mathcal{O}(T_2 K^2 M^3 + S)$.
\par As mentioned in Section \ref{sec: intro}, model-based SSCO methods have been recently considered for tackling problem \eqref{eqn: sumrate two-stage problem}, mainly due to their fast convergence. Then, it would be appropriate to compare the per-iteration complexity of ZoSGA with that of a state-of-the-art SSCO method, in particular TTS-SSCO \cite{sca:zhao2020tts}.
\par In order to derive the practical \textit{per-iteration} complexity of TTS-SSCO, we need to assume that the effective channel $\bm{H}(\cdot,\omega)$ follows a specific and common cascaded model
\cite{sca:zhao2020tts}. Considering this, the complexity of TTS-SSCO can be shown to be of order $\mathcal{O}(T_H (T_2 K^2 M^3 + K^2S^2M))$, since each TTS-SSCO iteration requires $T_H$ WMMSE runs for gradient statistical approximation. We should also emphasize that this complexity is \textit{in addition to} the (highly nontrivial) computational effort required for estimating \textit{cascaded} CSI statistics (called S-CSI in \cite{sca:zhao2020tts}), which are required by TTS-SSCO.
%
It is evident from that the proposed algorithm (i.e. ZoSGA) has a significantly smaller per-iteration complexity.

\section{Simulations} \label{sec: Simulations}
Building upon the IRS-aided MISO downlink sumrate maximization case discussed in Section \ref{sec: case study}, we present a set of detailed simulations to empirically evaluate and confirm the efficacy of the proposed ZoSGA algorithm. Unless stated otherwise, all our empirical results are averaged over $2000$ independent simulations, with an additional $4$-th order Savitzky–Golay filter \cite{fig:savitzky1964smoothing} of length $500$, while $95\%$ confidence intervals are also provided. In what follows, we describe and evaluate three distinct wireless network setups and examine the obtained empirical results. The last set of experiments will involve a physical model for the IRSs, which is briefly described in the beginning of Subsection \ref{subsec:Varactor IRSs}.

\vspace{-2pt}
\subsection{Baseline Channel Model} \label{subsec:ch_model}
With the practically feasible assumption of insufficient angular spread of signals in a scattering environment, we consider both LOS and non-LOS channels. Following \cite{sca:zhao2020tts}, we consider three types of intermediate channels in our simulations, namely, the reflected channel ${\bm{h}}_{r,k}$ 
from an IRS to user $k$, the channel ${\bm{G}}$ from an AP to an IRS, as well as the direct path channel ${\bm{h}}_{d,k}$ from an AP to user $k$. We model these as general spatially correlated Rician fading channels \cite{ch_model:mckay2005general}.
Additionally, we assume that the second-order statistics of the LOS links are identical for all users, due to the large distances between an AP and its served users. 

Concretely, for each user $k$, we define the path loss adjusted versions of these channels, respectively, as
\begin{equation} 
\begin{split} \label{eq:rician_model}
    \breve{\bm{h}}_{r,k} &{\triangleq} \sqrt{\frac{\beta_{Iu}}{1{+}\beta_{Iu}}} \check{\bm{v}}_{r,k} {+} \sqrt{\frac{1}{1{+}\beta_{Iu}}} \bm{\Phi}_{r,k}^{1/2}\bm{v}_{r,k},\\
    \breve{\bm{G}} &{\triangleq} \sqrt{\frac{\beta_{AI}}{1{+}\beta_{AI}}} \check{\bm{F}} {+} \sqrt{\frac{1}{1{+}\beta_{AI}}} \bm{\Phi}_r^{1/2}\bm{F}\bm{\Phi}_d^{1/2},\\
    \breve{\bm{h}}_{d,k} &{\triangleq} \sqrt{\frac{\beta_{Au}}{1{+}\beta_{Au}}} \check{\bm{v}}_{d,k} {+} \sqrt{\frac{1}{1{+}\beta_{Au}}} \bm{\Phi}_d^{1/2}\bm{v}_{d,k},\\
\end{split}
\end{equation}
where $\bm{v}_{r,k} \in \mathbb{C}^{N \times 1}, \bm{F} \in \mathbb{C}^{N \times M}, \text{ and } \bm{v}_{d,k} \in \mathbb{C}^{M \times 1}$ are the instantaneous components (I-CSI), and $\check{\bm{v}}_{r,k} \in \mathbb{C}^{N \times 1}, \check{\bm{F}} \in \mathbb{C}^{N \times M}$, and $\check{\bm{v}}_{d,k} \in \mathbb{C}^{M \times 1}$ are the statistical components (S-CSI) of the above channels, all having i.i.d. circularly symmetric complex Gaussian (CSCG) entries with zero mean and unit variance, and with S-CSI being sampled once per simulation. The dimensions $M$ and $N$ denote the number of AP antennas and the number of passive reflecting elements in an IRS, respectively. The matrices $\bm{\Phi}_{r,k} \in \mathbb{C}^{N \times N}, \bm{\Phi}_d \in \mathbb{C}^{M \times M}$, and $\bm{\Phi}_r \in \mathbb{C}^{N \times N}$ are, in order, the reflected channel correlation matrix, the AP transmit correlation matrix, and the IRS receive correlation matrix. We assume an exponential correlation model for $\bm{\Phi}_d$ \cite{IEEECommLet:Loyka}, expressed as
\begin{align} \label{eq:Phi_d}
    \bm{\Phi}_d(i,j) = \begin{cases}r_d^{j-i},&\text{if}\quad i {\leq} j, \\
\bm{\Phi}_d(j,i),&\text{if}\quad i {>} j,
\end{cases}
\end{align}
where $r_d\in(0,1)$ is the correlation coefficient. The matrices $\bm{\Phi}_r$ and $\bm{\Phi}_{r,k}$ are modeled as Kronecker products \cite{IEEECommLet:ChoiLove} as
\begin{equation*}\label{eq:Phi}
\begin{split}
\bm{\Phi}_r &= \bm{\Phi}^h_r \otimes \bm{\Phi}^v_r, \qquad 
\bm{\Phi}_{r,k} = \bm{\Phi}^h_{r,k} \otimes \bm{\Phi}^v_{r,k},
\end{split}
\end{equation*}
where $\bm{\Phi}^h_r,\ \bm{\Phi}^h_{r,k}$, and $\bm{\Phi}^v_r, \bm{\Phi}^v_{r,k}$, for $k \in \mathbb{N}_K^+$, represent the spatial correlation matrices of the horizontal and vertical domains, respectively, and are all modeled similar to $\bm{\Phi}_d$ in \eqref{eq:Phi_d}, with spatial correlation coefficients $r_r \in (0,1)$ and $r_{r,k} \in (0,1)$ for $\bm{\Phi}_r$, and $\bm{\Phi}_{r,k}$, $k \in \mathbb{N}_K^+$, respectively.

\par The deterministic components, i.e, $\check{\bm{v}}_{r,k}, \check{\bm{F}}$, and $\check{\bm{v}}_{d,k}$ determine the moments of their respective CSIs. Lastly, the real scalar $\beta_{Au}$ denotes the Rician fading factor for the LOS channel, while $\beta_{Iu}$ and $\beta_{AI}$ are the same for the reflected channels. These factors define the relative dominance of I-CSIs and S-CSIs in their respective combined CSIs. Moreover, all intermediate channels in \eqref{eq:rician_model} suffer from an exponential path loss proportional to the path distance. We model this loss as  $L_{\alpha}(d) = \sqrt{C_0 d^{-\alpha}}$, where $d$ is path distance in meters, $\alpha$ is the path loss exponent depending upon the channel being considered, and $C_0$ is the common path-loss for when the path distance is one meter.

\par Considering both path loss and fading, we 
may adopt the standard simplified baseline model for the effective channel $\bm{h}_k(\bm{\theta}, \omega)$ of user $k$ 
in the presence of an AP and \textit{one IRS}, i.e.,
\begin{align*}
    \bm{h}_k(\bm{\theta}, \omega)
    \triangleq
    \underbrace{\bm{G}^\hermtr \text{Diag}(\bm{A} \circ e^{j{\bm{\phi}}}) \bm{h}_{r,k}}_{\text{non-LOS link}} 
    + 
    \underbrace{\bm{h}_{d,k} }_{\text{LOS link}} ,
\end{align*}
where $\bm{h}_{r,k} = L_{\alpha_{Iu}}(d_{Iu,k})\breve{\bm{h}}_{r,k}$, $\bm{G} = L_{\alpha_{AI}}(d_{AI})\breve{\bm{G}}$, and $\bm{h}_{d,k} = L_{\alpha_{Au}}(d_{Au,k})\breve{\bm{h}}_{d,k}$. We may take $\omega=\{\bm{G}, \bm{h}_{r,k}, \bm{h}_{d,k},$ $k\in \mathbb{N}_K^+\}$,
while the IRS parameters $\bm{\theta}$ are represented by amplitude and phase vectors $\bm{A} \in [0,1]^{N}$ and $\bm{\phi} \in [-2\pi,2\pi]^{N}$, respectively \cite{ch_model:wu2019intelligent}.
Adding more IRSs to the system increases the non-LOS terms comprising the effective channel of each user accordingly.
\par For slow moving users, as is generally the case, we assume that the values of S-CSI and the spatial correlation matrices remain fixed throughout the duration of the AP service. Further, we assume that IRS-to-IRS links do not contribute to the signal or the interference in the presented multi-IRS cases. Of course, the latter assumption is only made for ease of presentation. 
\par In Subsections \ref{subsec:Ideal IRS} and \ref{subsec:Varactor IRSs}, having defined the channel model, we present simulated results with ideal and physical IRSs, where physical IRSs are constrained both in terms of amplitude and phase, as well as non-linear sensitivity relative to the action space, i.e., the ranges of varactor diode capacitances \cite{irs_model:costa2021electromagnetic}.

\subsection{Networks with Ideal IRSs} \label{subsec:Ideal IRS}
\begin{figure}
  \centering  \centerline{\hspace{-16pt}\includegraphics[width=3.6in]{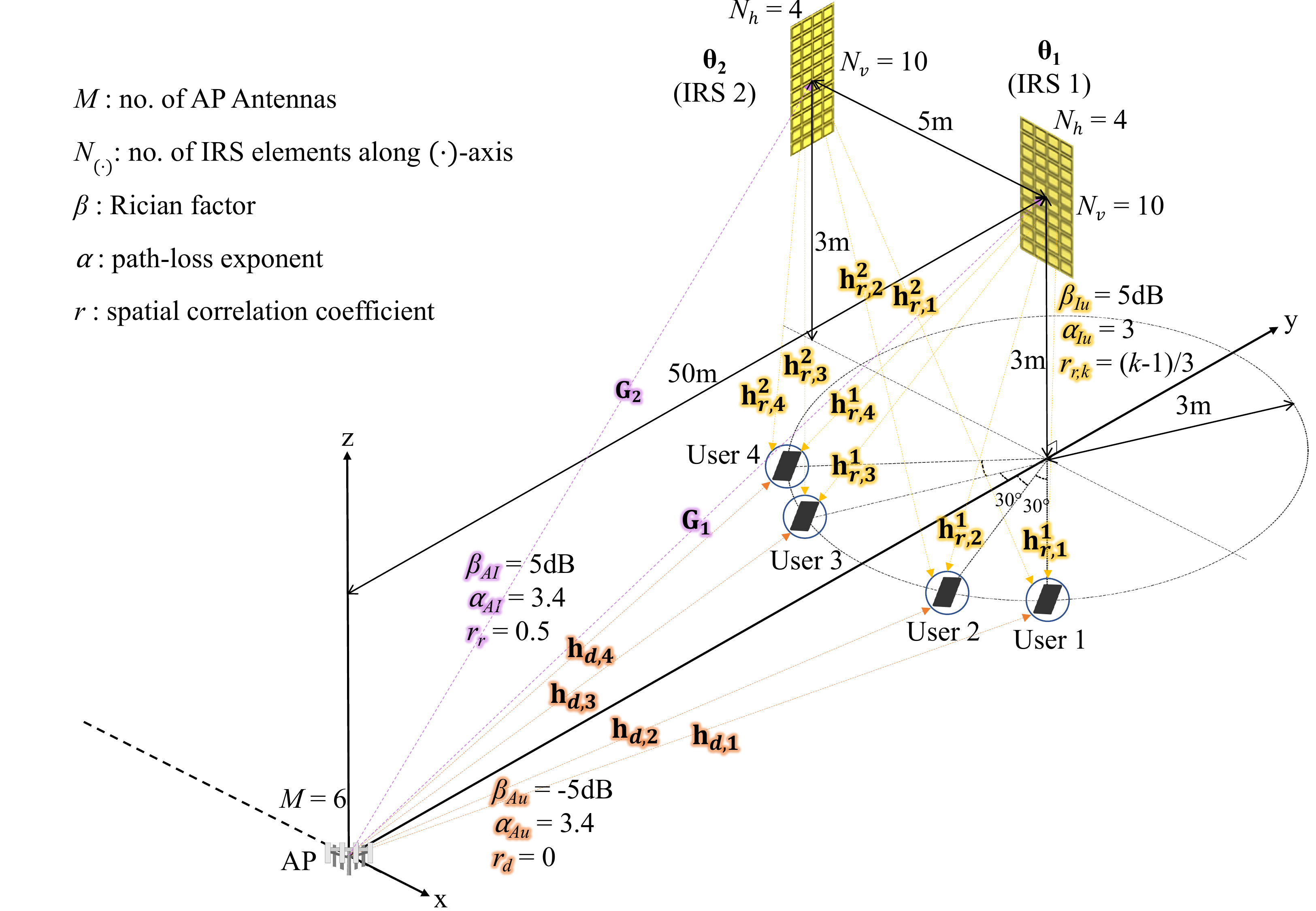}}
\vspace{6bp}
\caption{First IRS-aided network configuration (ideal IRSs).}
\label{fig:env_setup}
\end{figure}

\begin{figure*}[h]
  \centering
  \begin{subfigure}[b]{0.328\textwidth}
  \centering
  \includegraphics[width=\textwidth, height=0.75692307692\textwidth]{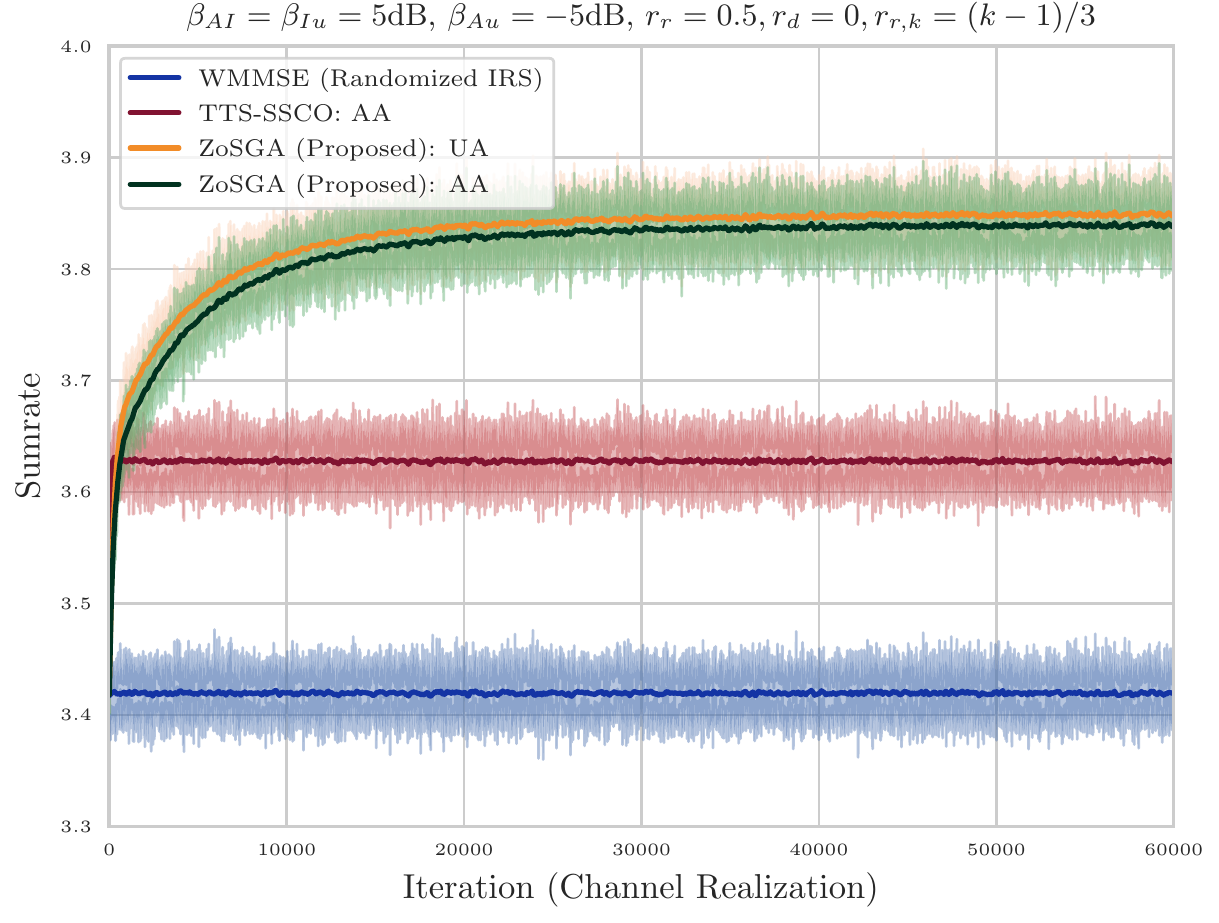}
  \caption{}
  \label{fig:converge1}
  \end{subfigure}
  \begin{subfigure}[b]{0.328\textwidth}
  \centering
  \includegraphics[width=\textwidth, height=0.75692307692\textwidth]{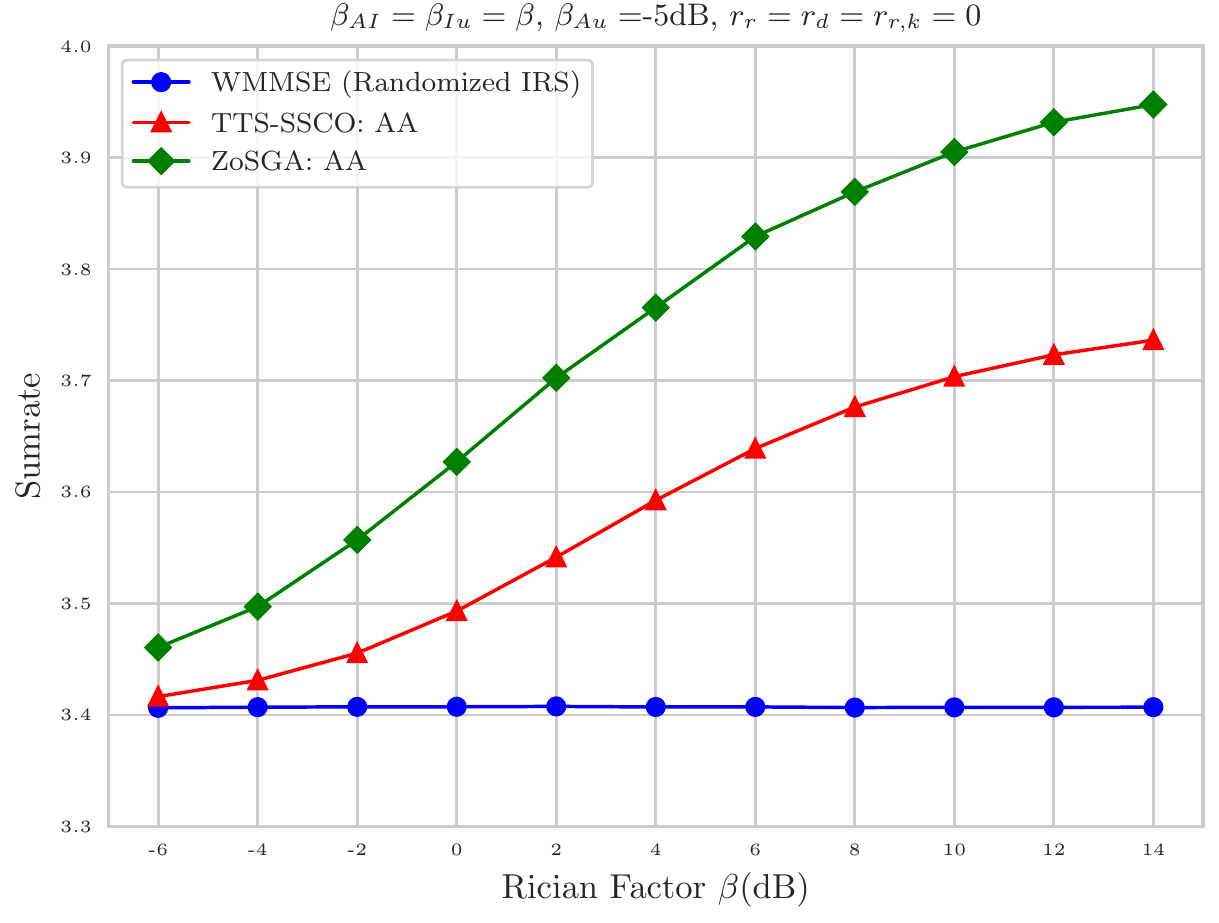}
  \caption{}
  \label{fig:rician}
  \end{subfigure}
  \begin{subfigure}[b]{0.328\textwidth}
  \centering
  \includegraphics[width=\textwidth, height=0.75692307692\textwidth]{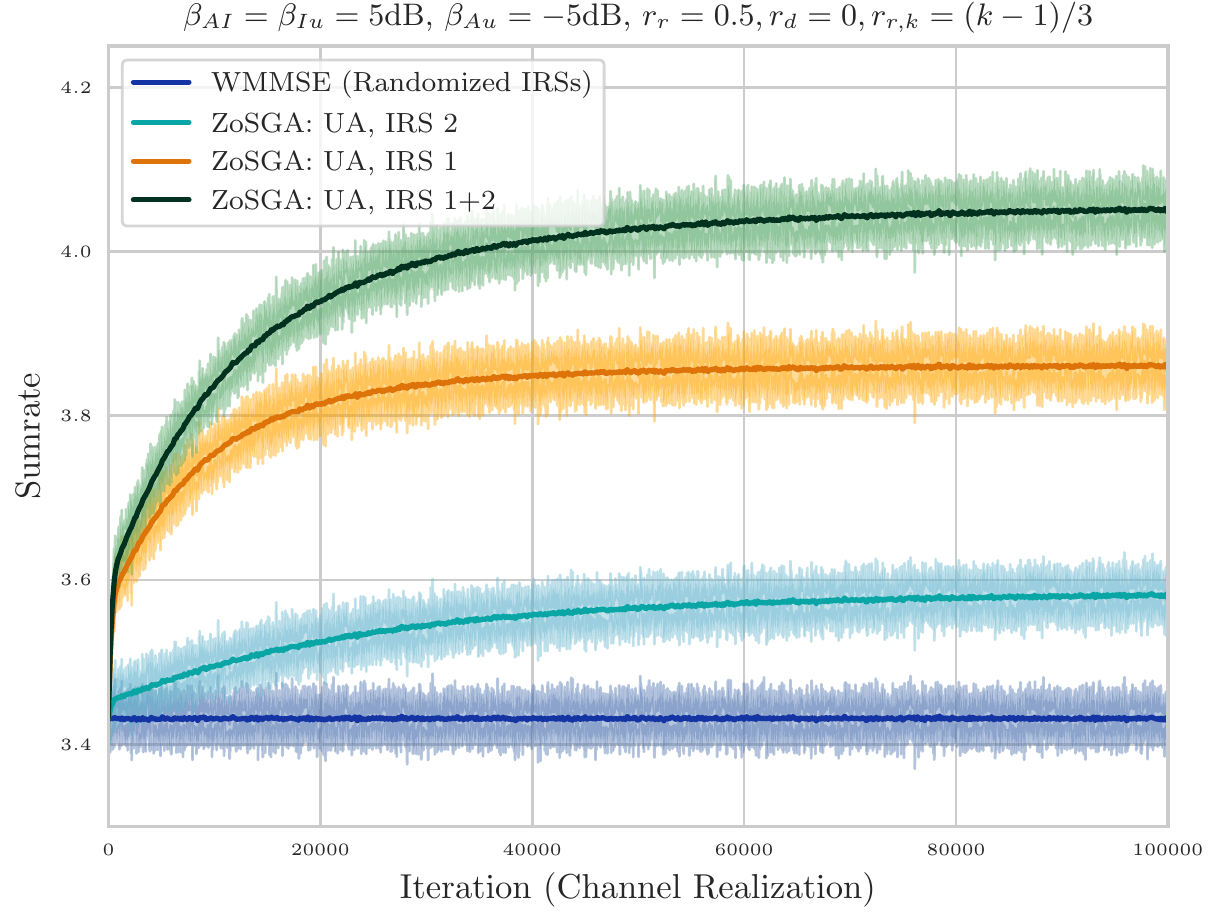}
  \caption{}
  \label{fig:converge2}
  \end{subfigure} 
\caption{(a) Average sumrates achieved by WMMSE \cite{wmmseShi2011} (random IRS phase-shifts), TTS-SSCO \cite{sca:zhao2020tts}, and ZoSGA, with only IRS 1 present (AA: Adjustable Amplitude $|$ UA: Unit Amplitude); (b) Corresponding average sumrates versus the Rician factor; (c) Average sumrates achieved by WMMSE (random IRS) and ZoSGA, with both IRSs present.}
\label{fig:3}
\vspace{-10bp}
\end{figure*}

In the first set of our simulations, we assume that we have full control over the amplitudes and the phases of IRS phase-shift elements, i.e. there is no constraint on achievable phase-amplitude pairs, and we can control these directly. We start by discussing the wireless network setting as shown in Figure \ref{fig:env_setup}. 

\par To highlight the efficacy of the presence of IRSs in a wireless network, we assume a more acute signal attenuation in the LOS path from the AP to the users. Thus, we set $\alpha_{Au}=3.4$, $\alpha_{Iu}=3$ and $\alpha_{AI}=2.2$, where the largest $\alpha_{Au}$ is the LOS path loss exponent while the remaining two are path loss exponents of IRS-User and AP-IRS links, respectively. Moreover, since the distances between the IRS and its served users are relatively small, IRS elements reflect signals with a finite angular spread and a user-location dependent mean angle in practice \cite{ch_est:yin2013coordinated}.
\par We consider two IRSs, as shown in Figure \ref{fig:env_setup}, equipped with $N {=} N_{h} {\times} N_{v}$ rectangular phase-shift elements where $N_{h} (= 4)$ and $N_{v} (= 10)$ denote the number of columns and rows, respectively ($N=40$). For the ideal case we define the controllable parameters of each IRS as a vector $\bm{\theta}_i {=}  \begin{bmatrix} \bm{\phi}_i^\trans & \hspace{-6pt}\bm{A}_i^\trans \end{bmatrix}^\trans$, for $i \in \{1,2\}$, where $\bm{\phi}_i \in [-2\pi,2\pi]^{N}$ and $\bm{A}_i \in [0,1]^{N}$ are phases and amplitudes of the IRS elements, respectively. We do not consider the relative orientations of IRSs, AP and users, as those can be incorporated via rotation offsets, if needed. The effective channel for a user $k$ can, thus, be expressed as 
\begin{align*} \label{eq: sim_model}
    \bm{h}_k(\bm{\theta},{\omega}) =
    \sum_{i=1}^{2} \underbrace{\bm{G}^\hermtr_{i} \text{Diag}(\bm{A}_i \circ e^{j{\bm{\phi}}_i}) \bm{h}^{i}_{r,k}}_{\bm{\theta}_i\text{-non-LoS link}} +
    \underbrace{\bm{h}_{d,k}}_{\text{LoS link}},
\end{align*}
where $\bm{h}^i_{r,k} = L_{\alpha^i_{Iu}}(d_{Iu,k,i})\breve{\bm{h}}^i_{r,k}$, $\bm{G}^i = L_{\alpha^i_{AI}}(d_{AI,i})\breve{\bm{G}}^i$, and $\bm{h}^i_{d,k} = L_{\alpha^i_{Au}}(d_{Au,k,i})\breve{\bm{h}}^i_{d,k}$ for $i \in \{1,2\}$. Also, for brevity, we take $\bm{\theta} {=} (\bm{\theta}_1, \bm{\theta}_2)$ and $\omega=\{\bm{G}_1,\bm{G}_2, \bm{h}^{1}_{r,k}, \bm{h}^{2}_{r,k}, \bm{h}_{d,k},$ $k\in \mathbb{N}_K^+\}$.

After defining the wireless network setting, we move forward with our simulations. In the first simulated comparison we compare the proposed algorithm (ZoSGA) with the well-documented Stochastic Successive Convex Optimization (SSCO) method, specifically a version of it proposed in \cite{sca:zhao2020tts}, which we shall refer to as Two-Time Scale SSCO (TTS-SSCO). Both ZoSGA and TTS-SSCO employ the WMMSE algorithm to optimize the precoding vectors. To keep the comparison justified, we let WMMSE optimize for 20 iterations per channel instance for both of these methods, and have also included WMMSE with randomized IRS parameters as a reference. All parameters pertaining to the TTS-SSCO are taken from \cite[Section V]{sca:zhao2020tts}, i.e., $T_H = 10$, $\tau = 0.01$, $\rho_t = t^{-0.8}$ and $\gamma_t = t^{-1}$. There are four users, so $k \in \mathbb{N}^+_{4}$, weighted uniformly i.e., $\alpha_k = 1$; the AP has six antennas ($M=6$), and the noise variance is $\sigma_k = -80$dBm for all $k$. The reference path loss is $C_0 = -30$dB, and the total allocated power budget is $P = 5$dBm. The LOS Rician factor is $\beta_{Au} = -5$dB, while $\beta_{Iu} = \beta_{AI} = 5$dB, unless specified otherwise. We let the smoothing parameter $\mu= 10^{-12}$ in Algorithm \ref{Algorithm: ZoSGA}, and choose separate initial step-sizes $\eta^0_{\phi} = 0.4$ and $\eta^0_{A} = 0.01$ for updating phases and amplitudes, respectively, scaled by ${0.9972}^t$ for $t \in \mathbb{N}^+_{10^3}$, keeping them constant for $t > 10^3$.

\par The comparison of ZoSGA and TTS-SSCO in terms of achieved sumrate is shown in Figure \ref{fig:converge1}. The comparison is done with only IRS 1 present, matching the TTS-SSCO setting in \cite{sca:zhao2020tts}, which requires exact channel and network models to work. We observe that ZoSGA, although requiring more iterations to converge, substantially outperforms TTS-SSCO solely on the basis of effective CSI, while having no access to the statistical model of the channel or the spatial configuration of the system. There are two main reasons for this. Firstly, TTS-SSCO evaluates gradients by utilizing internally sampled I-CSI and the corresponding optimal precoding vectors (via WMMSE), with a frequency of ten samples per iteration ($T_H$). This can greatly limit the convergence of TTS-SSCO if the internal channel model is not accurate. Secondly, the surrogate objective employed by TTS-SSCO, which simplifies the problem by decoupling the IRS phase-shift elements, allowing the computation of their optimal values in a closed form, is not equivalent to the original nonconvex sumrate optimization problem. As shown by the convergence curves provided in Figure \ref{fig:converge1}, this introduces additional errors, preventing TTS-SSCO from realizing a competitive QoS gain, compared with ZoSGA.


In Figure \:\ref{fig:rician}, we discern the effect of the Rician factor for spatially uncorrelated channels, i.e., we set $r_d = r_r = r_{r,k} = 0$, $\forall\ k \in \mathbb{N}^+_{4}$. We observe that the relative gain of ZoSGA in the achievable sumrate increases with respect to the Rician factors pertaining to $\bm{\theta}_1$-reflected links, i.e., as we move from I-CSI to S-CSI dominated channels. The gain difference between ZoSGA and TTS-SSCO increases in $\beta$, since in a close-to-deterministic effective channel zeroth-order gradient approximations approach the true gradients, while TTS-SSCO is optimizing an approximate surrogate objective. We also observe that WMMSE with a randomized IRS remains insensitive to changes in the Rician factors; this is expected as the reflected channel is not optimized to gain any performance improvements.

\begin{figure}
\centering
\centerline{\includegraphics[width=3.5in]{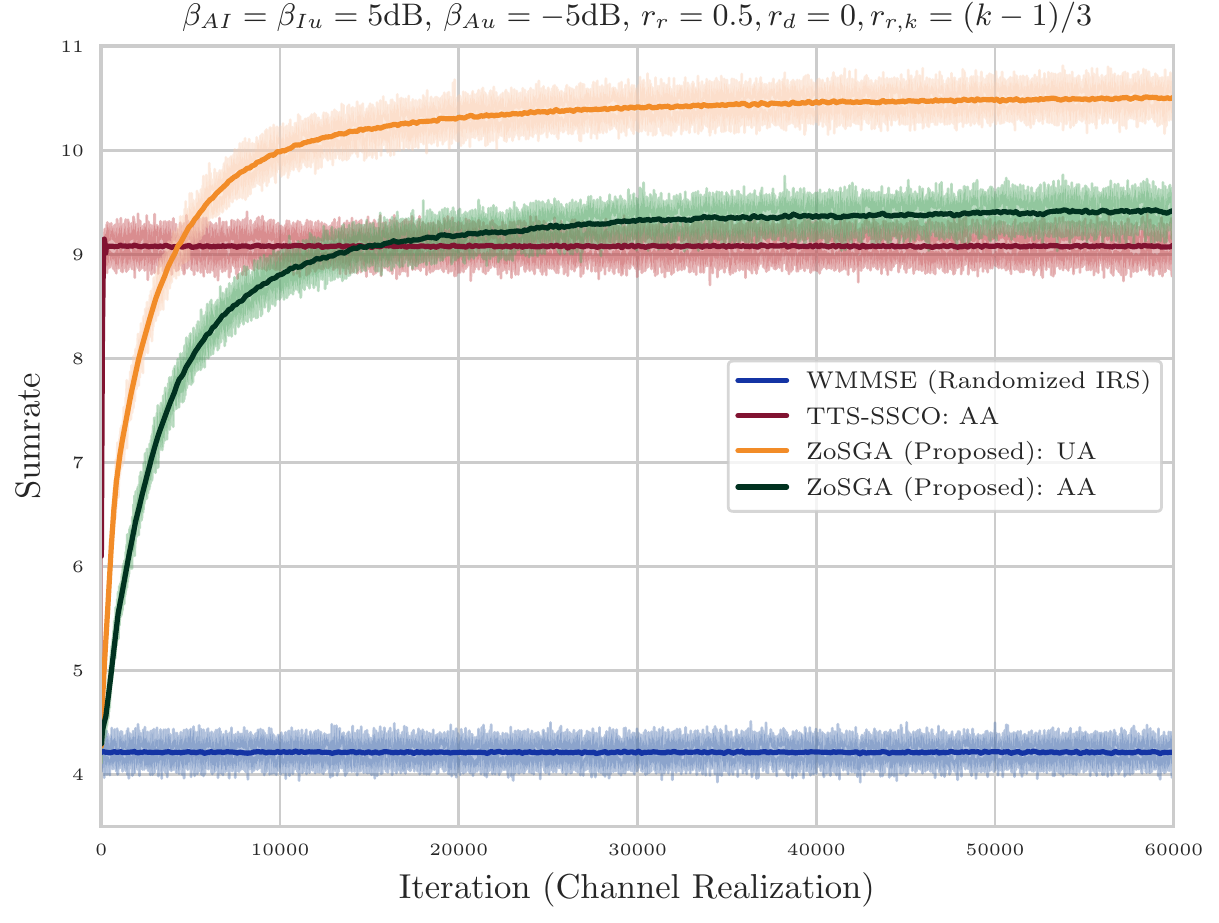}}
\vspace{6bp}
\caption{Average sumrates achieved by WMMSE (random IRS), TTS-SSCO,  and ZoSGA, with only IRS 1 and $N_v=400$.}
\label{fig:converge1_400}
\end{figure}
We can now verify that ZoSGA can also optimize networks with multiple IRSs without any model knowledge. We do so by tuning both IRSs as shown in Figure \ref{fig:env_setup}, while keeping the same hyper-parameters for ZoSGA as well as for WMMSE. Figure \ref{fig:converge2} shows that ZoSGA succeeds in optimizing both IRSs simultaneously, without any information about their spatial configuration. Figure \ref{fig:converge2} demonstrates the improved performance gains when optimizing $\bm{\theta}_1$ versus optimizing the more distant $\bm{\theta}_2$, showing that ZoSGA not only scales well to unknown system/channel models, but is also robust with respect to  $\eta_{\phi}$ and $\eta_{A}$.

\par In practical scenarios, IRSs may have multiple phase-shift elements, frequently on the order of hundreds. To demonstrate that ZoSGA can scale well with the number of IRS parameters $\theta$, we increase $N_v$ to 100 rows of elements, and compare the sumrates achieved by ZoSGA and TTS-SSCO in Figure \ref{fig:converge1_400}. It is evident that ZoSGA is able to scale well and also retains the gain in performance over TTS-SSCO despite the substantial increase in the number of IRS phase-shift elements.

\par Lastly, while in general ZoSGA does take more iterations to converge as compared with model-based methods (here, TTS-SSCO; this is in line with the related literature on model-free stochastic resource allocation: see, e.g., \cite{2019eisen_pfo}),
we may also readily observe that it exhibits a high performance ceiling. Therefore, it can provide very competitive IRS beamformers way before its actual convergence.

\subsection{Networks with Physical IRSs} \label{subsec:Varactor IRSs}
In purely simulated environments, ZoSGA outperforms TTS-SSCO, allowing us to claim it to be a new SOTA for IRS-aided sumrate optimization. Nonetheless, we would also like to evaluate its robustness in a practical setting with \textit{physically modeled} IRSs. To that end, we first define the IRSs using a practically feasible Transmission Line (TL) equivalent of an electromagnetic (EM) model, as presented in \cite{irs_model:costa2021electromagnetic}. This TL model accounts for the geometrical and electrical properties of the IRS elements, also referred to as patches. Specifically, it considers reconfigurability, changes in response due to different angles of wave incidence, mutual coupling among closely spaced cells, and reflection losses.
\par An IRS can be classified as a spatially dispersive device, the reconfigurability of which is achieved by incorporating varactor diodes in its periodic structure.
That is, the beamforming angles (of reflection) are controlled by changing DC voltages, thus tuning the capacitances of these diodes. Though there can be two scenarios when an EM wave impinges on an IRS surface, namely, the traverse magnetic (TM) incidence and the traverse electric (TE) incidence, here we only consider the former for simplicity, mostly focusing on the empirical performance analysis. Moreover, on a side note, we employ the Floquet theorem (as assumed in \cite{irs_model:costa2021electromagnetic}) so as to consider the periodic patches placed in an infinite array, with each element/patch behaving identically. 

\par Given a non-reconfigurable impedance surface, the TL model usually consists of a parallelly connected surface impedance of a reflecting surface, $Z_{surf}$, and inductive impedance of the grounded dielectric slab ($Z_d^{TM}$). Then, the input impedance is $Z_v = Z_{surf} \parallel Z_d$. For an IRS, however, the $Z_{surf}$ is further comprised of a parallel connection between the unloaded surface impedance of the patch array $Z_{patch}$ and the lumped impedance of a varactor diode $Z_{var}$, which is represented as a series of a resistor, inductance and capacitance as 
\begin{equation*}
    Z_{var} = R_{var} + j\omega L_{var} + j \frac{1}{\omega C_{var}},
\end{equation*}
\noindent where the inductance $L_{var}$ depends on the size of the lumped component and must be included in the varactor model to take into account the self-resonance of the component. The resistance $R_{var}$ is included to account for the losses of the varactor. The variable capacitance $C_{var}$ of the diode is used to vary $Z_{var}$. 

\begin{figure}[h]
  \centering
  \centerline{\includegraphics[width=3.6in]{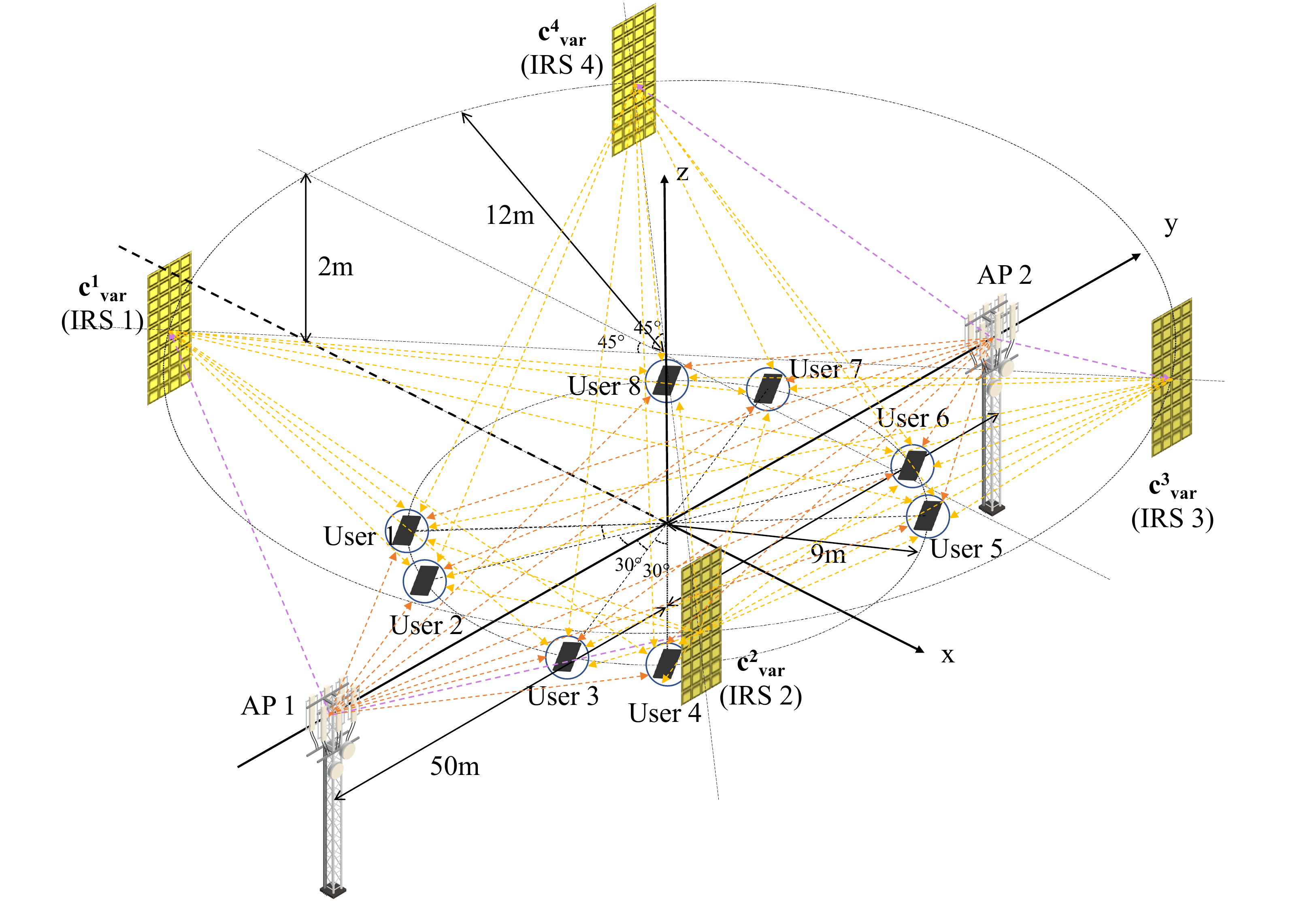}}
\vspace{6bp}
\caption{Second IRS-aided network configuration (physical IRSs).}
\label{fig:env_setup_2}
\end{figure}

\begin{figure*}[h]
  \centering
  \begin{subfigure}[b]{0.328\textwidth}
  \centering
  \includegraphics[width=\textwidth, height=0.75692307692\textwidth]{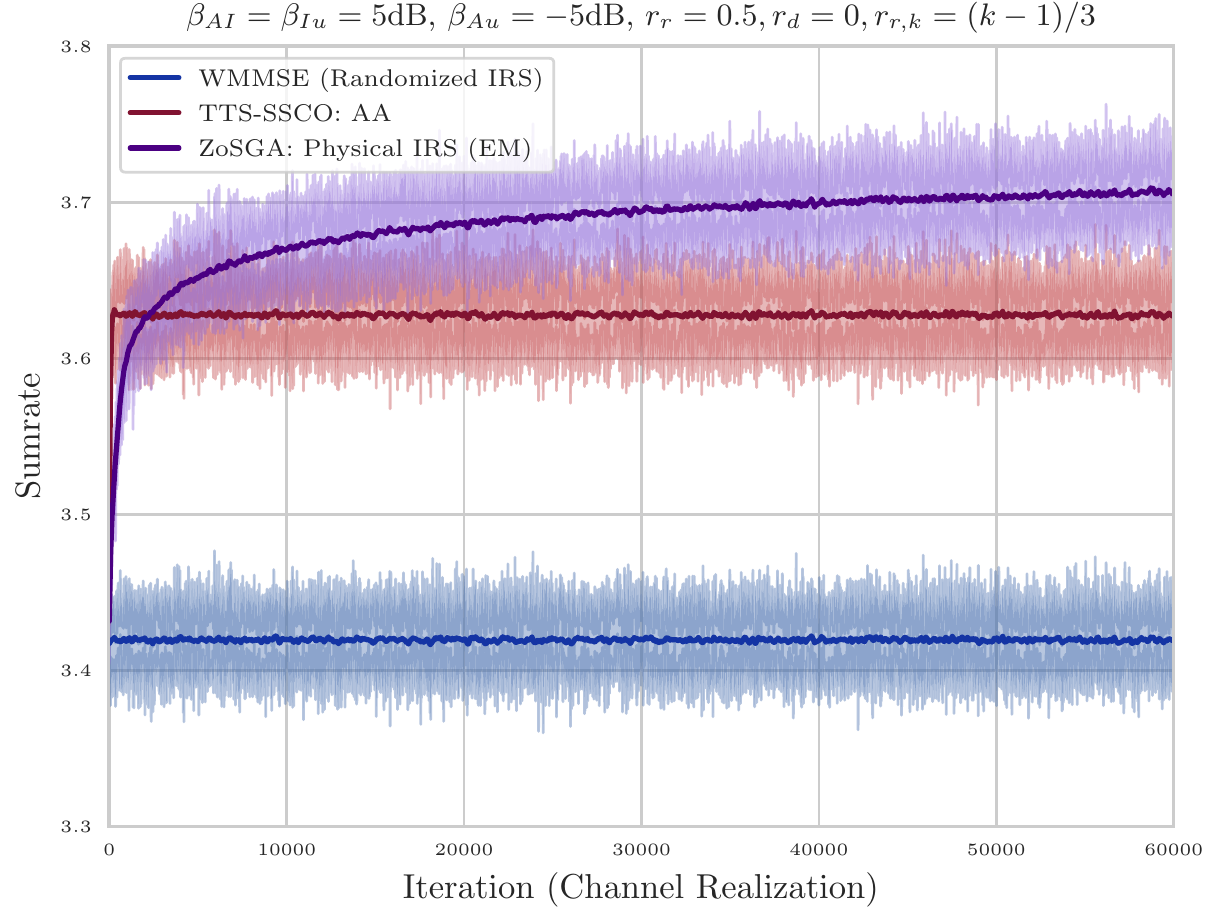}
  \caption{}
  \label{fig:converge1_b}
  \end{subfigure}
  \begin{subfigure}[b]{0.328\textwidth}
  \centering
  \includegraphics[width=\textwidth, height=0.75692307692\textwidth]{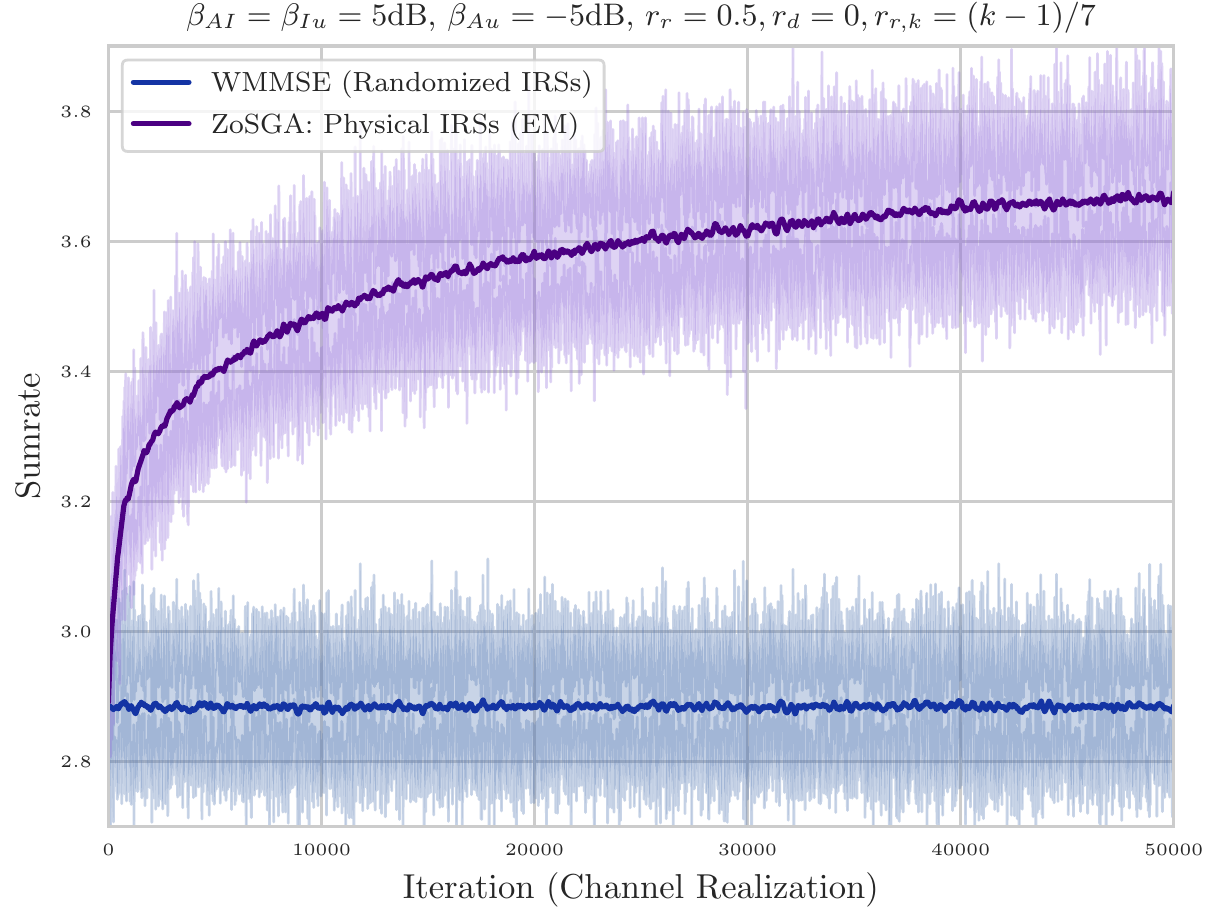}
  \caption{}
  \label{fig:converge3}
  \end{subfigure}
  \begin{subfigure}[b]{0.328\textwidth}
  \centering
  \includegraphics[width=\textwidth, height=0.75692307692\textwidth]{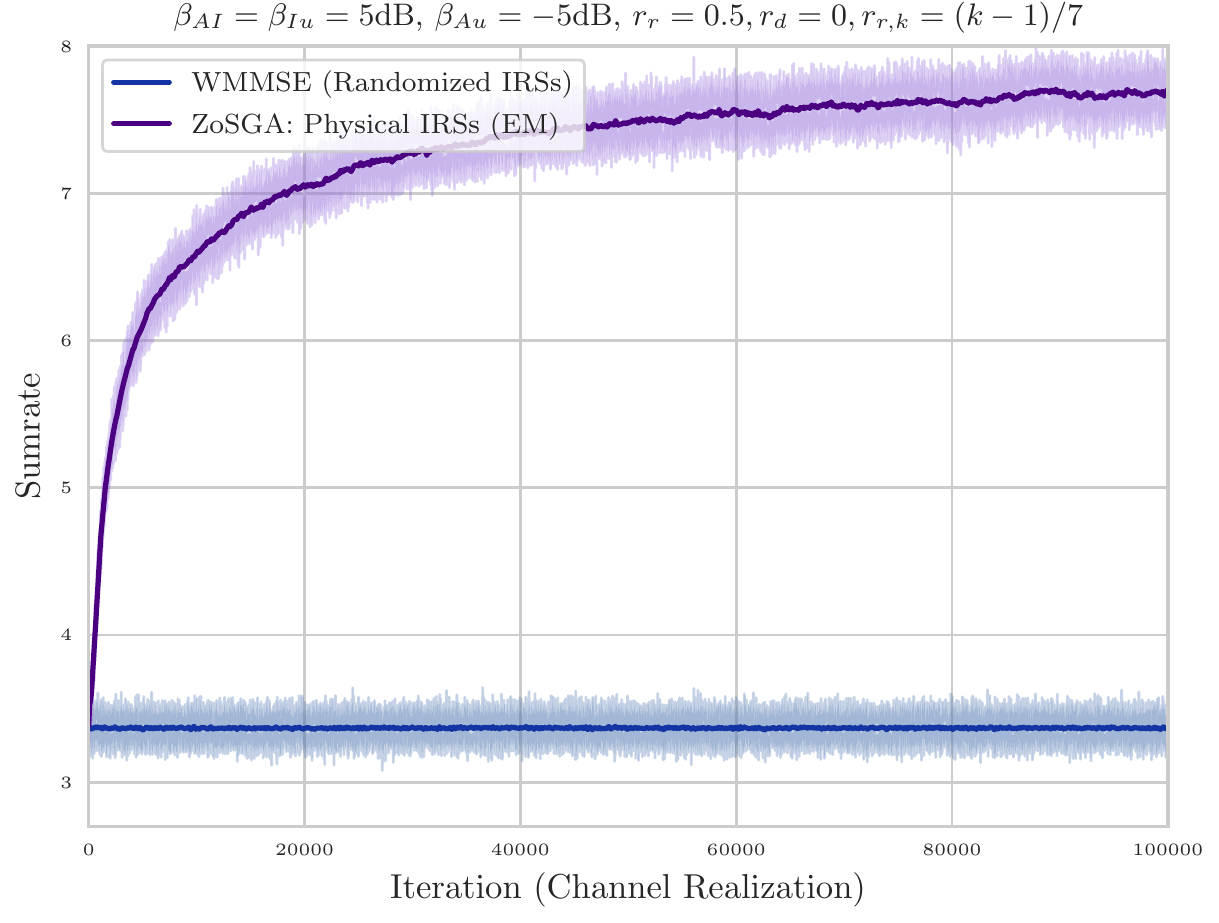}
  \caption{}
  \label{fig:converge3_400}
  \end{subfigure} 
\caption{(a) Average sumrates achieved by WMMSE (random IRS), TTS-SSCO with ideal IRS, and ZoSGA with a physical EM IRS model (network in Figure \ref{fig:env_setup}); Average sumrates achieved by WMMSE (random IRS), and ZoSGA, with four physical (EM model) IRSs with (b) 40 and (c) 400 phase-shift elements, respectively (network in Figure \ref{fig:env_setup_2}).}
\label{fig:main}
\vspace{-10bp}
\end{figure*}

\par The other two impedances, i.e. $Z_d$ and $Z_{patch}$, depend on the properties of the substrate, the dielectric, and the angle of incidence. A detailed description of these is provided in \cite[Section 4]{irs_model:costa2021electromagnetic}. Once all the considered impendances of a patch have been evaluated, its phase-shift coefficient $\theta(C_{var})$ is given by
\begin{equation*}
    \theta(C_{var})= a(C_{var}) e ^{j\phi(C_{var})} = \frac{Z_v(C_{var}) - \zeta_0}{Z_v(C_{var}) + \zeta_0},
\end{equation*}
\noindent where $\zeta_0$ is the free space impedance. Thus, by changing the varactor capacitance $C_{var}$, we can change the varactor impedance $Z_{var}$, which changes the surface impedance $Z_{surf}$, which, in turn, changes the input impedance of an IRS element $Z_v$, finally causing a change in the value of the phase-shift coefficient $\theta(C_{var})$. Due to our assumption that the Floquet theorem holds, this dependency flow is identical for all elements of an IRS. Thus, we may replicate this relation for all IRS elements, say $q$ in number, and define a vector function $\bm{\theta}(\cdot)$ of varactor capacitances $\bm{c}_{var} =  [C^{1}_{var} \, C^2_{var} \, C^{3}_{var} \, \cdots \, C^q_{var}]^\top$ as $\bm{\theta}(\bm{c}_{var}) = [ \theta(C^{1}_{var}) \, \theta(C^2_{var}) \, \theta(C^{3}_{var}) \, \cdots \, \theta(C^q_{var})]^\top$.


\par Now that we have a model for a practical IRS, we would like to evaluate our first wireless network setting, with one IRS, to compare the relative performance drop, if any. As shown in Figure \ref{fig:converge1_b}, the performance of the ZoSGA does drop when constrained in terms of the physical IRS model; this is very much expected, due to a decreased number of degrees of freedom in tuning the IRS parameters. However, the performance gain relative to the random phase-shifts is still substantial. More interestingly, ZoSGA outperforms TTS-SSCO --the latter optimizing both amplitudes and phases in an unconstrained manner-- even in the presence of appreciable practical IRS constraints.

\par To complete our empirical study, which supports our claim of enabling totally model-free optimization of the IRS amplitudes and phase-shifts, we lastly consider an elaborate wireless network setting, as shown in Figure \ref{fig:env_setup_2}, consisting of two APs and four IRSs serving a total of eight users. We consider a MISO downlink scenario where a both APs transmit a common symbol to each user, i.e., the two APs are different only in their position in space. We consider the same channel model and network environment parameters as given in Subsections \ref{subsec:ch_model} and \ref{subsec:Ideal IRS}, respectively. Then, the effective received channel by user $k$ is expressed as 
\begin{equation*} \label{eq: sim_model_2}
\begin{split}
   & \bm{h}_k(\bm{C}_{var},{\omega}) =
    \begin{bmatrix}
        \sum_{i=1}^{2} \underbrace{\bm{G}^\hermtr_{i} \text{Diag}(\bm{\theta}(\bm{c}^i_{var})) \bm{h}^{i}_{r,k}}_{\bm{\theta}_i\text{-non-LoS link}} {+}
        \underbrace{\bm{h}_{d,k,1}}_{\text{AP1 LoS link}} \\
        \sum_{j=3}^{4} \underbrace{\bm{G}^\hermtr_{i} \text{Diag}(\bm{\theta}(\bm{c}^i_{var})) \bm{h}^{i}_{r,k}}_{\bm{\theta}_i\text{-non-LoS link}} {+}
        \underbrace{\bm{h}_{d,k,2}}_{\text{AP2 LoS link}}
    \end{bmatrix},
    \end{split}
\end{equation*}
where, again, $\bm{h}^i_{r,k} = L_{\alpha^i_{Iu}}(d_{Iu,k,i})\breve{\bm{h}}^i_{r,k}$, $\bm{G}^i = L_{\alpha^i_{AI}}(d_{AI,i})\breve{\bm{G}}^i$, and $\bm{h}_{d,k,j} = L_{\alpha^j_{Au}}(d_{Au,k,j})\breve{\bm{h}}_{d,k,j}$ for $i \in \mathbb{N}^+_4$ and $j \in \{1,2\}$. Here, the varactor capacitances of the $i$-th IRS are denoted by the vector $\bm{c}^{i}_{var}$, and where the matrix $\bm{C}_{var} = [ \bm{c}^{1}_{var} \, \bm{c}^{2}_{var} \, \bm{c}^{3}_{var} \, \bm{c}^{4}_{var}]$ combines all the varactor capacitances of the four IRSs. Using the same learning rate scheme as above and a smoothing parameter $\mu=10^{-12}$, we optimize the system sumrate using ZoSGA. 
\par We averaged the results of {40} different simulations in Figure \ref{fig:converge3} to show not only the performance gain due to IRS capacitance tuning, but also the robustness of the approach under different realizations of the channels. The fact that the proposed approach is able to optimize a complicated network such as that in Figure \ref{fig:env_setup_2}, without any model information, verifies our claim of true model-free optimization capability of ZoSGA. We conjecture that the proposed optimization scheme can tackle a wide-range of problems arising in practical applications, with little to no additional input from the user.


\section{Conclusions} \label{sec: conclusions}
In this paper we introduced a zeroth-order stochastic gradient ascent (ZoSGA) method for the solution of two-stage stochastic programs with applications to model-free optimal beamforming for passive IRS-assisted stochastic network utility maximization. ZoSGA is amenable to rigorous convergence analysis and achieves state-of-the-art convergence rate under very general assumptions, capturing a wide range of realistic scenarios. By specializing to the case of sumrate maximization, we numerically demonstrated that ZoSGA outperforms current state-of-the-art model-based methodologies on three distinct network settings, yielding solutions of substantially higher quality and in a computationally efficient manner, while evading practical limitations that are inherent in current methods. Our numerical results confirmed that ZoSGA learns (near-)optimal passive IRS beamformers based solely on conventional effective CSI and in the absence of channel models and spatial network configuration information, also verifying our theoretical findings.

\section*{Appendices}
\addcontentsline{toc}{section}{Appendices}
\renewcommand{\thesubsection}{\Alph{subsection}}

\subsection{Wirtinger Gradient Derivation} \label{apdx: Wirtinger gradient}
\par In light of conditions \textbf{(A1)}--\textbf{(A3)} of Assumption \ref{assumption: two-stage problem}, we can easily show that $F(\bm{W},\bm{H}(\bm{\theta},\omega))$ (i.e., the objective function in \eqref{eqn: second-stage problem}) admits an explicit usual (Fr\'echet) gradient, for all $\bm{\theta} \in \mathcal{U}$ and a.e. $\omega \in \Omega$. We do this by utilizing elements of \emph{Wirtinger calculus} (see \cite[Section 4]{arXiv:Kreutz-Delgado} for a detailed exposition).
\par Indeed, in order to evaluate the gradient of $F\left(\bm{W},\bm{H}(\cdot,\omega)\right)$, i.e., $\nabla_{\bm{\theta}} F\left(\bm{W},\bm{H}(\cdot,\omega)\right) \colon \mathcal{U} \rightarrow \mathbb{R}^S$ , we consider its \emph{Wirtinger cogradient} (a row vector; see \cite[Section 4.2]{arXiv:Kreutz-Delgado}), defined as
\begin{equation*}
    \begin{split}
 & \frac{\partial^{\circ}}{\partial \bm{z}} F\left(\bm{W},\bm{H}(\bm{z},\omega)\right)  \triangleq \frac{1}{2}\bigg( \frac{\partial}{\partial \Re(\bm{z})}F\left(\bm{W},\bm{H}(\bm{z},\omega)\right) \\&\qquad\qquad\qquad\qquad\qquad\quad - j \frac{\partial}{\partial \Im(\bm{z})}F\left(\bm{W},\bm{H}(\bm{z},\omega)\right)\bigg), 
\end{split}
\end{equation*}
\noindent noting that $\bm{H}(\bm{z},\omega)$ is constant relative to $\Im(\bm{z})$, i.e.
\[\bm{H}(\bm{z},\omega) = \bm{H}(\bm{x}+j\bm{y},\omega) = \bm{H}(\bm{x},\omega), \quad \forall (\bm{x},\bm{y}) \in \mathcal{U}\times\mathcal{U}, \]
\noindent and hence so is $F\left(\bm{W},\bm{H}(\bm{z},\omega)\right).$ It then follows that
\[ \frac{\partial^{\circ}}{\partial \bm{z}} F\left(\bm{W},\bm{H}(\bm{z},\omega)\right) = \frac{1}{2} \left( \nabla_{\bm{x}} F\left(\bm{W},\bm{H}(\bm{x},\omega)\right) \right)^\top. \]
\noindent Using the Wirtinger chain rule (see \cite[Eq. (32)]{arXiv:Kreutz-Delgado}), we obtain
\begin{equation*}
    \begin{split}
& \frac{\partial^{\circ}}{\partial \bm{z}} F\left(\bm{W},\bm{H}(\bm{z},\omega)\right) = \frac{\partial^{\circ}}{\partial \bm{z}} F\left(\bm{W},\bm{z}\right) \bigg\vert_{\bm{z} = \bm{H}(\bm{x},\omega)} \frac{\partial^{\circ}}{\partial \bm{z}} \bm{H}(\bm{z},\omega) \\ &\qquad\qquad\qquad\qquad\,\,\,\,\,+ \frac{\partial^{\circ}}{\partial \overline{\bm{z}}} F\left(\bm{W},\bm{z}\right) \bigg\vert_{\bm{z} = \bm{H}(\bm{x},\omega)} \frac{\partial^{\circ}}{\partial \bm{z}} \overline{\bm{H}(\bm{z},\omega)},
\end{split}
\end{equation*}
\noindent where $\overline{\bm{z}}$ denotes the complex conjugate of $\bm{z}$. It follows that
\begin{equation*}
\begin{split}
    &\hspace{-12pt}\frac{\partial^{\circ}}{\partial \bm{z}} \bm{H}(\bm{z},\omega) \\
    & = \frac{1}{2}\left(\frac{\partial}{\partial\Re(\bm{z})} \bm{H}(\bm{z},\omega) - j \frac{\partial}{\partial \Im(\bm{z})} \bm{H}(\bm{z},\omega) \right)\\
  &  = \frac{1}{2} \left( \left( \nabla_{\bm{x}} \Re\left(\bm{H}(\bm{x},\omega) \right)\right)^\top + j\left(\nabla_{\bm{x}} \Im\left(\bm{H}(\bm{x},\omega) \right)\right)^\top\right),
\end{split}
\end{equation*}
\noindent and 
\[ \frac{\partial^{\circ}}{\partial \bm{z}} \overline{\bm{H}(\bm{z},\omega)} = \overline{\left(\frac{\partial^{\circ}}{\partial \bm{z}} \bm{H}(\bm{z},\omega) \right)}.\]
\noindent From the real-valuedness of $F(\bm{W},\bm{H}(\cdot,\omega))$, we have
\begin{equation*}
    \begin{split}
      & \hspace{-12pt} \frac{\partial^{\circ}}{\partial \bm{z}} F\left(\bm{W},\bm{H}(\bm{x},\omega)\right) 
      \\ &= \frac{\partial^{\circ}}{\partial \bm{z}} F\left(\bm{W},\bm{z}\right) \bigg\vert_{\bm{z} = \bm{H}(\bm{x},\omega)} \frac{\partial^{\circ}}{\partial \bm{z}} \bm{H}(\bm{z},\omega)\\ 
      & \quad\quad + \overline{\left(\frac{\partial^{\circ}}{\partial \bm{z}} F\left(\bm{W},\bm{z}\right) \bigg\vert_{\bm{z} = \bm{H}(\bm{x},\omega)} \frac{\partial^{\circ}}{\partial \bm{z}} \bm{H}(\bm{z},\omega)\right)} \\
    & = 2\Re\left(\frac{\partial^{\circ}}{\partial \bm{z}}F(\bm{W},\bm{z})\bigg\vert_{\bm{z} = \bm{H}(\bm{x},\omega)} \frac{\partial^{\circ}}{\partial \bm{z}} \bm{H}(\bm{z},\omega) \right).
    \end{split}
\end{equation*}
\noindent Thus, for any $\bm{\theta} \in \mathcal{U}$ and a.e. $\omega \in \Omega$, we obtain
\begin{equation*} 
    \begin{split}
& \nabla_{\bm{\theta}} F\left(\bm{W},\bm{H}(\bm{\theta},\omega)\right)  \\ &\quad =\ 2 \nabla_{\bm{\theta}} \Re\left(\bm{H}(\bm{\theta},\omega)\right)\Re\left(\frac{\partial^{\circ}}{\partial \bm{z}} F(\bm{W},\bm{z}) \bigg\vert_{\bm{z} = \bm{H}(\bm{\theta},\omega)} \right)^\top \\  & \qquad + 2 \nabla_{\bm{\theta}} \Im\left(\bm{H}(\bm{\theta},\omega)\right)\Re\left(j\frac{\partial^{\circ}}{\partial \bm{z}} F(\bm{W},\bm{z}) \bigg\vert_{\bm{z} = \bm{H}(\bm{\theta},\omega)} \right)^\top. 
\end{split}
\end{equation*}
\vspace{-13pt}
\subsection{Weak Concavity of the First-Stage Objective Function} \label{apdx: Weak concavity}
\par In what follows, we present three typical situations under which condition \textbf{(A4)} of Assumption \ref{assumption: two-stage problem} is satisfied for problem \eqref{eqn: two-stage problem}. To simplify the discussion, we assume that $\mathcal{W} = \{\bm{W} \colon \|\bm{W}\|^2 \leq P\}$, which is the constraint utilized in \eqref{eqn: sumrate two-stage problem}, noting that this is done without loss of generality. 
\par \textbf{Cases 1 and 2 - Strong second-order sufficient optimality:} The first two cases rely on the strong second-order sufficient optimality conditions for the second-stage problem \eqref{eqn: second-stage problem}. In this case, for any $\bm{\theta} \in \Theta$ and a.e. $\omega \in \Omega$, the Lagrangian associated with \eqref{eqn: second-stage problem} reads
\[ L(\bm{W},\lambda; \bm{\theta},\omega) = F(\bm{W},\bm{H}(\bm{\theta},\omega)) + \lambda\left(\|\bm{W}\|^2 - P\right),\]
\noindent where the admissible Lagrange multipliers are nonnegative, i.e. $\lambda \geq 0$. Let any $\bm{W}^*(\bm{\theta},\omega) \in \arg\max_{\bm{W} \in \mathcal{W}} F(\bm{W},\bm{H}(\bm{\theta},\omega))$. Then, the strong second-order sufficient optimality conditions require \emph{strict complementarity slackness} (i.e. $\lambda^* > 0$ if $\|\bm{W}^*(\bm{\theta},\bm{H}(\bm{\theta},\omega))\|^2 = P$, and $\lambda^* = 0$ otherwise, where $\lambda^*$ is an optimal Lagrange multiplier associated to $\bm{W}^*(\bm{\theta},\bm{H}(\bm{\theta},\omega))$), as well as that $\nabla_{\bm{W}}^2 L(\bm{W},\lambda; \bm{\theta},\omega)\vert_{(\bm{W},\lambda) = \left(\bm{W}^*(\bm{\theta},\omega),\lambda^*\right)}$ is nonsingular. We note that in the case of problem \eqref{eqn: sumrate two-stage problem} (i.e. the sumrate example), strict complementarity slackness can be shown without any additional assumptions. 
\par In the first case (\textbf{Case 1}), we assume that the second-stage problem \eqref{eqn: second-stage problem} admits a unique solution for each $\bm{\theta} \in \Theta$ and a.e. $\omega \in \Omega$. It then follows that the mapping $\bm{\theta} \mapsto \max_{\bm{W} \in \mathcal{W}} F(\bm{W},\bm{H}(\bm{\theta},\omega))$ is twice continuously differentiable on ${\Theta}$ by utilizing \cite[Lemma 2.2]{MathOR:Shapiro} (which, in turn, utilizes the Implicit Function Theorem, e.g. see \cite[Theorem 1B.1]{Springer:DonRock}). Twice continuously differentiable functions on a compact set (in this case ${\Theta}$) are, in fact, Lipschitz smooth on that set. However, Lipschitz smooth functions are both weakly convex and weakly concave (see \cite[Proposition 4.12]{Vial_WeakConvexity}), and thus we are done.
\par In the second case (\textbf{Case 2}), instead of assuming that the solution set of the second-stage problem is a singleton, we assume that $\bm{H}(\cdot,\omega)$ is real analytic (which, for example, is true in Section \ref{sec: Simulations}). Then, assuming that $F(\cdot,\cdot)$ is the sumrate (as in \eqref{eqn: sumrate two-stage problem}), it follows that the mapping $\bm{\theta} \mapsto \max_{\bm{W} \in \mathcal{W}} F(\bm{W},\bm{H}(\bm{\theta},\omega))$  is \emph{sub-analytic} on an open bounded set $\mathcal{U}$ (where $\mathcal{U}$ is given in condition \textbf{(A3)} of Assumption \ref{assumption: two-stage problem}; for a proof of this result, see \cite[Example 4]{SIAMOpt:Daniilidis}). In turn, this implies (e.g. see \cite[Theorem 2.3]{denkowska2018upc}) that the function $\max_{\bm{W} \in \mathcal{W}} F(\bm{W},\bm{H}(\bm{\theta},\omega)) - F(\bm{W},\bm{H}(\bm{\theta},\omega))$ satisfies the \L{}ojasiewicz inequality with uniform exponent, i.e. for a.e. $\omega \in \Omega$ and for each $\bm{\theta} \in \mathcal{U}$, there exists a positive constant $\eta$ and a positive subanalytic function $C(\bm{\theta})$, such that for every $(\bm{\theta},\bm{W}) \in \mathcal{U} \times \mathcal{W}$, we have
    \begin{equation*}
    \begin{split}
        &\textnormal{dist}\left(\bm{W}, \arg\max_{\bm{W} \in \mathcal{W}} F\left(\bm{W},\bm{H}(\bm{\theta},\omega)\right)\right) \\ & \quad\leq C(\bm{\theta}) \left(\max_{\bm{W} \in \mathcal{W}} F(\bm{W},\bm{H}(\bm{\theta},\omega)) - F(\bm{W},\bm{H}(\bm{\theta},\omega))\right)^{\eta}. 
    \end{split}
    \end{equation*}
\noindent Note that we can always find a compact set $\Theta''$ and an open set $\Theta'$ such that $\mathcal{U} \supset \Theta'' \supset \Theta' \supset \Theta$. In what follows we make the reasonable assumption that $C(\bm{\theta})$ is uniformly bounded on the compact set $\Theta''$. 
\begin{lemma}  \label{lemma: compact-valuedness of argmax}
Let $\Theta'' \supset \Theta' \supset \Theta$, be a compact set, with $\Theta'$ some open set. Given conditions \textnormal{\textbf{(A1)}}--\textnormal{\textbf{(A3)}} of Assumption \textnormal{\ref{assumption: two-stage problem}}, the multifunction $\bm{\theta} \mapsto \arg\max_{\bm{W} \in \mathcal{W}} F(\bm{W},\bm{H}(\bm{\theta},\omega))$ is nonempty and compact-valued on $\Theta''$, for a.e. $\omega \in \Omega$.
\end{lemma}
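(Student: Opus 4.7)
The plan is to establish both claims pointwise for each $\bm{\theta} \in \Theta''$ and almost every $\omega \in \Omega$, reducing everything to Weierstrass's extreme value theorem and an elementary closed-in-compact argument. The whole proof should be short, since the structural work (compactness, continuity, differentiability) has already been built into Assumption \ref{assumption: two-stage problem}.

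First I would fix any $\bm{\theta} \in \Theta''$ together with any $\omega$ lying in the full-measure set on which condition \textbf{(A3)} applies. Since \textbf{(A3)} gives twice continuous differentiability of $\bm{H}(\cdot,\omega)$ on the open set $\mathcal{U} \supset \Theta$, and $\Theta'' \supset \Theta' \supset \Theta$ with $\Theta'$ open, I would first observe that without loss of generality we may take $\mathcal{U}$ large enough to contain $\Theta''$ (if not, shrink $\Theta''$ inside $\mathcal{U} \cap \Theta'$ and argue on the resulting compact set; alternatively, enlarge $\mathcal{U}$ by the usual open-cover refinement). Consequently, $\bm{H}(\bm{\theta},\omega)$ is well-defined and continuous in $\bm{\theta}$ on the compact set $\Theta''$, and in particular $\bm{H}(\bm{\theta},\omega)$ lies in a bounded subset of $\mathbb{C}^{M_U}$. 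Combined with \textbf{(A1)} (twice continuous real differentiability of $F$), the composite map $\bm{W} \mapsto F(\bm{W},\bm{H}(\bm{\theta},\omega))$ is continuous on the set $\mathcal{W}$, which is compact by \textbf{(A2)}.

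Nonemptiness then follows immediately from Weierstrass's extreme value theorem applied to this continuous real-valued function on the nonempty compact set $\mathcal{W}$. For compactness of the argmax, set $M(\bm{\theta},\omega) \triangleq \max_{\bm{W}\in\mathcal{W}} F(\bm{W},\bm{H}(\bm{\theta},\omega))$ and note that
\[
\arg\max_{\bm{W}\in\mathcal{W}} F(\bm{W},\bm{H}(\bm{\theta},\omega)) = \bigl\{\bm{W} \in \mathcal{W} \colon F(\bm{W},\bm{H}(\bm{\theta},\omega)) = M(\bm{\theta},\omega)\bigr\},
\]
which is the preimage in $\mathcal{W}$ of the closed singleton $\{M(\bm{\theta},\omega)\}$ under the continuous map above. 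It is therefore closed in $\mathcal{W}$, and as a closed subset of the compact set $\mathcal{W}$ it is compact.

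The only bookkeeping detail, which I would not expect to cause difficulty, is ensuring that the argument holds on a single full-measure subset of $\Omega$ uniformly in $\bm{\theta} \in \Theta''$. This is immediate: \textbf{(A3)} furnishes one null set outside of which $\bm{H}(\cdot,\omega)$ has the stated regularity on $\mathcal{U}$, and the remainder of the proof is purely deterministic in $\bm{\theta}$ for each such $\omega$. The potentially delicate point, namely an extension of the regularity of $\bm{H}(\cdot,\omega)$ from $\Theta$ to the slightly larger $\Theta''$, is handled by the openness of $\mathcal{U}$ mentioned above and does not require any additional probabilistic input.
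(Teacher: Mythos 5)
Your proof is correct, and it reaches the stated conclusion by a more elementary route than the paper. The paper's own proof establishes joint continuity and compact range of $F(\cdot,\bm{H}(\cdot,\omega))$ on $\Theta''\times\mathcal{W}$ and then invokes Berge's maximum theorem, which delivers nonemptiness, compact-valuedness, \emph{and} upper hemicontinuity of the argmax correspondence (plus continuity of the value function) in one stroke. You instead fix $\bm{\theta}$ and argue pointwise: Weierstrass on the compact set $\mathcal{W}$ for nonemptiness, and the observation that the argmax is the level set $\{\bm{W}\in\mathcal{W} : F(\bm{W},\bm{H}(\bm{\theta},\omega)) = M(\bm{\theta},\omega)\}$, hence closed in a compact set, for compactness. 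This proves exactly what the lemma asserts with no machinery beyond first-year analysis; what it does not provide is the hemicontinuity/continuity-of-the-value-function byproduct of Berge's theorem, which the paper leans on in the subsequent lemma (continuity of $\max_{\bm{W}} F - F$ is used there, though it can also be obtained directly). Your explicit handling of whether $\Theta''$ sits inside the open set $\mathcal{U}$ from \textbf{(A3)} is a point the paper glosses over (it later stipulates $\mathcal{U}\supset\Theta''$ when constructing these sets), and your treatment of the single full-measure set is likewise fine.
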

\begin{proof}
\par We have that $F(\cdot,\cdot)$ is (real) continuously differentiable, and that for any $\bm{\theta} \in \Theta''$ and a.e. $\omega \in \Omega$, $\arg\max_{\bm{W} \in \mathcal{W}}  F(\bm{W},\bm{H}(\bm{\theta},\omega))$ is non-empty. Additionally, both $\Theta''$ and $\mathcal{W}$ are assumed to be compact, and $\bm{H}(\cdot,\omega)$ is continuously differentiable and thus has compact range. In turn, we obtain that $F(\cdot,H(\cdot,\omega))$ also has compact range on $\Theta''\times \mathcal{W}$ and jointly continuous. We complete the proof by applying Berge's \emph{maximum theorem} (see \cite{PrincUP:EfeOk}).
\end{proof}
\begin{lemma}\label{lemma: local continuity of the selection}
Let conditions \textnormal{\textbf{(A1)}--\textbf{(A3)}} of Assumption \textnormal{\ref{assumption: two-stage problem}} hold, along with the aforementioned conditions of \textbf{Case 2}. Then, for any $\bm{\theta} \in \Theta$ and a.e. $\omega \in \Omega$, and each selection 
\[\bm{W}^*(\bm{\theta},\omega) \in \arg\max_{\bm{W} \in \mathcal{W}} F(\bm{W},\bm{H}(\bm{\theta},\omega)),\]
\noindent there exists a sequence of selections 
\[\left\{\widetilde{\bm{W}}^*(\bm{\theta}+\bm{z}_k,\omega)\right\}_{k=0}^{\infty},\] 
\noindent where $\widetilde{\bm{W}}^*(\bm{\theta}+\bm{z}_k,\omega)\in \arg\max_{\bm{W} \in \mathcal{W}} F(\bm{W},\bm{H}(\bm{\theta}+\bm{z}_k,\omega))$, for some sequence $\bm{\theta} + \{\bm{z}_k\}_{k = 0}^{\infty} \subset \Theta'$, such that
\[ \lim_{\|\bm{z}_k\|\rightarrow 0} \widetilde{\bm{W}}^*(\bm{\theta}+\bm{z}_k,\omega) = \bm{W}^*(\bm{\theta},\omega).\]
\end{lemma}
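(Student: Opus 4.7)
The plan is to exploit the uniform-exponent \L{}ojasiewicz inequality stated right before the lemma together with the continuity of the value function guaranteed by Berge's maximum theorem (established in Lemma \ref{lemma: compact-valuedness of argmax}). Concretely, given a fixed $\bm{\theta} \in \Theta$ and a.e. $\omega \in \Omega$, I would take any sequence $\{\bm{z}_k\}_{k=0}^{\infty} \subset \mathbb{R}^S$ with $\bm{z}_k \to 0$, chosen small enough that $\bm{\theta} + \bm{z}_k \in \Theta'$ for all $k$ (this is possible since $\Theta' \supset \Theta$ is open). Since $\Theta' \subset \Theta'' \subset \mathcal{U}$, conditions \textbf{(A1)}--\textbf{(A3)} apply on each $\bm{\theta}+\bm{z}_k$, and the \L{}ojasiewicz inequality with the uniformly bounded constant $C(\bm{\theta}+\bm{z}_k) \leq \overline{C}$ holds at every such point.

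Next, for each $k$, I would use Lemma \ref{lemma: compact-valuedness of argmax} to invoke compactness of the set $\mathcal{S}_k \triangleq \arg\max_{\bm{W} \in \mathcal{W}} F(\bm{W},\bm{H}(\bm{\theta}+\bm{z}_k,\omega))$ and define the selection $\widetilde{\bm{W}}^*(\bm{\theta}+\bm{z}_k,\omega)$ as a projection of the fixed reference point $\bm{W}^*(\bm{\theta},\omega)$ onto $\mathcal{S}_k$, i.e.\ a minimizer of $\|\cdot - \bm{W}^*(\bm{\theta},\omega)\|$ over $\mathcal{S}_k$. This choice is admissible and yields
\[
\bigl\|\widetilde{\bm{W}}^*(\bm{\theta}+\bm{z}_k,\omega) - \bm{W}^*(\bm{\theta},\omega)\bigr\| = \mathrm{dist}\bigl(\bm{W}^*(\bm{\theta},\omega),\mathcal{S}_k\bigr),
\]
which by the \L{}ojasiewicz inequality applied at $\bm{\theta}+\bm{z}_k$ with test point $\bm{W} = \bm{W}^*(\bm{\theta},\omega)$ is upper bounded by
\[
\overline{C}\left(\max_{\bm{W}\in\mathcal{W}} F\bigl(\bm{W},\bm{H}(\bm{\theta}+\bm{z}_k,\omega)\bigr) - F\bigl(\bm{W}^*(\bm{\theta},\omega),\bm{H}(\bm{\theta}+\bm{z}_k,\omega)\bigr)\right)^{\eta}.
\]

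The last step is to show the right-hand side vanishes as $k \to \infty$. The second term inside the parenthesis tends to $F(\bm{W}^*(\bm{\theta},\omega),\bm{H}(\bm{\theta},\omega))$ by joint continuity of $F$ (from \textbf{(A1)}) and continuity of $\bm{H}(\cdot,\omega)$ (from \textbf{(A3)}). For the first term, Berge's maximum theorem (already invoked in Lemma \ref{lemma: compact-valuedness of argmax}) gives continuity of the value function $\bm{\theta} \mapsto \max_{\bm{W}\in\mathcal{W}} F(\bm{W},\bm{H}(\bm{\theta},\omega))$ on $\Theta''$, so that it converges to $\max_{\bm{W}\in\mathcal{W}} F(\bm{W},\bm{H}(\bm{\theta},\omega)) = F(\bm{W}^*(\bm{\theta},\omega),\bm{H}(\bm{\theta},\omega))$. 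Hence the bracket tends to $0$, and since $\eta > 0$ and $\overline{C} < \infty$, the whole upper bound tends to $0$, yielding $\widetilde{\bm{W}}^*(\bm{\theta}+\bm{z}_k,\omega) \to \bm{W}^*(\bm{\theta},\omega)$.

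The main technical obstacle is the continuity of the value function on $\Theta''$, which is exactly what the application of Berge's theorem in Lemma \ref{lemma: compact-valuedness of argmax} provides; a secondary concern is that the selection $\widetilde{\bm{W}}^*(\bm{\theta}+\bm{z}_k,\omega)$ must itself be measurable/admissible, but this is immediate from the projection construction onto the compact set $\mathcal{S}_k$. No additional Wirtinger or gradient machinery is needed here: the argument is a purely topological consequence of the uniform \L{}ojasiewicz estimate combined with the joint continuity guaranteed by \textbf{(A1)}--\textbf{(A3)}.
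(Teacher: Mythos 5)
Your proposal is correct and follows essentially the same route as the paper's proof: select the distance-achieving point in the (compact, by Lemma \ref{lemma: compact-valuedness of argmax}) solution set at $\bm{\theta}+\bm{z}_k$, bound its distance to $\bm{W}^*(\bm{\theta},\omega)$ via the uniform-exponent \L{}ojasiewicz inequality with test point $\bm{W}^*(\bm{\theta},\omega)$, and let continuity of the value function and of $F(\bm{W}^*(\bm{\theta},\omega),\bm{H}(\cdot,\omega))$ drive the bound to zero. Your write-up is in fact slightly more explicit than the paper's about why the \L{}ojasiewicz bracket vanishes.
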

\begin{proof}
\par We fix some $(\bm{\theta},\omega) \in \Theta \times \Omega$ and a bounded open set $\Theta' \supset \Theta$. From our assumptions, we know that there must exist two positive constants $C,\ \eta$, such that for every $(\bm{W},\bm{\theta}+\bm{z}) \in \mathcal{W}\times \Theta'$
\begin{equation*}
\begin{split}
 &\textnormal{dist}\left(\bm{W},\arg\max_{\bm{W} \in \mathcal{W}} F(\bm{W},\bm{H}(\bm{\theta}+\bm{z},\omega))\right) \\
 &\quad \leq C\left(\max_{\bm{W} \in \mathcal{W}}  F(\bm{W},\bm{H}(\bm{\theta}+\bm{z},\omega))-F(\bm{W},\bm{H}(\bm{\theta}+\bm{z},\omega))\right)^{\eta}.
 \end{split}
 \end{equation*}
\noindent From Lemma \ref{lemma: compact-valuedness of argmax} (in particular, from the closed-valuedness of $\arg\max_{\bm{W} \in \mathcal{W}} F(\bm{W},\bm{H}(\bm{\theta},\omega))$ on the compact set $\Theta'' \supset \Theta'$), for every $\bm{W} \in \mathcal{W}$ and any $\bm{z}$ such that $\bm{\theta}+\bm{z} \in \Theta'$, there exists a selection $\widetilde{\bm{W}}^*(\bm{\theta}+\bm{z},\omega)$ such that
\begin{equation*}
\begin{split}
&\left\|\widetilde{\bm{W}}^*(\bm{\theta}+\bm{z},\omega) -  \bm{W}\right\| \\
&\quad = \textnormal{dist}\left(\bm{W},\arg\max_{\bm{W} \in \mathcal{W}} F(\bm{W},\bm{H}(\bm{\theta}+\bm{z},\omega))\right). 
\end{split}
\end{equation*}
\noindent Continuity of $\max_{\bm{W} \in \mathcal{W}} F(\bm{W},\bm{H}(\cdot,\omega)) - F(\bm{W},\bm{H}(\cdot,\omega))$ then yields the desired result, since we can consider a sequence $\bm{\theta}+\{\bm{z}_k\}_{k=0}^{\infty} \subset \Theta'$ such that $\|\bm{z}_k\|\rightarrow 0$.
\end{proof}
\begin{lemma}\label{lemma: differentiability of sample objective based on Lojasiewicz}
Let conditions \textnormal{\textbf{(A1)}--\textbf{(A3)}} of Assumption \textnormal{\ref{assumption: two-stage problem}} hold, along with the aforementioned conditions of \textbf{Case 2}. For a.e. $\omega \in \Omega$, the function $\max_{\bm{W} \in \mathcal{W}}F(\bm{W},\bm{H}(\cdot,\omega))$ is Fréchet differentiable on $\Theta$, irrespectively of the selection $\bm{W}^*(\cdot,\omega)$ from $\arg\max_{\bm{W} \in \mathcal{W}} F(\bm{W},\bm{H}(\cdot,\omega))$.
\end{lemma}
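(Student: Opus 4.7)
The plan is to invoke a Danskin-style sandwich argument, where the selection-continuity provided by Lemma \ref{lemma: local continuity of the selection} is precisely what upgrades directional differentiability to full Fr\'echet differentiability with a selection-independent gradient. Fix $\bm{\theta} \in \Theta$ and $\omega$ in the full-measure set on which conditions \textbf{(A1)}--\textbf{(A3)} and the hypotheses of \textbf{Case 2} apply, and write $\psi(\bm{\theta};\omega) \triangleq \max_{\bm{W} \in \mathcal{W}} F(\bm{W},\bm{H}(\bm{\theta},\omega))$. Because the feasible set $\mathcal{W}$ is independent of $\bm{\theta}$, every $\bm{W}^* \in \arg\max_{\bm{W} \in \mathcal{W}} F(\bm{W},\bm{H}(\bm{\theta},\omega))$ is admissible at $\bm{\theta}+\bm{h}$ for small $\bm{h}$, and every $\widetilde{\bm{W}}^*(\bm{\theta}+\bm{h}) \in \arg\max_{\bm{W} \in \mathcal{W}} F(\bm{W},\bm{H}(\bm{\theta}+\bm{h},\omega))$ is admissible at $\bm{\theta}$. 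This immediately yields the two-sided sandwich
\begin{equation*}
\begin{split}
F(\bm{W}^*, \bm{H}(\bm{\theta}+\bm{h},\omega)) - F(\bm{W}^*, \bm{H}(\bm{\theta},\omega)) &\leq \psi(\bm{\theta}+\bm{h};\omega) - \psi(\bm{\theta};\omega) \\
&\leq F(\widetilde{\bm{W}}^*, \bm{H}(\bm{\theta}+\bm{h},\omega)) - F(\widetilde{\bm{W}}^*, \bm{H}(\bm{\theta},\omega)).
\end{split}
\end{equation*}

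Next, I would Taylor-expand both envelopes. Conditions \textbf{(A1)} and \textbf{(A3)} ensure that $\bm{\theta} \mapsto F(\bm{W}, \bm{H}(\bm{\theta}, \omega))$ is continuously differentiable on an open neighborhood of $\Theta$ for any fixed $\bm{W}$, with gradient given by Lemma \ref{lemma: gradient of compositional}; moreover, this gradient is jointly continuous, hence uniformly continuous, on the compact product set $\Theta''\times \mathcal{W}$. The fundamental theorem of calculus thus gives, for any admissible $\bm{W}$,
\begin{equation*}
F(\bm{W}, \bm{H}(\bm{\theta}+\bm{h},\omega)) - F(\bm{W}, \bm{H}(\bm{\theta},\omega)) = \int_0^1 \nabla_{\bm{\theta}} F(\bm{W}, \bm{H}(\bm{\theta}+t\bm{h},\omega))^\top \bm{h}\, dt.
\end{equation*}
Applied to the fixed $\bm{W} = \bm{W}^*$, continuity of $\nabla_{\bm{\theta}} F(\bm{W}^*, \bm{H}(\cdot,\omega))$ at $\bm{\theta}$ produces the lower expansion $\nabla_{\bm{\theta}} F(\bm{W}^*, \bm{H}(\bm{\theta},\omega))^\top \bm{h} + o(\|\bm{h}\|)$. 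For the upper envelope I would exploit Lemma \ref{lemma: local continuity of the selection} to choose, along any vanishing sequence $\bm{h}_k$, a selection $\widetilde{\bm{W}}^*(\bm{\theta}+\bm{h}_k,\omega) \to \bm{W}^*$; uniform continuity of $\nabla_{\bm{\theta}} F$ on $\Theta''\times \mathcal{W}$ then allows replacing $\nabla_{\bm{\theta}} F(\widetilde{\bm{W}}^*, \bm{H}(\bm{\theta}+t\bm{h}_k,\omega))$ by $\nabla_{\bm{\theta}} F(\bm{W}^*, \bm{H}(\bm{\theta},\omega))$ with an $o(1)$ perturbation, so that the upper envelope also reduces to $\nabla_{\bm{\theta}} F(\bm{W}^*, \bm{H}(\bm{\theta},\omega))^\top \bm{h}_k + o(\|\bm{h}_k\|)$.

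Sandwiching these two matched expansions then delivers
\begin{equation*}
\psi(\bm{\theta}+\bm{h}_k;\omega) - \psi(\bm{\theta};\omega) = \nabla_{\bm{\theta}} F(\bm{W}^*, \bm{H}(\bm{\theta},\omega))^\top \bm{h}_k + o(\|\bm{h}_k\|)
\end{equation*}
along every null sequence $\{\bm{h}_k\}$, which is exactly Fr\'echet differentiability of $\psi(\cdot;\omega)$ at $\bm{\theta}$ with gradient $\nabla_{\bm{\theta}} F(\bm{W}^*, \bm{H}(\bm{\theta},\omega))$. Since the Fr\'echet derivative is unique wherever it exists, and the entire argument starts from an \emph{arbitrary} optimal selection $\bm{W}^*$, the value $\nabla_{\bm{\theta}} F(\bm{W}^*, \bm{H}(\bm{\theta},\omega))$ must be independent of that selection, closing the claim.

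The technical crux of the argument is control of the upper envelope: the gradient $\nabla_{\bm{\theta}} F$ is probed at $(\widetilde{\bm{W}}^*(\bm{\theta}+\bm{h}_k), \bm{H}(\bm{\theta}+t\bm{h}_k,\omega))$, which moves simultaneously in both the primal and parameter coordinates, so the $o(\|\bm{h}_k\|)$ remainder is not automatic. It is precisely Lemma \ref{lemma: local continuity of the selection} --- and through it the uniform \L{}ojasiewicz inequality inherited from real-analyticity of $\bm{H}(\cdot,\omega)$ and the uniform boundedness of $C(\bm{\theta})$ on $\Theta''$ --- that produces a selection converging to the prescribed $\bm{W}^*$ and thus renders the remainder uniformly negligible. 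Absent this regularity, one would recover only a Clarke-type directional derivative for $\psi$ with a non-singleton generalized gradient that varies with the selection.
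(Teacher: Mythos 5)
Your proposal is correct and follows essentially the same route as the paper's proof: a two-sided Danskin sandwich between $F(\bm{W}^*,\bm{H}(\bm{\theta}+\bm{h},\omega))-F(\bm{W}^*,\bm{H}(\bm{\theta},\omega))$ and the analogous difference at the perturbed-point selection, followed by a first-order expansion of each side and an appeal to Lemma \ref{lemma: local continuity of the selection} together with (uniform) continuity of $\nabla_{\bm{\theta}}F$ to match the two expansions. The only cosmetic difference is that you use the integral (fundamental-theorem-of-calculus) form of the remainder where the paper uses the mean value theorem, which is equivalent.
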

\begin{proof}
   \par Let us fix some $(\bm{\theta},\omega) \in \Theta \times \Omega$ and some selection $\bm{W}^*(\bm{\theta},\omega) \in \arg\max_{\bm{W} \in \mathcal{W}} F(\bm{W},\bm{H}(\bm{\theta},\omega))$. Firstly, we note that the function $\max_{\bm{W} \in \mathcal{W}} F(\bm{W},\bm{H}(\bm{\theta},\omega))$ is defined and bounded on an open set $\Theta' \supset \Theta$. Additionally, we have that $F(\cdot,\bm{H}(\bm{\theta},\omega))$ is real-analytic on $\mathcal{W}$ for any $\bm{\theta} \in \Theta'$, and $F(\bm{W},\bm{H}(\cdot,\omega))$ is continuously differentiable on $\Theta'$. By definition, we obtain that
   \begin{equation*}
   \begin{split}
 & \max_{\bm{W} \in \mathcal{W}} F(\bm{W},\bm{H}(\bm{\theta}+\bm{z},\omega)) - \max_{\bm{W} \in \mathcal{W}}F(\bm{W},\bm{H}(\bm{\theta},\omega))\\ 
  &\quad \leq  F\left(\widetilde{\bm{W}}^*(\bm{\theta}+\bm{z},\omega),\bm{H}(\bm{\theta}+\bm{z},\omega)\right) \\
  &\qquad - F\left(\widetilde{\bm{W}}^*(\bm{\theta}+\bm{z},\omega),\bm{H}(\bm{\theta},\omega)\right),
   \end{split}
   \end{equation*}
\noindent for all $\bm{z}$ such that $\bm{\theta}+\bm{z} \in \Theta'$, where $\widetilde{\bm{W}}^*(\bm{\theta}+\bm{z},\omega)$ can be chosen as in Lemma \ref{lemma: local continuity of the selection}. Furthermore, we observe that
\begin{equation*}
\begin{split}
& \max_{\bm{W} \in \mathcal{W}} F(\bm{W},\bm{H}(\bm{\theta}+\bm{z},\omega))  - \max_{\bm{W} \in \mathcal{W}} F(\bm{W},\bm{H}(\bm{\theta},\omega)) \\
&\quad \geq F(\bm{W}^*(\bm{\theta},\omega),\bm{H}(\bm{\theta}+\bm{z},\omega)) - F(\bm{W}^*(\bm{\theta},\omega),\bm{H}(\bm{\theta},\omega)).
\end{split}
\end{equation*}
\noindent In other words, we have
\begin{equation*}
\begin{split}
\label{eqn: lemma: differentiability of G, double inequality}
&  F(\bm{W}^*(\bm{\theta},\omega),\bm{H}(\bm{\theta}+\bm{z},\omega)) - F(\bm{W}^*(\bm{\theta},\omega),\bm{H}(\bm{\theta},\omega)) \\
& \quad \leq \max_{\bm{W} \in \mathcal{W}} F(\bm{W},\bm{H}(\bm{\theta}+\bm{z},\omega))  - \max_{\bm{W} \in \mathcal{W}} F(\bm{W},\bm{H}(\bm{\theta},\omega)) \\
&\quad \leq  F\left(\widetilde{\bm{W}}^*(\bm{\theta}+\bm{z},\omega),\bm{H}(\bm{\theta}+\bm{z},\omega)\right) \\
  &\qquad - F\left(\widetilde{\bm{W}}^*(\bm{\theta}+\bm{z},\omega),\bm{H}(\bm{\theta},\omega)\right)
\end{split}
\end{equation*}
\par Since $F(\bm{W},\bm{H}(\cdot,\omega))$ is differentiable on $\Theta'$, for any $\bm{W} \in \mathcal{W}$, we can employ the mean value theorem to show that for every $\bm{z}$ satisfying $\bm{\theta}+ \bm{z} \in \Theta'$, there exists $c \equiv c(\bm{\theta}+ \bm{z},\bm{W},\omega) \in (0,1)$ such that
\begin{equation*}
    \begin{split}
       & F\left(\widetilde{\bm{W}}^*(\bm{\theta}+\bm{z},\omega),\bm{H}(\bm{\theta}+\bm{z},\omega)\right)\\
       &\quad - F\left(\widetilde{\bm{W}}^*(\bm{\theta}+\bm{z},\omega),\bm{H}(\bm{\theta},\omega)\right)\\
       &\qquad = \left\langle\nabla_{\bm{\theta}}  F\left(\bm{W},\bm{H}(\bm{\theta}+c\bm{z},\omega)\right),\bm{z}\right\rangle \big\vert_{\bm{W} =\widetilde{\bm{W}}^*(\bm{\theta}+\bm{z},\omega)} .
    \end{split}
\end{equation*}
\noindent Given the previous inequalities, the latter yields
\begin{equation*}
\begin{split}
&F(\bm{W}^*(\bm{\theta},\omega),\bm{H}(\bm{\theta}+\bm{z},\omega)) - F(\bm{W}^*(\bm{\theta},\omega),\bm{H}(\bm{\theta},\omega)) \\
&\qquad -\left\langle\nabla_{\bm{\theta}}  F\left(\bm{W},\bm{H}(\bm{\theta},\omega)\right),\bm{z}\right\rangle\big\vert_{\bm{W} = \bm{W}^*(\bm{\theta},\omega)}\\
&\quad \leq  F\left(\widetilde{\bm{W}}^*(\bm{\theta}+\bm{z},\omega),\bm{H}(\bm{\theta}+\bm{z},\omega)\right) \\
  &\qquad - F\left(\widetilde{\bm{W}}^*(\bm{\theta}+\bm{z},\omega),\bm{H}(\bm{\theta},\omega)\right) \\
&\qquad -\left\langle\nabla_{\bm{\theta}}  F\left(\bm{W},\bm{H}(\bm{\theta},\omega)\right),\bm{z}\right\rangle\big\vert_{\bm{W} = \bm{W}^*(\bm{\theta},\omega)}\\
&\quad \leq \bigg\|\nabla_{\bm{\theta}}  F\left(\bm{W},\bm{H}(\bm{\theta},\omega)\right)\big\vert_{\bm{W} = \bm{W}^*(\bm{\theta},\omega)} \\
&\qquad \quad - \nabla_{\bm{\theta}}  F\left(\bm{W},\bm{H}(\bm{\theta} + c\bm{z},\omega)\right)\big\vert_{\bm{W} =\widetilde{\bm{W}}^*(\bm{\theta}+\bm{z},\omega)}\bigg\| \|\bm{z}\|.\\
\end{split}
\end{equation*}
\noindent Utilizing again the mean value theorem, we also have that there exists $c' \equiv c'(\bm{\theta}+\bm{z},\bm{W},\omega) \in (0,1)$ such that
\begin{equation*}
    \begin{split}
        & F\left({\bm{W}}^*(\bm{\theta},\omega),\bm{H}(\bm{\theta}+\bm{z},\omega)\right)- F\left({\bm{W}}^*(\bm{\theta},\omega),\bm{H}(\bm{\theta},\omega)\right)\\
       &\qquad = \left\langle\nabla_{\bm{\theta}}  F\left(\bm{W},\bm{H}(\bm{\theta}+c'\bm{z},\omega)\right),\bm{z}\right\rangle \big\vert_{\bm{W} ={\bm{W}}^*(\bm{\theta},\omega)} .       
    \end{split}
\end{equation*}
\noindent As before, this implies that 
\begin{equation*}
\begin{split}
&F(\bm{W}^*(\bm{\theta},\omega),\bm{H}(\bm{\theta}+\bm{z},\omega)) - F(\bm{W}^*(\bm{\theta},\omega),\bm{H}(\bm{\theta},\omega)) \\
&\qquad -\left\langle\nabla_{\bm{\theta}}  F\left(\bm{W},\bm{H}(\bm{\theta},\omega)\right),\bm{z}\right\rangle\big\vert_{\bm{W} = \bm{W}^*(\bm{\theta},\omega)}\\
&\quad \geq -\bigg\|\nabla_{\bm{\theta}}  F\left(\bm{W},\bm{H}(\bm{\theta} + c'\bm{z},\omega)\right)\big\vert_{\bm{W} = \bm{W}^*(\bm{\theta},\omega)} \\
&\qquad \quad - \nabla_{\bm{\theta}}  F\left(\bm{W},\bm{H}(\bm{\theta},\omega)\right)\big\vert_{\bm{W} ={\bm{W}}^*(\bm{\theta},\omega)}\bigg\| \|\bm{z}\|.\\
\end{split}
\end{equation*}
\noindent Hence, we have shown that 
\begin{equation*}
\begin{split}
&-\bigg\|\nabla_{\bm{\theta}}  F\left(\bm{W},\bm{H}(\bm{\theta} + c'\bm{z},\omega)\right)\big\vert_{\bm{W} = \bm{W}^*(\bm{\theta},\omega)} \\
&\qquad \quad - \nabla_{\bm{\theta}}  F\left(\bm{W},\bm{H}(\bm{\theta},\omega)\right)\big\vert_{\bm{W} ={\bm{W}}^*(\bm{\theta},\omega)}\bigg\| \\
&\quad \leq \frac{1}{\|\bm{z}\|}\bigg( F(\bm{W}^*(\bm{\theta},\omega),\bm{H}(\bm{\theta}+\bm{z},\omega))\\
&\qquad - F(\bm{W}^*(\bm{\theta},\omega),\bm{H}(\bm{\theta},\omega)) \\
&\qquad -\left\langle\nabla_{\bm{\theta}}  F\left(\bm{W},\bm{H}(\bm{\theta},\omega)\right),\bm{z}\right\rangle\big\vert_{\bm{W} = \bm{W}^*(\bm{\theta},\omega)}\bigg)\\
&\quad \leq \bigg\|\nabla_{\bm{\theta}}  F\left(\bm{W},\bm{H}(\bm{\theta},\omega)\right)\big\vert_{\bm{W} = \bm{W}^*(\bm{\theta},\omega)} \\
&\qquad \quad - \nabla_{\bm{\theta}}  F\left(\bm{W},\bm{H}(\bm{\theta} + c\bm{z},\omega)\right)\big\vert_{\bm{W} =\widetilde{\bm{W}}^*(\bm{\theta}+\bm{z},\omega)}\bigg\|
\end{split}
\end{equation*}
\noindent Thus, by utilizing Lemma \ref{lemma: local continuity of the selection}, we have shown that
\begin{equation*}
    \begin{split}
        & \lim_{\|\bm{z}\| \rightarrow 0} \frac{1}{\|\bm{z}\|}\bigg(F(\bm{W}^*(\bm{\theta},\omega),\bm{H}(\bm{\theta}+\bm{z},\omega)) \\
        &\qquad \quad  - F(\bm{W}^*(\bm{\theta},\omega),\bm{H}(\bm{\theta},\omega))\\
&\qquad \quad -\left\langle\nabla_{\bm{\theta}}  F\left(\bm{W},\bm{H}(\bm{\theta},\omega)\right),\bm{z}\right\rangle\big\vert_{\bm{W} = \bm{W}^*(\bm{\theta},\omega)}\bigg) = 0,
    \end{split}
\end{equation*}
\noindent which implies that $\max_{\bm{W} \in \mathcal{W}}F(\bm{W},\bm{H}(\cdot,\omega))$ is (Fréchet) differentiable at $\bm{\theta}\in \Theta$, for a.e. $\omega \in \Omega$, and its gradient reads as
\[\nabla_{\bm{\theta}} \max_{\bm{W} \in \mathcal{W}}F(\bm{W},\bm{H}(\bm{\theta},\omega)) = \nabla_{\bm{\theta}} F(\bm{W},\bm{H}(\bm{\theta},\omega))\big\vert_{\bm{W} = \bm{W}^*(\bm{\theta},\omega)}\]
\noindent for any selection $\bm{W}^*(\bm{\theta},\omega) \in \arg\max_{\bm{W} \in \mathcal{W}} F(\bm{W},\bm{H}(\bm{\theta},\omega))$. This completes the proof.
\end{proof}
\begin{theorem} \label{thm: weak convexity based on Lojasiewicz}
    Let conditions \textnormal{\textbf{(A1)}--\textbf{(A3)}} of Assumption \textnormal{\ref{assumption: two-stage problem}} hold, along with the aforementioned conditions of \textbf{Case 2}. For a.e. $\omega \in \Omega$, the function $\max_{\bm{W} \in \mathcal{W}}F(\bm{W},\bm{H}(\cdot,\omega))$ is Lipschitz smooth and thus weakly concave.
\end{theorem}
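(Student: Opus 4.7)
My plan is to bootstrap Lemma~\ref{lemma: differentiability of sample objective based on Lojasiewicz}, which already yields Fr\'echet differentiability of $\phi(\bm{\theta},\omega) \triangleq \max_{\bm{W} \in \mathcal{W}} F(\bm{W}, \bm{H}(\bm{\theta},\omega))$ together with the envelope-type formula $\nabla \phi(\bm{\theta},\omega) = \nabla_{\bm{\theta}} F(\bm{W}, \bm{H}(\bm{\theta},\omega))\big|_{\bm{W} = \bm{W}^*(\bm{\theta},\omega)}$ valid for \emph{any} measurable selection from the argmax. The crucial feature to exploit is that this formula is selection-invariant: it suffices to produce, locally around each $\bm{\theta}_0 \in \Theta$, \emph{some} $C^1$ branch of selections, use the resulting $C^1$ regularity of $\nabla \phi(\cdot,\omega)$ on that neighborhood, and then patch to a global Lipschitz constant by compactness of $\Theta$. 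Weak concavity then drops out of the fact that Lipschitz smoothness implies both weak convexity and weak concavity, e.g.\ \cite[Prop.~4.12]{Vial_WeakConvexity}.

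To produce the local $C^1$ branch, I would apply the Implicit Function Theorem to the KKT system of \eqref{eqn: second-stage problem}, which for $\mathcal{W} = \{\bm{W} : \|\bm{W}\|^2 \leq P\}$ has Lagrangian $L(\bm{W},\lambda;\bm{\theta},\omega) = F(\bm{W}, \bm{H}(\bm{\theta},\omega)) + \lambda(\|\bm{W}\|^2 - P)$. The strong second-order sufficient optimality conditions postulated in Case~2 guarantee, at any optimal pair $(\bm{W}^*_0,\lambda^*_0)$ associated with $\bm{\theta}_0$, both (i) strict complementarity, which freezes the active-constraint structure on a neighborhood, and (ii) nonsingularity of $\nabla_{\bm{W}}^2 L$, which translates to a nonsingular Jacobian of the reduced KKT system in $(\bm{W},\lambda)$. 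Since $F$ is $C^2$ by \textbf{(A1)} and $\bm{H}(\cdot,\omega)$ is $C^2$ on an open superset of $\Theta$ by \textbf{(A3)}, the IFT (see \cite[Thm.~1B.1]{Springer:DonRock}) yields an open neighborhood $\mathcal{N}(\bm{\theta}_0)$ and a unique $C^1$ branch $\bm{\theta} \mapsto (\bm{W}^*(\bm{\theta},\omega), \lambda^*(\bm{\theta},\omega))$ through $(\bm{W}^*_0,\lambda^*_0)$. Substituting this local selection into the envelope formula, $\nabla \phi(\cdot,\omega)$ is expressed as a composition of $C^1$ maps, hence is itself $C^1$, and therefore locally Lipschitz on $\mathcal{N}(\bm{\theta}_0)$. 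Covering the compact set $\Theta$ with such neighborhoods, extracting a finite subcover, and taking the maximum of the associated local Lipschitz constants then gives the desired uniform Lipschitz smoothness of $\phi(\cdot,\omega)$.

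The main obstacle I anticipate is a \emph{uniformity} issue: the local Lipschitz constant delivered by the IFT at $\bm{\theta}_0$ depends inversely on the smallest eigenvalue of $\nabla_{\bm{W}}^2 L$ at $(\bm{W}^*_0,\lambda^*_0)$, and this bound could in principle degenerate as $\bm{\theta}_0$ is varied across $\Theta$. This is precisely where the supplementary hypothesis of Case~2 --- uniform boundedness of the \L{}ojasiewicz constant on $\Theta''$ --- enters: combined with strong SOSC, it yields a uniform quadratic-growth modulus, which forces a uniform lower bound on the conditioning of the Lagrangian Hessian at optima, so that the finite subcover delivers a single global Lipschitz constant. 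A minor side point is verifying strict complementarity uniformly on $\Theta$; for \eqref{eqn: sumrate two-stage problem} this is immediate, since any sumrate maximizer exhausts the power budget and therefore has a strictly positive associated Lagrange multiplier.
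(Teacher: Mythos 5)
Your proposal is correct and follows essentially the same route as the paper's proof: Lemma~\ref{lemma: differentiability of sample objective based on Lojasiewicz} for selection-invariant Fr\'echet differentiability, the Implicit Function Theorem applied to the Lagrangian under the strong second-order sufficient conditions to produce a local $C^1$ branch of maximizers, local Lipschitz continuity of the gradient via the envelope formula, compactness of $\Theta$ to globalize, and \cite[Proposition 4.12]{Vial_WeakConvexity} for weak concavity. The uniformity concern you raise is resolved by the standard finite-subcover argument on the compact, convex set $\Theta$ (local Lipschitz smoothness at every point of a compact convex set already implies global Lipschitz smoothness), so no additional appeal to the uniform \L{}ojasiewicz bound is needed at that step.
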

\begin{proof}
\par Under our assumptions, we know from Lemma \ref{lemma: differentiability of sample objective based on Lojasiewicz} that the function $\max_{\bm{W} \in \mathcal{W}} F(\bm{W},\bm{H}(\bm{\theta},\omega))$ is Frech\'et differentiable on $\Theta$ and its gradient reads as
\[ \nabla_{\bm{\theta}} \max_{\bm{W} \in \mathcal{W}} F(\bm{W},\bm{H}(\bm{\theta},\omega)) = F(\bm{W}^*(\bm{\theta},\omega),\bm{H}(\bm{\theta},\omega)),\]
\noindent for any $\bm{W}^*(\bm{\theta},\omega) \in \arg\max_{\bm{W} \in \mathcal{W}} F(\bm{W},\bm{H}(\bm{\theta},\omega))$. At the same time, from the strong second-order sufficient conditions, we can apply the Implicit Function Theorem (as in \cite[Lemma 2.1]{MathOR:Shapiro}; see also \cite[Theorem 1B.1]{Springer:DonRock}) to the Lagrangian of the second-stage problem, which implies that for every $\bm{\theta}_{\circ} \in \Theta$, and any arbitrary selection $\bm{W}^*(\bm{\theta}_{\circ},\omega) \in \arg\max_{\bm{W}\in \mathcal{W}} F(\bm{W},\bm{H}(\bm{\theta}_{\circ},\omega))$, there exists a neighbourhood $\Theta_{\circ} \ni \bm{\theta}_{\circ}$, on which there exists a unique continuously differentiable mapping $\bm{W}_{\circ}^*(\cdot,\omega): \bm{\theta} \mapsto \arg \max_{\bm{W} \in \mathcal{W}} F(\bm{W},\bm{H}(\bm{\theta},\omega))$, such that $\bm{W}_{\circ}^*(\bm{\theta}_{\circ},\omega) = \bm{W}^*(\bm{\theta}_{\circ},\omega)$. Let us fix this neighbourhood $\Theta_{\circ}$, and cosnider two arbitrary points $\bm{\theta}_1, \bm{\theta}_2 \in \widetilde{\Theta}_{\circ} \subset \Theta_{\circ}$ (assuming that $\widetilde{\Theta}_{\circ}$ is a compact set), along with some arbitrary selections $\bm{W}^*_1 \triangleq \bm{W}^*(\bm{\theta}_1,\omega)$, $\bm{W}^*_2 \triangleq \bm{W}^*(\bm{\theta}_2,\omega)$.
\par Using the previous points, and letting $\bm{W}^*_{\circ,i} \triangleq \bm{W}^*_{\circ}(\bm{\theta}_i,\omega)$, for $i = 1,2$, we have
\begin{equation*}
    \begin{split}
       & \left\|\nabla_{\bm{\theta}} F(\bm{W}^*_1,\bm{H}(\bm{\theta}_1,\omega)) - \nabla_{\bm{\theta}} F(\bm{W}^*_2,\bm{H}(\bm{\theta}_2,\omega))\right\| \\
              & \quad = \left\|\nabla_{\bm{\theta}} F(\bm{W}^*_{\circ,1},\bm{H}(\bm{\theta}_1,\omega))  - \nabla_{\bm{\theta}} F(\bm{W}^*_{\circ,2},\bm{H}(\bm{\theta}_2,\omega))\right\| \\
       &\quad = \bigg\|\nabla_{\bm{\theta}} F(\bm{W}^*_{\circ,1},\bm{H}(\bm{\theta}_1,\omega)) - \nabla_{\bm{\theta}} F(\bm{W}^*_{\circ,1},\bm{H}(\bm{\theta}_2,\omega)) \\
       &\qquad  + \nabla_{\bm{\theta}} F(\bm{W}^*_{\circ,1},\bm{H}(\bm{\theta}_2,\omega))- \nabla_{\bm{\theta}} F(\bm{W}^*_{\circ,2},\bm{H}(\bm{\theta}_2,\omega))\bigg\|\\
       &\quad \leq \left\|\nabla_{\bm{\theta}} F(\bm{W}^*_{\circ,1},\bm{H}(\bm{\theta}_1,\omega)) - \nabla_{\bm{\theta}} F(\bm{W}^*_{\circ,1},\bm{H}(\bm{\theta}_2,\omega)) \right\| \\
       &\qquad + \left\| \nabla_{\bm{\theta}} F(\bm{W}^*_{\circ,1},\bm{H}(\bm{\theta}_2,\omega))- \nabla_{\bm{\theta}} F(\bm{W}^*_{\circ,2},\bm{H}(\bm{\theta}_2,\omega))\right\|,
    \end{split}
\end{equation*}
\noindent where we used the fact that $\nabla_{\bm{\theta}} \max_{\bm{W} \in \mathcal{W}} F(\bm{W},\bm{H}(\bm{\theta},\omega))$ has the same value independently of the optimum selection, while $\bm{W}^*_{\circ}(\bm{\theta}_i,\omega) \in \arg \max_{\bm{W} \in \mathcal{W}} F(\bm{W},\bm{H}(\bm{\theta}_i,\omega))$, for $i = 1,2$. However, Lipschitz smnoothness of the function $F(\bm{W},\bm{H}(\cdot,\omega))$ implies that there exists some constant $L_1 > 0$ such that
\begin{equation*} 
\begin{split}
&\left\|\nabla_{\bm{\theta}} F(\bm{W}^*_{\circ,1},\bm{H}(\bm{\theta}_1,\omega)) - \nabla_{\bm{\theta}} F(\bm{W}^*_{\circ,1},\bm{H}(\bm{\theta}_2,\omega)) \right\| \\
&\qquad \leq L_1 \|\bm{\theta}_1 - \bm{\theta}_2\|.
\end{split}
\end{equation*}
\noindent Then, we observe that 
\begin{equation*} 
\begin{split}
&\left\| \nabla_{\bm{\theta}} F(\bm{W}^*_{\circ,1},\bm{H}(\bm{\theta}_2,\omega))- \nabla_{\bm{\theta}} F(\bm{W}^*_{\circ,2},\bm{H}(\bm{\theta}_2,\omega))\right\|\\
&\qquad \leq L_2 \|\bm{W}^*_{\circ}(\bm{\theta}_1,\omega) - \bm{W}^*_{\circ}(\bm{\theta}_2,\omega)\| \leq L_2 L_3(\widetilde{\Theta}_{\circ}) \|\bm{\theta}_1- \bm{\theta}_2\|,
\end{split}
\end{equation*}
\noindent where we used the $L_2$-Lipschitz smoothnes of $F(\cdot, \bm{H}(\bm{\theta},\omega))$, and the fact that $\bm{W}^*_{\circ}(\cdot,\omega)$ is continuously differentiable on the compact set $\widetilde{\Theta}_{\circ}$, and thus $L_3(\widetilde{\Theta}_{\circ})$-Lipschitz continuous. In other words, the function $\max_{\bm{W} \in \mathcal{W}} F(\bm{W},\bm{H}(\bm{W},\omega))$ is locally Lipschitz smooth. However, since $\Theta$ is compact, this is equivalent to saying that it is globally Lipschitz smooth, which implies that it is both weakly convex and weakly concave (see \cite[Proposition 4.12]{Vial_WeakConvexity}). This completes the proof.
\end{proof}
\par \textbf{Case 3 - Lack of Hessian Invertibility:} Finally, we should mention that if the solution set of the second-stage problem \eqref{eqn: second-stage problem} is not a singleton, while it also does not satisfy the strong second-order sufficient optimality conditions utilized in \textbf{Case 2}, one can still show that condition \textbf{(A4)} of Assumption \ref{assumption: two-stage problem} holds, by utilizing the analysis of \cite[Section 4]{MathOR:Shapiro}. Indeed, under some regularity conditions, coupled with an assumption of connectedness of the solution set of the second-stage problem, it follows that the function $\max_{\bm{W} \in \mathcal{W}} F(\bm{W},\bm{\Theta}(\cdot,\omega))$ is twice continuously differentiable on $\Theta$, and thus weakly concave. The details are omitted, but the reader is referred to \cite[Section 4]{MathOR:Shapiro} for a detailed analysis of a simplified case. 

\bibliographystyle{IEEEbib-abbrev}
\bibliography{ZoSGA.bib}

\begin{thebibliography}{10}

\bibitem{zosga_conf}
H. Hashmi, S. Pougkakiotis, and D.~S. Kalogerias,
\newblock ``{M}odel-free learning of optimal beamformers for passive
  {IRS}-assisted sumrate maximization,''
\newblock in {\em ICASSP}, 2023, pp. 1--5.

\bibitem{6g:guo2021enabling}
F. Guo, F.~R. Yu, H. Zhang, X. Li, H. Ji, and V.~C. Leung,
\newblock ``{E}nabling massive {IoT} toward 6{G}: {A} comprehensive survey,''
\newblock {\em IEEE Internet Things J.}, vol. 8, no. 15, pp. 11891--11915,
  2021.

\bibitem{6g:jiang2021road}
W. Jiang, B. Han, M.~A. Habibi, and H.~D. Schotten,
\newblock ``{T}he road towards 6{G}: {A} comprehensive survey,''
\newblock {\em IEEE Open J. Commun. Soc.}, vol. 2, pp. 334--366, 2021.

\bibitem{6g:tataria20216g}
H. Tataria, M. Shafi, A.~F. Molisch, M. Dohler, H. Sj{\"o}land, and F.
  Tufvesson,
\newblock ``6{G} wireless systems: {V}ision, requirements, challenges,
  insights, and opportunities,''
\newblock {\em Proc. IEEE}, vol. 109, no. 7, pp. 1166--1199, 2021.

\bibitem{6g:you2021towards}
X. You, C.-X. Wang, J. Huang, X. Gao, Z. Zhang, M. Wang, Y. Huang, C. Zhang, Y.
  Jiang, J. Wang, et~al.,
\newblock ``{T}owards 6{G} wireless communication networks: {V}ision, enabling
  technologies, and new paradigm shifts,''
\newblock {\em Sci. China Inf. Sci.}, vol. 64, pp. 1--74, 2021.

\bibitem{5g_enable:larsson2014massive}
E.~G. Larsson, O. Edfors, F. Tufvesson, and T.~L. Marzetta,
\newblock ``{M}assive {MIMO} for next generation wireless systems,''
\newblock {\em IEEE Commun. Mag.}, vol. 52, no. 2, pp. 186--195, 2014.

\bibitem{5g_enable:gotsis2016ultradense}
A. Gotsis, S. Stefanatos, and A. Alexiou,
\newblock ``{U}ltra{D}ense networks: {T}he new wireless frontier for enabling
  5{G} access,''
\newblock {\em IEEE Veh. Technol. Mag.}, vol. 11, no. 2, pp. 71--78, 2016.

\bibitem{5g_enable:andrews2014will}
J.~G. Andrews, S. Buzzi, W. Choi, S.~V. Hanly, A. Lozano, A.~C. Soong, and
  J.~C. Zhang,
\newblock ``{W}hat will 5{G} be?,''
\newblock {\em IEEE J. Sel. Areas Commun.}, vol. 32, no. 6, pp. 1065--1082,
  2014.

\bibitem{5g_enable:shafi20175g}
M. Shafi, A.~F. Molisch, P.~J. Smith, T. Haustein, P. Zhu, P. De~Silva, F.
  Tufvesson, A. Benjebbour, and G. Wunder,
\newblock ``5{G}: {A} tutorial overview of standards, trials, challenges,
  deployment, and practice,''
\newblock {\em IEEE J. Sel. Areas Commun.}, vol. 35, no. 6, pp. 1201--1221,
  2017.

\bibitem{5g:mmWave_fundamentals}
M. Elkashlan, T.~Q. Duong, and H.-H. Chen,
\newblock ``{M}illimeter-wave communications for 5{G}: fundamentals: Part i
  [guest editorial],''
\newblock {\em IEEE Commun. Mag.}, vol. 52, no. 9, pp. 52--54, 2014.

\bibitem{5g:wang2018millimeter}
X. Wang, L. Kong, F. Kong, F. Qiu, M. Xia, S. Arnon, and G. Chen,
\newblock ``{M}illimeter wave communication: {A} comprehensive survey,''
\newblock {\em IEEE Commun. Surv. Tutor.}, vol. 20, no. 3, pp. 1616--1653,
  2018.

\bibitem{5g:mmWave_challenges_opportunities}
F. Al-Ogaili and R.~M. Shubair,
\newblock ``{M}illimeter-wave mobile communications for 5{G}: {C}hallenges and
  opportunities,''
\newblock in {\em 2016 {IEEE} {I}nt. {S}ymp. {A}ntennas {P}ropag. ({APSURSI})},
  2016, pp. 1003--1004.

\bibitem{5g:udn_kamel2016ultra}
M. Kamel, W. Hamouda, and A. Youssef,
\newblock ``{U}ltra-dense networks: {A} survey,''
\newblock {\em IEEE Commun. Surv. Tutor.}, vol. 18, no. 4, pp. 2522--2545,
  2016.

\bibitem{5g:udn_valenzuela2018ultra}
J.~F. Valenzuela-Vald{\'e}s, A. Palomares, J.~C. Gonz{\'a}lez-Mac{\'\i}as, A.
  Valenzuela-Vald{\'e}s, P. Padilla, and F. Luna-Valero,
\newblock ``{O}n the ultra-dense small cell deployment for 5{G} networks,''
\newblock in {\em 2018 IEEE 5G World Forum (5GWF)}. IEEE, 2018, pp. 369--372.

\bibitem{irs_design:sievenpiper2003}
D.~F. Sievenpiper, J.~H. Schaffner, H.~J. Song, R.~Y. Loo, and G. Tangonan,
\newblock ``{T}wo-dimensional beam steering using an electrically tunable
  impedance surface,''
\newblock {\em IEEE Trans. Antennas Propag.}, vol. 51, no. 10, pp. 2713--2722,
  2003.

\bibitem{irs_design:hum2005}
S.~V. Hum, M. Okoniewski, and R.~J. Davies,
\newblock ``{R}ealizing an electronically tunable reflectarray using varactor
  diode-tuned elements,''
\newblock {\em IEEE Microw. Wirel. Compon. Lett.}, vol. 15, no. 6, pp.
  422--424, 2005.

\bibitem{kamoda2011pin_diodes}
H. Kamoda, T. Iwasaki, J. Tsumochi, T. Kuki, and O. Hashimoto,
\newblock ``{60}-{GH}z electronically reconfigurable large reflectarray using
  single-bit phase shifters,''
\newblock {\em IEEE Trans. Antennas Propag.}, vol. 59, no. 7, pp. 2524--2531,
  2011.

\bibitem{zhao2013varactor1}
J. Zhao, Q. Cheng, J. Chen, M.~Q. Qi, W.~X. Jiang, and T.~J. Cui,
\newblock ``{A} tunable metamaterial absorber using varactor diodes,''
\newblock {\em New J. Phys.}, vol. 15, no. 4, 2013.

\bibitem{araghi2022varactor2}
A. Araghi, M. Khalily, M. Safaei, A. Bagheri, V. Singh, F. Wang, and R.
  Tafazolli,
\newblock ``{R}econfigurable intelligent surface ({RIS}) in the sub-6 {GH}z
  band: {D}esign, implementation, and real-world demonstration,''
\newblock {\em IEEE Access}, vol. 10, pp. 2646--2655, 2022.

\bibitem{rana2023ris}
B. Rana, S.-S. Cho, and I.-P. Hong,
\newblock ``Review paper on hardware of reconfigurable intelligent surfaces,''
\newblock {\em IEEE Access}, vol. 11, pp. 29614--29634, 2023.

\bibitem{chepuri2023isac}
S.~P. Chepuri, N. Shlezinger, F. Liu, G.~C. Alexandropoulos, S. Buzzi, and
  Y.~C. Eldar,
\newblock ``Integrated sensing and communications with reconfigurable
  intelligent surfaces: {F}rom signal modeling to processing,''
\newblock {\em IEEE Signal Process. Mag.}, vol. 40, no. 6, pp. 41--62, 2023.

\bibitem{sca:guo2020larsson}
H. Guo, Y.-C. Liang, J. Chen, and E.~G. Larsson,
\newblock ``{W}eighted sum-rate maximization for reconfigurable intelligent
  surface aided wireless networks,''
\newblock {\em IEEE Trans. Wirel. Commun.}, vol. 19, no. 5, pp. 3064--3076,
  2020.

\bibitem{sca:zhao2020tts}
M.-M. Zhao, Q. Wu, M.-J. Zhao, and R. Zhang,
\newblock ``{I}ntelligent reflecting surface enhanced wireless networks:
  Two-timescale beamforming optimization,''
\newblock {\em IEEE Trans. Wirel. Commun.}, vol. 20, no. 1, pp. 2--17, 2020.

\bibitem{sca:zhao2021qos}
M.-M. Zhao, A. Liu, Y. Wan, and R. Zhang,
\newblock ``{T}wo-timescale beamforming optimization for intelligent reflecting
  surface aided multiuser communication with {QoS} constraints,''
\newblock {\em IEEE Trans. Wirel. Commun.}, vol. 20, no. 9, pp. 6179--6194,
  2021.

\bibitem{sca:yang2021sca}
Z. Yang, M. Chen, W. Saad, W. Xu, M. Shikh-Bahaei, H.~V. Poor, and S. Cui,
\newblock ``{E}nergy-efficient wireless communications with distributed
  reconfigurable intelligent surfaces,''
\newblock {\em IEEE Trans. Wirel. Commun.}, vol. 21, no. 1, pp. 665--679, 2021.

\bibitem{ch_est:zheng2022survey}
B. Zheng, C. You, W. Mei, and R. Zhang,
\newblock ``{A} survey on channel estimation and practical passive beamforming
  design for intelligent reflecting surface aided wireless communications,''
\newblock {\em IEEE Commun. Surv. Tutor.}, vol. 24, no. 2, pp. 1035--1071,
  2022.

\bibitem{issues:munochiveyi2021}
M. Munochiveyi, A.~C. Pogaku, D.-T. Do, A.-T. Le, M. Voznak, and N.~D. Nguyen,
\newblock ``Reconfigurable intelligent surface aided multi-user communications:
  State-of-the-art techniques and open issues,''
\newblock {\em IEEE Access}, vol. 9, pp. 118584--118605, 2021.

\bibitem{issues:liu2020matrix}
H. Liu, X. Yuan, and Y.-J.~A. Zhang,
\newblock ``Matrix-calibration-based cascaded channel estimation for
  reconfigurable intelligent surface assisted multiuser mimo,''
\newblock {\em IEEE J. Sel. Areas Commun.}, vol. 38, no. 11, pp. 2621--2636,
  2020.

\bibitem{issues:guo2022cascadest}
H. Guo and V.~K.~N. Lau,
\newblock ``Uplink cascaded channel estimation for intelligent reflecting
  surface assisted multiuser miso systems,''
\newblock {\em IEEE Trans. Signal Process.}, vol. 70, pp. 3964--3977, 2022.

\bibitem{icsi:wu2019intelligent}
Q. Wu and R. Zhang,
\newblock ``Intelligent reflecting surface enhanced wireless network via joint
  active and passive beamforming,''
\newblock {\em IEEE Trans. Wirel. Commun.}, vol. 18, no. 11, pp. 5394--5409,
  2019.

\bibitem{icsi:wu2019towards}
Q. Wu and R. Zhang,
\newblock ``Towards smart and reconfigurable environment: Intelligent
  reflecting surface aided wireless network,''
\newblock {\em IEEE Commun. Mag.}, vol. 58, no. 1, pp. 106--112, 2019.

\bibitem{issues:faisal2022}
K. Faisal and W. Choi,
\newblock ``Machine learning approaches for reconfigurable intelligent
  surfaces: a survey,''
\newblock {\em IEEE Access}, vol. 10, pp. 27343--27367, 2022.

\bibitem{sca:hong2015decomposition}
M. Hong, Q. Li, and Y.-F. Liu,
\newblock ``{D}ecomposition by successive convex approximation: {A} unifying
  approach for linear transceiver design in heterogeneous networks,''
\newblock {\em IEEE Trans. Wirel. Commun.}, vol. 15, no. 2, pp. 1377--1392,
  2015.

\bibitem{sca:scutari2013decomposition}
G. Scutari, F. Facchinei, P. Song, D.~P. Palomar, and J.-S. Pang,
\newblock ``{D}ecomposition by partial linearization: {P}arallel optimization
  of multi-agent systems,''
\newblock {\em IEEE Trans. Signal Process.}, vol. 62, no. 3, pp. 641--656,
  2013.

\bibitem{sca:yang2016parallel}
Y. Yang, G. Scutari, D.~P. Palomar, and M. Pesavento,
\newblock ``{A} parallel decomposition method for nonconvex stochastic
  multi-agent optimization problems,''
\newblock {\em IEEE Trans. Signal Process.}, vol. 64, no. 11, pp. 2949--2964,
  2016.

\bibitem{ch_est:jian2020modified}
M. Jian and Y. Zhao,
\newblock ``{A} modified off-grid {SBL} channel estimation and transmission
  strategy for {RIS}-assisted wireless communication systems,''
\newblock in {\em IWCMC}, 2020, pp. 1848--1853.

\bibitem{ch_est:chen2021low}
X. Chen, J. Shi, Z. Yang, and L. Wu,
\newblock ``{L}ow-complexity channel estimation for intelligent reflecting
  surface-enhanced massive {MIMO},''
\newblock {\em IEEE Wirel. Commun. Lett.}, vol. 10, no. 5, pp. 996--1000, 2021.

\bibitem{ch_est:lin2021tensor}
Y. Lin, S. Jin, M. Matthaiou, and X. You,
\newblock ``{T}ensor-based algebraic channel estimation for hybrid
  {IRS}-assisted {MIMO-OFDM},''
\newblock {\em IEEE Trans. Wirel. Commun.}, vol. 20, no. 6, pp. 3770--3784,
  2021.

\bibitem{ch_est:swindlehurst2022channel}
A.~L. Swindlehurst, G. Zhou, R. Liu, C. Pan, and M. Li,
\newblock ``{C}hannel estimation with reconfigurable intelligent surfaces—{A}
  general framework,''
\newblock {\em Proc. IEEE}, vol. 110, no. 9, pp. 1312--1338, 2022.

\bibitem{offline:taha2019}
A. Taha, M. Alrabeiah, and A. Alkhateeb,
\newblock ``{D}eep learning for large intelligent surfaces in millimeter wave
  and massive {MIMO} systems,''
\newblock in {\em GLOBECOM}, 2019, pp. 1--6.

\bibitem{offline:balevi2020}
E. Balevi, A. Doshi, A. Jalal, A. Dimakis, and J.~G. Andrews,
\newblock ``{H}igh dimensional channel estimation using deep generative
  networks,''
\newblock {\em IEEE J. Sel. Areas Commun.}, vol. 39, no. 1, pp. 18--30, 2020.

\bibitem{offline:yang2021}
B. Yang, X. Cao, C. Huang, C. Yuen, L. Qian, and M. Di~Renzo,
\newblock ``{I}ntelligent spectrum learning for wireless networks with
  reconfigurable intelligent surfaces,''
\newblock {\em IEEE Trans. Veh. Technol.}, vol. 70, no. 4, pp. 3920--3925,
  2021.

\bibitem{offline:zhang2021}
S. Zhang, S. Zhang, F. Gao, J. Ma, and O.~A. Dobre,
\newblock ``{D}eep learning optimized sparse antenna activation for
  reconfigurable intelligent surface assisted communication,''
\newblock {\em IEEE Trans. Commun.}, vol. 69, no. 10, pp. 6691--6705, 2021.

\bibitem{dqn:mismar2019}
F.~B. Mismar, B.~L. Evans, and A. Alkhateeb,
\newblock ``{D}eep reinforcement learning for 5g networks: Joint beamforming,
  power control, and interference coordination,''
\newblock {\em IEEE Trans. Commun.}, vol. 68, no. 3, pp. 1581--1592, 2019.

\bibitem{dqn:taha2020}
A. Taha, Y. Zhang, F.~B. Mismar, and A. Alkhateeb,
\newblock ``{D}eep reinforcement learning for intelligent reflecting surfaces:
  Towards standalone operation,''
\newblock in {\em SPAWC}, 2020, pp. 1--5.

\bibitem{dqn:lee2020}
G. Lee, M. Jung, A.~T.~Z. Kasgari, W. Saad, and M. Bennis,
\newblock ``{D}eep reinforcement learning for energy-efficient networking with
  reconfigurable intelligent surfaces,''
\newblock in {\em ICC}, 2020, pp. 1--6.

\bibitem{ddpg:huang2020}
C. Huang, R. Mo, and C. Yuen,
\newblock ``{R}econfigurable intelligent surface assisted multiuser {MISO}
  systems exploiting deep reinforcement learning,''
\newblock {\em IEEE J. Sel. Areas Commun.}, vol. 38, no. 8, pp. 1839--1850,
  2020.

\bibitem{ddpg:yang2020}
Z. Yang, Y. Liu, Y. Chen, and J.~T. Zhou,
\newblock ``{D}eep reinforcement learning for {RIS}-aided non-orthogonal
  multiple access downlink networks,''
\newblock in {\em GLOBECOM}. IEEE, 2020, pp. 1--6.

\bibitem{ddpg:xie2021}
X. Xie, S. Jiao, and Z. Ding,
\newblock ``{A} reinforcement learning approach for an {IRS}-assisted noma
  network,''
\newblock {\em arXiv:2106.09611}, 2021.

\bibitem{ddpg:evmorfos2022}
S. Evmorfos, A.~P. Petropulu, and H.~V. Poor,
\newblock ``{D}eep reinforcement learning for {IRS} phase shift design in
  spatiotemporally correlated environments,''
\newblock {\em arXiv:2211.09726}, 2022.

\bibitem{ddpg:lillicrap2015}
T.~P. Lillicrap, J.~J. Hunt, A. Pritzel, N. Heess, T. Erez, Y. Tassa, D.
  Silver, and D. Wierstra,
\newblock ``{C}ontinuous control with deep reinforcement learning,''
\newblock {\em arXiv:1509.02971}, 2015.

\bibitem{unfold:liu2021}
Y. Liu, Q. Hu, Y. Cai, G. Yu, and G.~Y. Li,
\newblock ``{D}eep-unfolding beamforming for intelligent reflecting surface
  assisted full-duplex systems,''
\newblock {\em IEEE Trans. Wirel. Commun.}, 2021.

\bibitem{wmmseShi2011}
Q. Shi, M. Razaviyayn, Z.~Q. Luo, and C. He,
\newblock ``{A}n iteratively weighted {MMSE} approach to distributed
  sum-utility maximization for a {MIMO} interfering broadcast channel,''
\newblock {\em IEEE Trans. Signal Process.}, vol. 59, pp. 4331--4340, 2011.

\bibitem{Dempe2009}
S. Dempe,
\newblock {\em {B}ilevel {P}rogramming: {I}mplicit {F}unction {A}pproach}, pp.
  260--266,
\newblock Springer US, Boston, MA, 2009.

\bibitem{Prop_fairness}
Z.-Q. Luo and S. Zhang,
\newblock ``{D}ynamic spectrum management: {C}omplexity and duality,''
\newblock {\em IEEE J. Sel. Top. Signal Process.}, vol. 2, no. 1, pp. 57--73,
  2008.

\bibitem{arXiv:Kreutz-Delgado}
K. Kreutz-Delgado,
\newblock ``{T}he complex gradient operator and the
  $\mathbb{C}\mathbb{R}$-calculus,''
\newblock {\em arXiv:0906.4835}, 2009.

\bibitem{SIAMOpt:Davis}
D. Davis and D. Drusvyatskiy,
\newblock ``{S}tocahstic model-based minimization of weakly convex functions,''
\newblock {\em SIAM J. Optim.}, vol. 29, no. 1, pp. 207--239, 2019.

\bibitem{RockafellarLowerC2}
R.~T. Rockafellar,
\newblock {\em {F}avorable classes of {L}ipschitz-continuous functions in
  subgradient optimization}, pp. 125--143,
\newblock Int. Inst. Appl. Sys. Anal.,, 1982.

\bibitem{Vial_WeakConvexity}
J.-P. Vial,
\newblock ``{S}trong and weak convexity of sets and functions,''
\newblock {\em Math. Oper. Res.}, vol. 8, no. 2, pp. 231--259, 1983.

\bibitem{DanskinMinMax}
J.~M. Danskin,
\newblock ``{T}he theory of max-min, with applications,''
\newblock {\em SIAM J. Appl. Math.}, vol. 14, no. 4, pp. 641--664, 1966.

\bibitem{SIAM:Shapiro_etal}
A. Shapiro, D. Dentcheva, and A. Ruszczy\'nski,
\newblock {\em {L}ectures on {S}tochastic {P}rogramming: {M}odeling and
  {T}heory},
\newblock MOS-SIAM Ser. Optim. SIAM \& Mathematical Optimization Society,
  Philadelphia, 2014.

\bibitem{SIAMOPT:KalogeriasPowellZerothOrder}
D.~S. Kalogerias and W.~B. Powell,
\newblock ``{Z}eroth-order stochastic compositional algorithms for risk-aware
  learning,''
\newblock {\em SIAM J. Optim.}, vol. 32, no. 2, pp. 386--416, 2022.

\bibitem{CompMath:Nesterov_etal}
Y. Nesterov and V. Spokoiny,
\newblock ``{R}andom gradient-free minimization of convex functions,''
\newblock {\em Found. Comput. Math.}, vol. 17, pp. 527--566, 2017.

\bibitem{arXiv:Pougk-Kal}
S. Pougkakiotis and D.~S. Kalogerias,
\newblock ``{A} zeroth-order proximal stochastic gradient method for weakly
  convex stochastic optimization,''
\newblock {\em SIAM J. Sci. Comput.}, vol. 45, no. 5, pp. A2679--A2702, 2023.

\bibitem{2019eisen_pfo}
M. Eisen, C. Zhang, L.~F.~O. Chamon, D.~D. Lee, and A. Ribeiro,
\newblock ``{L}earning optimal resource allocations in wireless systems,''
\newblock {\em IEEE Trans. Signal Process.}, vol. 67, no. 10, pp. 2775--2790,
  2019.

\bibitem{arXiv:Kumar_etal}
H. Kumar, D.~S. Kalogerias, G.~J. Pappas, and A. Ribeiro,
\newblock ``{Z}eroth-order deterministic policy gradient,''
\newblock {\em arXiv:2006.07314}, 2020.

\bibitem{Springer:DonRock}
A.~L. Dontchev and R.~T. Rockafellar,
\newblock {\em {I}mplicit {F}unctions and {S}olution {M}appings},
\newblock Springer Monogr. in Math. Springer New York, NY, 2009.

\bibitem{MathOR:Shapiro}
A. Shapiro,
\newblock ``{S}econd-order derivatives of extremal-value functions and
  optimality conditions for semi-infinite programs,''
\newblock {\em Math. Oper. Res.}, vol. 10, no. 2, pp. 207--219, 1985.

\bibitem{LojasiewiczInequality}
S. \L{}ojasiewicz,
\newblock ``{S}ur la g\'eom\'etrie semi- et sous- analytique,''
\newblock {\em Ann. Inst. Fourier}, pp. 1575--1595, 1993.

\bibitem{fig:savitzky1964smoothing}
A. Savitzky and M.~J. Golay,
\newblock ``Smoothing and differentiation of data by simplified least squares
  procedures.,''
\newblock {\em Anal. Chem.}, vol. 36, no. 8, pp. 1627--1639, 1964.

\bibitem{ch_model:mckay2005general}
M.~R. McKay and I.~B. Collings,
\newblock ``{G}eneral capacity bounds for spatially correlated rician {MIMO}
  channels,''
\newblock {\em IEEE Trans. Inf. Theory}, vol. 51, no. 9, pp. 3121--3145, 2005.

\bibitem{IEEECommLet:Loyka}
S. Loyka,
\newblock ``{C}hannel capacity of {MIMO} architecture using the exponential
  correlation matrix,''
\newblock {\em IEEE Commun. Lett.}, vol. 5, no. 9, pp. 369--371, 2001.

\bibitem{IEEECommLet:ChoiLove}
J. Choi and D.~J. Love,
\newblock ``{B}ounds on eigenvalues of a spatial correlation matrix,''
\newblock {\em IEEE Commun. Lett.}, vol. 18, no. 8, pp. 1391--1394, 2014.

\bibitem{ch_model:wu2019intelligent}
Q. Wu and R. Zhang,
\newblock ``{I}ntelligent reflecting surface enhanced wireless network via
  joint active and passive beamforming,''
\newblock {\em IEEE Trans. Wirel. Commun.}, vol. 18, no. 11, pp. 5394--5409,
  2019.

\bibitem{irs_model:costa2021electromagnetic}
F. Costa and M. Borgese,
\newblock ``{E}lectromagnetic model of reflective intelligent surfaces,''
\newblock {\em IEEE Open J. Commun. Soc.}, vol. 2, pp. 1577--1589, 2021.

\bibitem{ch_est:yin2013coordinated}
H. Yin, D. Gesbert, M. Filippou, and Y. Liu,
\newblock ``{A} coordinated approach to channel estimation in large-scale
  multiple-antenna systems,''
\newblock {\em IEEE J. Sel. Areas Commun.}, vol. 31, no. 2, pp. 264--273, 2013.

\bibitem{SIAMOpt:Daniilidis}
J. Bolte, A. Daniilidis, and A. Lewis,
\newblock ``{T}he {\l{}}ojasiewicz inequality for nonsmooth subanalytic
  functions with applications to subgradient dynamical systems,''
\newblock {\em SIAM J. Optim.}, vol. 17, no. 4, pp. 1205--1223, 2007.

\bibitem{denkowska2018upc}
A. Denkowska and M.~P. Denkowski,
\newblock ``{UPC} condition with parameter for subanalytic sets,''
\newblock {\em arXiv:1804.10567}, 2018.

\bibitem{PrincUP:EfeOk}
E.~A. Ok,
\newblock {\em Continuity II}, pp. 283--354,
\newblock Princeton University Press, 2007.

\end{thebibliography}
\end{document}